\theoremstyle{plain}
\newtheorem{theorem}{Theorem}
\newtheorem{proposition}{Proposition}
\newtheorem{lemma}{Lemma}
\newtheorem{assumption}{Assumption}
\newtheorem{corollary}{Corollary}
\theoremstyle{definition}
\newenvironment{example}
{\pushQED{\qed}\examplex}
{\popQED\endexamplex}
\newcommand \ubar[1]{\underaccent{\bar}{#1}}
\newcommand \at[2]{_{#1 \mid #2}}
\begin{document}
\title{Warm Start of Mixed-Integer Programs \\ for Model Predictive Control of Hybrid Systems}

\author{Tobia~Marcucci and Russ~Tedrake
\thanks{T.~Marcucci  and R.~Tedrake are with the Computer Science and Artificial Intelligence Laboratory (CSAIL),
	Massachusetts Institute of Technology,
	Cambridge, MA 02139, USA.
	E-mail: \texttt{\{tobiam, russt\}@mit.edu}.
	}
}

\maketitle

\begin{abstract}
In hybrid Model Predictive Control (MPC), a Mixed-Integer Quadratic Program (MIQP) is solved at each sampling time to compute the optimal control action.
Although these optimizations are generally very demanding, in MPC we expect consecutive problem instances to be nearly identical.
This paper addresses the question of how computations performed at one time step can be reused to accelerate (\emph{warm start}) the solution of subsequent MIQPs.

Reoptimization is not a rare practice in integer programming:
for small variations of certain problem data, the branch-and-bound algorithm allows an efficient reuse of its search tree and the dual bounds of its leaf nodes.
In this paper we extend these ideas to the receding-horizon settings of MPC.
The warm-start algorithm we propose copes naturally with arbitrary model errors, has a negligible computational cost, and frequently enables an a-priori pruning of most of the search space.
Theoretical considerations and experimental evidence show that the proposed method tends to reduce the combinatorial complexity of the hybrid MPC problem to that of a one-step look-ahead optimization, greatly easing the online computation burden.
\end{abstract}
\begin{IEEEkeywords}
Model Predictive Control, Hybrid Systems, Mixed-Integer Programming, Branch and Bound, Warm Start.
\end{IEEEkeywords}

\section{Introduction}
\label{sec:intro}
\IEEEPARstart{M}{odel} Predictive Control (MPC) is a numerical technique that enables the design of optimal feedback controllers for a wide variety of dynamical systems~\cite{mayne2000constrained,bemporad1999control}.
The main idea behind it is straightforward: if we are able to solve trajectory optimization problems quickly enough, we can replan the future motion of the system at each sampling time and achieve a reactive behavior.
While for smooth dynamics the online computations of MPC are generally limited to a simple convex program (even in the nonlinear case~\cite{diehl2005real}), the discrete behavior of hybrid systems is most naturally modeled with integer variables, requiring the real-time solution of mixed-integer programs.
This can be prohibitive even for systems with ``slow dynamics'' and of ``moderate size.''

The focus of this paper is \emph{hybrid linear systems}, i.e., systems whose nonlinearity is exclusively due to discrete logics.
For these, in the common case of a quadratic cost, the MPC problem falls in the class of Mixed-Integer Quadratic Programs (MIQPs).
MIQPs are NP-hard problems and, as such, no polynomial-time algorithm is known for their solution.
The most robust and effective strategy for tackling this class of optimizations is Branch and Bound (B\&B)~\cite{conforti2014integer,fletcher1998numerical}.
Despite its worst-case performance, this algorithm is very appealing: for a feasible optimization, B\&B converges to a global optimum; otherwise, it provides a certificate of infeasibility.

B\&B solves an MIQP by constructing a search tree, where at each node a Quadratic Program (QP) is solved to bound the objective function over a subset of the search space.
As an order of magnitude, for large-scale control problems, B\&B can easily require millions of QPs to converge~\cite{marcucci2019mixed}.
It is therefore natural to ask whether at the end of the time step all the information contained in the tree is necessarily lost, or it can be reused to warm start the solution of the next MIQP.
This seems plausible considering that two consecutive optimizations overlap for most of the time horizon, and differ only for a one-step shift of the time window.
This idea has been extremely successful in linear MPC (see, e.g.,~\cite{rao1998application,ferreau2008online,kuindersma2014efficiently,stellato2020osqp,liao2020fbstab}), but its application in the hybrid case raises many difficulties and has been obstructed by the complexity of B\&B algorithms.

\subsection{Related Works}
Given the difficulty of solving MIQPs online, techniques to compute offline the optimal control as a function of the system state have been intensively developed~\cite{dua2002multiparametric,borrelli2005dynamic,oberdieck2015explicit}, and also extended beyond hybrid linear systems~\cite{gueddar2012approximate,charitopoulos2016explicit}.
However, the application of these \emph{explicit} methods is typically limited to low-dimensional systems, with very few discrete variables.
Approximate explicit solutions to the hybrid MPC problem have been proposed in~\cite{axehill2014parametric,marcucci2017approximate,sadraddini2019sampling}.
These extend the scope of exact approaches, but still require a substantial amount of offline computations, which might not be feasible in many applications.
In fact, certain problem data might be known only at run time, excluding the possibility of solving the MPC problem offline.

Noticing that the hardness of these problems lies in the identification of the optimal integer assignment, one can devise a split of the problem into two: a cheap algorithm to generate a good guess for the integers, followed by a rounding step~\cite{sager2009direct,sager2011combinatorial,jung2015lagrangian,frick2015embedded}.
This is a popular approach for hybrid nonlinear systems, and warm starting is having a crucial role in its advancement~\cite{burger2019design}.
However, it is not particularly convenient in our context, since the rounding step above typically suffers from the same combinatorics as our original MIQP.

Even though heuristic~\cite{takapoui2017simple} and local~\cite{frick2019low} methods have recently been proved to be very effective, B\&B is still the most reliable algorithm for solving hybrid MPC problems online~\cite[Section~17.4]{borrelli2017predictive}.
Many enhancements of B\&B tailored to MPC have been proposed, and attention has been mainly focused on accelerating the solution of the quadratic subproblems.
To this end, various algorithms have been considered: dual active set~\cite{axehill2006mixed}, dual gradient projection~\cite{axehill2008dual,naik2017embedded}, interior point~\cite{frick2015embedded}, partially-nonnegative least squares~\cite{bemporad2015solving,bemporad2018numerically}, and alternating-direction method of multipliers~\cite{stellato2018embedded}.
Search heuristics that leverage the problem temporal structure have also been proposed~\cite{bemporad1999efficient,frick2015embedded}.

Most of the B\&B schemes mentioned above make full use of warm start within a single B\&B solve, using the parent solution as a starting point for the child subproblems.
However, the issue of reusing computations across time steps has only been discussed in~\cite{bemporad2018numerically}.
There, a guess of the optimal integer assignment (obtained by shifting the previous solution) is prioritized within the construction of the new tree.
A similar approach has recently been proposed in~\cite{hespanhol2019structure}, where the whole path from the B\&B root to the optimal leaf is propagated in time.
Even if these techniques can lead to considerable savings, limiting the data propagated across time steps to a guess of the optimal solution is generally very restrictive.
In fact, in practice, we often expect disturbances to make these guesses inaccurate.
More importantly, even in the ideal case in which the integer warm start is actually optimal, these methods still build a B\&B tree almost from scratch, requiring the solution of many subproblems.
Note that, for an MIQP, proving optimality of a candidate solution is in principle as hard as solving the original optimization~\cite{del2017mixed}.

The problem of warm starting (or reoptimizing) a Mixed-Integer Linear Program (MILP) is not new to the operations research community~\cite{ausiello2011complexity}.
For a sequence of MILPs with common constraint matrix, the general approach is to start each B\&B search from the final frontier (B\&B leaves) of the previous solver run~\cite{hiller2013reoptimization,gamrath2015reoptimization}.
Moreover, in case of changes in the constraint right-hand side only, the dual bases of the previous frontier can be used to bound the optimal values of the new leaves~\cite{ralphs2006duality}.
This is a much more comprehensive reuse of computations than what is currently done in MPC.
First, not only the optimal solution, but the whole B\&B tree is propagated between subsequent problems.
This is very important, since, before convergence, the B\&B algorithm might also need to thoroughly explore regions of the search space that are far away from the optimum.
Second, by propagating dual bounds between consecutive MILPs, these approaches are capable of pruning large branches of the tree without solving any subproblems.

The latter ideas do not transfer smoothly to MPC.
In the general case of a time-varying system, consecutive MPC problems do not share the same constraint matrix, and the techniques mentioned above do not apply.
In the time-invariant case, on the other hand, we could interpret a sequence of MPC problems as MIQPs with variable constraint right-hand side, as done in explicit MPC~\cite{bemporad2002explicit,borrelli2005dynamic}.
However, proceeding as in~\cite{ralphs2006duality}, B\&B solutions would be reused without being shifted in time, completely ignoring the receding-horizon structure of the problem at hand.

\subsection{Contribution}
We present a novel warm-start procedure for hybrid MPC, which bridges the gap with state-of-the-art reoptimization techniques from operations research.
First, we show how an initial search frontier for the hybrid MPC problem can be obtained by shifting in time part of the final frontier of the previous B\&B tree.
Then, duality is used to derive tight bounds on the cost of the new subproblems.
Starting from this refined partition of the search space, B\&B generally requires only a few subproblems to find the optimum.
Then, the implied dual bounds readily prune most of the search space, accelerating convergence without sacrificing global optimality.
Neither the shift of the B\&B frontier, nor the synthesis of the bounds, causes any significant time overhead in the MIQP solves.

The proposed method copes naturally with model errors and disturbances of any magnitude.
Remarkably, as the time horizon grows, and the MPC policy becomes stationary, our approach reduces the hybrid MPC combinatorics to that of a one-step look-ahead problem.
In this asymptotic case, previous computations are fully reused and only the variables of the final time step have to be reoptimized.

We evaluate the performance of our algorithm with a thorough statistical analysis.
In the vast majority of the cases, it leads to a drastic reduction of computation times and, even in the worst case, it still performs better than the customary approach of solving each MIQP from scratch.

\subsection{Article Organization}
We structured this paper trying to maximize readability.
We start in Section~\ref{sec:background} presenting a minimal formulation of the MPC problem, which contains only the components necessary to the development of the warm-start algorithm.
Section~\ref{sec:bb} reviews the B\&B algorithm, emphasizing the advantages of dual methods in the solution of the subproblems.
In the same section, we identify the three main ingredients that compose a warm start for an MIQP.
Sections~\ref{sec:initial_cover},~\ref{sec:lower_bounds}, and~\ref{sec:upper_bound} are devoted to showing how each of these ingredients can be efficiently computed for the minimal MPC problem at hand.
Section~\ref{sec:asymptotic_analysis} presents an asymptotic analysis of the algorithm as the MPC time horizon tends to infinity.
In Section~\ref{sec:terminal} we generalize the problem formulation from Section~\ref{sec:background}, and we extend the results to these more general settings.
A statistical study of the algorithm performance is reported in Section~\ref{sec:numerical_study}.
Section~\ref{sec:conclusions} is dedicated to conclusions.
In Appendix~\ref{sec:extensions} several extensions of the proposed warm-start method are discussed, whereas Appendices~\ref{sec:dual_qp} and~\ref{sec:proof_theorem} contain mathematical derivations.

\subsection{Notation}
We denote the set of real numbers as $\mathbb R$ and, e.g., nonnegative reals as $\mathbb R_{\geq 0}$.
The same notation is used for integers $\mathbb Z$, and we let $\mathbb N := \mathbb Z_{\geq 0}$.
The Euclidean length of a vector $x \in \mathbb R^n$ is $|x|$.
We use the same symbol for the cardinality $| \mathcal S |$ of a set $\mathcal S$.
For two vectors $x \in \mathbb R^{n}$ and $y \in \mathbb R^{m}$, $(x, y) \in \mathbb R^{n + m}$ represents their concatenation.
For a matrix $A \in \mathbb R^{n \times m}$, we let $A'$ be its transpose, $A^+$ its pseudoinverse, $\| A \|$ its maximum singular value, and $\ker (A)$ its nullspace.
All physical units may be assumed to be expressed in the MKS system.

\section{Hybrid Model Predictive Control}
\label{sec:background}
Many equivalent descriptions of hybrid linear systems can be found in the literature~\cite{heemels2001equivalence}, in this paper we employ the popular framework of Mixed Logical Dynamical (MLD) systems~\cite{bemporad1999control}.
This description naturally lends itself to mixed-integer optimization, and it is the intermediate representation in which hybrid systems are more commonly cast for numerical optimal control~\cite[Section~17.4]{borrelli2017predictive}.
In this section, we introduce MLD systems (Section~\ref{sec:mld}) and we formulate the associated MPC problem (Section~\ref{sec:hmpc}).

\subsection{Mixed Logical Dynamical Systems}
\label{sec:mld}
We compactly represent an MLD system as
\begin{align}
\label{eq:mld}
x_{\tau+1} = A x_\tau + B u_\tau + e_\tau, \quad (x_\tau, u_\tau) \in \mathcal D,
\end{align}
where
$x_\tau \in \mathbb R^{n_x}$ denotes the system state at discrete time $\tau \in \mathbb N$,
$u_\tau \in \mathbb R^{n_u} \times \{0,1\}^{m_u}$ collects continuous and binary inputs,
$e_\tau \in \mathbb R^{n_x}$ represents the model error,
and the domain $\mathcal D := \{ (x,u) \mid F x + G u \leq h \}$ is a polyhedral subset of $\mathbb R^{n_x + n_u + m_u}$ which contains the origin.
We denote by $v_\tau \in \{0,1\}^{m_u}$ the binary entries in the input vector, and we let $V$ be the selection matrix such that $v_\tau = V u_\tau$.

Even if the MLD model~\eqref{eq:mld} is more compact than the usual description employed in the MPC literature, it can be used without loss of generality:
\begin{itemize}
\item
Often a distinction between independent and dependent (auxiliary) input variables $u_\tau$ is made~\cite{bemporad1999control}.
For a \emph{well-posed} MLD system, the second are assumed to be uniquely determined by the first and the state $x_\tau$ through the constraint set $\mathcal D$.
However, the role of these variables is identical from the optimization viewpoint, so we do not distinguish between them here.
\item 
Affine MLD dynamics as in~\cite[Section~16.5]{borrelli2017predictive} can be made linear through a shift of the system coordinates around an equilibrium point $(\hat x, \hat u)$, provided that binary inputs are defined so that $V \hat u = 0$.
\item 
Binary states can be handled introducing auxiliary inputs.
Let $A_i$ and $B_i$ be the $i$th rows of $A$ and $B$, respectively, and let $u_{\tau,j} \in \{0,1\}$ be the $j$th input.
Enforcing $A_i x_\tau+ B_i u_\tau = u_{\tau,j}$ through the constraint set $\mathcal D$, we obtain $x_{\tau+1,i} \in \{0,1\}$.
\end{itemize}

Handling binary states with auxiliary inputs simplifies the analysis but can be computationally inefficient: Appendix~\ref{sec:binary_mld} shows how our warm-start algorithm can cope directly with binary states.
Moreover, paying the price of a heavier notation, the proposed method also applies to time-varying MLD systems.
This extension is presented in Appendix~\ref{sec:time_varying_mld}.

\subsection{The Optimal Control Problem}
\label{sec:hmpc}
We now describe the optimization problem beneath the hybrid MPC controller.
To streamline the presentation, in the main body of this paper, we consider the simplified problem statement~\eqref{eq:miqp} given below.
This formulation does not allow terminal penalties and constraints, which are fundamental tools to ensure the stability of the closed-loop system~\cite{mayne2000constrained}.
In Section~\ref{sec:terminal} we extend our algorithm to incorporate these important MPC ingredients.
Additionally, in this paper, we limit our attention to quadratic objective functions, even though the results we present can be easily adjusted in case of different convex costs (e.g., 1-norm or $\infty$-norm).

Under the assumption of a perfect model ($e_\tau = 0$ for all $\tau$), an MPC controller regulates system~\eqref{eq:mld} to the origin by solving an open-loop optimal control problem at each time step.
Let $\tau$ be the time step at which the optimization problem is solved (the \emph{current time}), and let $t \in \mathbb N$ denote the \emph{relative time} within the MPC problem.
Given the current state $x_\tau$, we formulate the MIQP
\begin{subequations}
\label{eq:miqp}
\begin{align}
\label{eq:miqp_objective}
\min \
&
\sum_{t = 0}^{T} | Q x\at{t}{\tau} |^2 +
\sum_{t = 0}^{T-1} | R u\at{t}{\tau} |^2\\
\label{eq:initial_conditions}
\mathrm{s.t.} \ & x\at{0}{\tau} = x_\tau, \\
\label{eq:miqp_dynamics}
& x\at{t+1}{\tau} = A x\at{t}{\tau} + B u\at{t}{\tau}, &&  t = 0, \ldots,  T-1, \\
\label{eq:miqp_constraints}
& (x\at{t}{\tau}, u\at{t}{\tau}) \in \mathcal D, && t = 0, \ldots,  T-1, \\
\label{eq:miqp_binaries}
& V u\at{t}{\tau} \in \{0,1\}^{m_u}, && t = 0, \ldots,  T-1.
\end{align}
\end{subequations}
Here the optimization variables are $\{ u\at{t}{\tau} \}_{t=0}^{T-1}$ and $\{x\at{t}{\tau} \}_{t=0}^{T}$, and the time horizon $T$ is assumed to be fixed (the case with variable horizon is briefly discussed in Appendix~\ref{sec:variable_horizon}).
We do not assume the objective~\eqref{eq:miqp_objective} to be strictly convex, i.e., the weight matrices $Q$ and $R$ are allowed to be rank deficient.

The outcome of~\eqref{eq:miqp} is an optimal (up to a tolerance $\varepsilon \in \mathbb R_{\geq 0}$) open-loop  control sequence $\{u\at{t}{\tau}^*\}_{t = 0 }^{T-1} $, with the related state trajectory $\{x\at{t}{\tau}^*\}_{t = 0}^{T}$.
In MPC, only the first action $u_{\tau} := u\at{0}{\tau}^*$ is applied to the system.
Then, at time step $\tau + 1$, the new current state $x_{\tau+1}$ is measured and problem~\eqref{eq:miqp} is solved in a receding-horizon fashion.
Given the similarity of the problems we solve at time $\tau$ and $\tau+1$, it is natural to ask whether part of the computations performed at one time step can be exploited to speed up the solution of the consecutive problem.
In the next section we introduce the notions necessary to formalize this question.

\section{Hybrid MPC via Branch-and-Bound}
\label{sec:bb}
This section reviews the bases of B\&B by considering its application to problem~\eqref{eq:miqp}.
In Section~\ref{sec:bb_algorithm}, we describe the main steps of the algorithm.
Placing a special emphasis on the input-output behavior of each iteration, we provide a simple formalization of the warm-start problem.
In Section~\ref{sec:duality_subproblem}, we discuss how Lagrangian duality can facilitate the solution of the B\&B subproblems.
For a more thorough description of B\&B, we refer the reader to, e.g.,~\cite[Section~9.2]{conforti2014integer}.

\subsection{The Branch-and-Bound Algorithm}
\label{sec:bb_algorithm}
Generally, B\&B is presented as a tree search, where each node corresponds to a convex relaxation of the MIQP.
Here we emphasize the set-cover interpretation of B\&B, which enables a more fluent analysis of the warm-start problem.
Similar presentations can also be found in~\cite{hiller2013reoptimization,gamrath2015reoptimization}.

We denote problem~\eqref{eq:miqp} by $\mathbf P$ and its optimal value by $\theta \in \mathbb R_{\geq 0} \cup \{\infty\}$, where $\theta = \infty$ in case of an infeasible MIQP.
In this section, for simplicity, we do not explicitly annotate the dependence of problem $\mathbf P$ on the time step $\tau$.
The B\&B search relies on the solution of convex relaxations (or subproblems) of $\mathbf P$, where the nonconvex constraints~\eqref{eq:miqp_binaries} are replaced by the linear inequalities
\begin{align}
\label{eq:relaxed_binaries}
\ubar v\at{t}{\tau} \leq v\at{t}{\tau} := V u\at{t}{\tau} \leq \bar v\at{t}{\tau},
\end{align}
for some $\ubar v\at{t}{\tau}, \bar v\at{t}{\tau} \in \{0,1\}^{m_u}$ such that $\ubar v\at{t}{\tau} \leq \bar v\at{t}{\tau}$.
A convex relaxation of $\mathbf P$ is hence a QP identified by the interval
\begin{align}
\label{eq:interval_tau}
\mathcal V := [(\ubar v\at{t}{\tau})_{t=0}^{T-1}, (\bar v\at{t}{\tau})_{t=0}^{T-1}] \subset \mathbb R^{T m_u},
\end{align}
and we denote it by $\mathbf P (\mathcal V)$.
Similarly, $\theta (\mathcal V) \in \mathbb R_{\geq 0} \cup \{\infty\}$ will represent its optimal value.

At iteration $i \in \mathbb N$ of the B\&B algorithm, we are given three inputs:
\begin{enumerate}
\item
A collection $\mathscr V^i$ of intervals of the form~\eqref{eq:interval_tau}, whose union covers the set $\{0,1\}^{T m_u}$.
Each interval $\mathcal V$ in $\mathscr V^i$ determines a subproblem $\mathbf P (\mathcal V)$ which, in the tree interpretation of the algorithm, is a leaf node.
Analogously, the cover $\mathscr V^i$ can be understood as the whole B\&B frontier.
It is important to remark that we do not assume the tree to have a single root, i.e., we allow $| \mathscr V^0 | \geq 1$.
Without loss of generality, we can assume the sets in $\mathscr V^i$ to be disjoint.
\item 
A lower bound $\ubar \theta (\mathcal V) \in \mathbb R_{\geq 0} \cup \{\infty\}$ on the optimal value $\theta (\mathcal V)$ for each set  $\mathcal V$ in $\mathscr V^i$.
Except for root nodes, this represents the dual bound implied by the solution of the parent subproblem.
\item 
An upper bound $\bar \theta^i \in \mathbb R_{\geq 0} \cup \{\infty \}$ on the optimal value of $\mathbf P$.
This is the objective of the best (lowest in cost) subproblem solved so far that is binary feasible, i.e., whose solution verifies~\eqref{eq:miqp_binaries}.
\end{enumerate}

Central to this work is the choice of the B\&B inputs: the initial cover $\mathscr V^0$, the lower bounds $\ubar \theta (\mathcal V)$ for each $\mathcal V$ in $\mathscr V^0$, and the upper bound $\bar \theta^0$.
Clearly, in case no information about the solution is available, the initialization $\mathscr V^0 := \{ \mathcal V \}$, with $\mathcal V :=  [ 0,  1]^{T m_u}$ the unit hypercube, $\ubar \theta(\mathcal V ) := 0$, and $\bar \theta^0 := \infty$ is always valid.
On the other hand, as we will see in the following sections, the structure of problem~\eqref{eq:miqp} allows the synthesis of nontrivial B\&B initializations, leveraging the solutions coming from the previous time steps.

The $i$th iteration of B\&B consists of the following steps.
Given an optimality tolerance $\varepsilon$, we select a subproblem, identified by the set $\mathcal V^i \in \mathscr V^i$, such that
\begin{align}
\label{eq:convergence_bb}
\ubar \theta (\mathcal V^i ) < \bar \theta^i - \varepsilon.
\end{align}
We solve the convex program $\mathbf P (\mathcal V^i )$, and we apply the first valid condition from the following list:
\begin{enumerate}
\item
\emph{Pruning.}
If $\theta(\mathcal V^i ) \geq \bar \theta^i - \varepsilon$, any binary assignment in $\mathcal V^i$ cannot be ``$\varepsilon$-cheaper'' than the one we already have.
Hence, we set
$\ubar \theta(\mathcal V^i ) \leftarrow \theta(\mathcal V^i )$
and we let
$\mathscr V^{i+1} \leftarrow \mathscr V^i$,
$\bar \theta^{i+1} \leftarrow  \bar \theta^i$.
\item 
\emph{Solution update.}
If the condition for 1) is not met, and the solution of $\mathbf P(\mathcal V^i )$ is binary feasible, then the optimal value $\theta (\mathcal V^i )$ is an upper bound for the objective of $\mathbf P$, tighter than the one we have.
Hence we update the bounds
$\bar \theta^{i+1} \leftarrow \theta(\mathcal V^i )$ and
$\ubar \theta(\mathcal V^i ) \leftarrow \theta(\mathcal V^i )$, but we do not refine the cover
$\mathscr V^{i+1} \leftarrow \mathscr V^i$.

\item
\emph{Branching.} 
If neither 1) nor 2) applies, we select a time $t$ and an element of $v\at{t}{\tau}$ whose optimal value is not binary.
We then split $\mathcal V^i$ into two subsets, $\mathcal U^i$ and $\mathcal W^i$: one in which this element is forced to be zero, the other in which it equals one.
We then update the cover $\mathscr V^{i+1} \leftarrow \{ \mathcal U^i, \mathcal W^i \} \cup \mathscr V^i \backslash \{ \mathcal V^i \}$, and we leave the upper bound unchanged $\bar \theta^{i+1} \leftarrow  \bar \theta^i$.
The lower bounds $\ubar \theta (\mathcal U^i)$ and $\ubar \theta (\mathcal W^i)$ are obtained through a simple duality argument discussed in Section~\ref{sec:duality_subproblem}.
\end{enumerate}

The algorithm terminates when condition~\eqref{eq:convergence_bb} is not met for any set in $\mathscr V^i$, and returns the cover $\mathscr V^* := \mathscr V^i$ and the cost $\theta^* := \bar \theta^i \leq \theta + \varepsilon$.
Clearly, B\&B is a finite algorithm, since, in the worst case, it amounts to the enumeration of all the $2^{T m_u}$ potential binary assignments.

\subsection{Lagrangian Duality in the Solution of the Subproblem}
\label{sec:duality_subproblem}
\begin{figure*}[t]
\normalsize
\begin{subequations}
\label{eq:dual}
\begin{align}
\label{eq:dual_objective}
\max \ & - \sum_{t = 0}^{T} | \rho\at{t}{\tau} / 2|^2 - \sum_{t = 0}^{T-1} ( | \sigma\at{t}{\tau} / 2 |^2 + h' \mu\at{t}{\tau} + \bar v\at{t}{\tau}' \bar \nu\at{t}{\tau} - \ubar v\at{t}{\tau}' \ubar \nu\at{t}{\tau} ) - x_\tau' \lambda\at{0}{\tau} \\
\mathrm{s.t.} \ &
\label{eq:dual_x_t}
Q' \rho\at{t}{\tau} + \lambda\at{t}{\tau} - A' \lambda\at{t+1}{\tau} + F' \mu\at{t}{\tau} = 0, && t = 0, \ldots, T-1, \\
\label{eq:dual_x_T}
& Q' \rho\at{T}{\tau} + \lambda\at{T}{\tau} = 0, \\
\label{eq:dual_u_t}
& R' \sigma\at{t}{\tau} - B' \lambda\at{t+1}{\tau} + G' \mu\at{t}{\tau} + V' (\bar \nu\at{t}{\tau} - \ubar \nu\at{t}{\tau}) = 0, && t = 0, \ldots, T-1, \\
\label{eq:dual_nonneg}
& (\mu\at{t}{\tau}, \ubar \nu\at{t}{\tau}, \bar \nu\at{t}{\tau}) \geq 0, && t = 0, \ldots, T-1.
\end{align}
\end{subequations}
\hrulefill
\end{figure*}

The algorithm we present in this paper makes use of the dual $\mathbf D (\mathcal V)$ of the subproblem $\mathbf P (\mathcal V)$.
However, this does not entail any practical limitation: most efficient B\&B implementations employ dual methods for the solution of the subproblems (see, e.g.,~\cite{fletcher1998numerical,buchheim2016feasible,axehill2006mixed,axehill2008dual,naik2017embedded}).
In this subsection, we analyze the structure of $\mathbf D (\mathcal V)$ and we briefly discuss the main affinities between Lagrangian duality and B\&B.

The dual $\mathbf D  (\mathcal V)$ is derived in Appendix~\ref{sec:dual_qp}, and reported in Equation~\eqref{eq:dual}.
Its decision variables are the following Lagrange multipliers:
\begin{itemize}
\item$\lambda\at{t}{\tau}$ associated, for $t=0$, with the initial conditions~\eqref{eq:initial_conditions} and, for $t \geq 1$, with the MLD dynamics~\eqref{eq:miqp_dynamics};
\item $\mu\at{t}{\tau}$ corresponding to the MLD constraints~\eqref{eq:miqp_constraints};
\item $\ubar \nu\at{t}{\tau}$ and $\bar \nu\at{t}{\tau}$ coupled with the lower and upper bounds~\eqref{eq:relaxed_binaries} on the relaxed binary variables;
\item $\rho\at{t}{\tau}$ and $\sigma\at{t}{\tau}$ resulting from the introduction of auxiliary primal variables needed to handle the rank deficiency of $Q$ and $R$ (see Appendix~\ref{sec:dual_qp}).
\end{itemize}
By strong duality, the optimal value of $\mathbf D  (\mathcal V)$ coincides with $\theta (\mathcal V)$.

The first thing we notice when analyzing $\mathbf D  (\mathcal V)$ is that all the B\&B subproblems share the same dual feasible set, since the primal bounds $\ubar v\at{t}{\tau}$ and $\bar v\at{t}{\tau}$ become cost coefficients in~\eqref{eq:dual}.
This allows us to use the dual solution of a subproblem both to warm start the child QPs and to find lower bounds on their optimal values.
The bounds $\ubar \theta (\mathcal U^i)$, $\ubar \theta (\mathcal W^i)$ required in the branching step can, in fact, be obtained simply by substituting the parent multipliers into the child objectives.
Note that, by nonnegativity of $\ubar \nu\at{t}{\tau}$, $\bar \nu\at{t}{\tau}$ and since descending in the B\&B tree the bounds $\ubar v\at{t}{\tau}$, $\bar v\at{t}{\tau}$ can only be tightened, we have $\ubar \theta (\mathcal U^i) \geq \ubar \theta (\mathcal V^i)$ and $\ubar \theta (\mathcal W^i) \geq \ubar \theta (\mathcal V^i)$.

Another advantage of working on the dual emerges during pruning.
Algorithms such as dual active set or dual gradient projection, which take great advantage of warm starts, converge to the optimal value $\theta  (\mathcal V^i)$ from below.
This allows us to prematurely terminate a QP solve whenever the threshold $\bar \theta^i - \varepsilon$ is exceeded, leading to considerable computational savings.

Finally, we observe that $\mathbf D  (\mathcal V)$ is always feasible, since setting all the multipliers to zero satisfies the constraints in~\eqref{eq:dual}.
This implies that unboundedness of the dual is not only sufficient but also necessary for infeasibility of the primal.
Therefore, when solving a primal-infeasible QP, a dual solver will detect a set of feasible multipliers whose cost $\ubar \theta (\mathcal V)$ is strictly positive and for which $\rho\at{t}{\tau} = 0$ and $\sigma\at{t}{\tau} =0$ for all $t$.
In fact, these dual variables can be scaled by an arbitrary positive coefficient while preserving feasibility and increasing the dual objective.
In the following, we will refer to such a set of multipliers as a \emph{certificate of infeasibility} for $\mathbf P (\mathcal V)$.

\section{Construction of the Initial Cover}
\label{sec:initial_cover}
In Section~\ref{sec:bb}, we have seen that a warm start for problem~\eqref{eq:miqp} should consist of: an initial cover $\mathscr V^0$, a set of lower bounds $\ubar \theta (\mathcal V)$ for each set $\mathcal V$ in $\mathscr V^0$, and an upper bound $\bar \theta^0$ on the MIQP objective.
We now show how to efficiently construct these elements by leveraging the structure of problem~\eqref{eq:miqp}.
In this section, we focus on the initial cover $\mathscr V^0$.
Sections~\ref{sec:lower_bounds} and~\ref{sec:upper_bound} will be devoted to the synthesis of the lower bounds $\ubar \theta (\mathcal V)$  and the upper bound $\bar \theta^0$.
An illustrative example of the following procedure is given at the end of this section (see also Figure~\ref{fig:example}).

In the following, to distinguish between instances of problem~\eqref{eq:miqp} associated with different time steps, we make use of the subscript $\tau$.
For example, the MIQP~\eqref{eq:miqp} will be denoted by $\mathbf P_\tau$ and its initial cover by $\mathscr V_\tau^0$.
Without loss of generality, we consider the current time to be $\tau=1$.
We assume the previous optimization, $\mathbf P_0$, to be feasible, and we let $\mathscr V_0^*$ be the cover of $\{0,1\}^{T m_u}$ that we obtain from its solution.
By construction, $\mathscr V_0^*$ is composed of disjoint intervals $\mathcal V_0$ of the form~\eqref{eq:interval_tau}, i.e., $\mathcal V_0 := [(\ubar v\at{t}{0})_{t=0}^{T-1}, (\bar v\at{t}{0})_{t=0}^{T-1}]$.

We assemble the initial cover $\mathscr V_1^0$ as follows:
\begin{enumerate}
\item
Since at time $\tau=1$ the binary input $v_0$ applied to the system at $\tau=0$ is known, we discard from $\mathscr V_0^*$ all the intervals which do not agree with this control action.
More precisely, we only keep the sets $\mathcal V_0$ which satisfy the condition
\begin{align}
\label{eq:drop_interval_from_cover}
\ubar v\at{0}{0} \leq v_0 \leq \bar v\at{0}{0}.
\end{align}
\item 
For all the retained sets, we add to $\mathscr V_1^0$ the interval
\begin{multline}
\label{eq:shifted_interval}
\mathcal V_1 :=
[
(\ubar v\at{1}{0}, \ldots, \ubar v\at{T-1}{0} , \overbrace{0, \ldots, 0}^{m_u\text{ times}}), \\
(\bar v\at{1}{0}, \ldots, \bar v\at{T-1}{0}, \underbrace{1, \ldots, 1}_{m_u\text{ times}})
].
\end{multline}
In words, this operation shifts the bounds defining $\mathcal V_0$ one step backwards in time, and appends the trivial bound $[0,1]^{m_u}$ on the binaries of the new terminal stage.
\end{enumerate}
We now verify that the resulting collection of sets is a valid initialization for the B\&B algorithm.
\begin{proposition}
The collection $\mathscr V_1^0$ covers $\{0,1\}^{T m_u}$ and is composed of disjoint intervals.
\end{proposition}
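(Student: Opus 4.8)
The plan is to establish the two claims separately: first that the sets in $\mathscr V_1^0$ are pairwise disjoint, and then that their union is all of $\{0,1\}^{T m_u}$. Both will follow by exploiting the corresponding properties of $\mathscr V_0^*$, which by hypothesis is a disjoint cover of $\{0,1\}^{T m_u}$.

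For disjointness, consider two distinct retained intervals $\mathcal V_0, \mathcal V_0' \in \mathscr V_0^*$, giving rise to shifted intervals $\mathcal V_1, \mathcal V_1'$ via~\eqref{eq:shifted_interval}. Since $\mathcal V_0$ and $\mathcal V_0'$ are disjoint and both are products of the coordinate intervals $[\ubar v\at{t}{0}, \bar v\at{t}{0}]$ for $t = 0, \ldots, T-1$, there must be some stage $t^\star$ and some binary coordinate at which the intervals are disjoint; because each coordinate lives in $\{0,1\}$, this forces one interval to pin that coordinate to $0$ and the other to $1$. If $t^\star \geq 1$, that separating coordinate is inherited verbatim by $\mathcal V_1$ and $\mathcal V_1'$ at stage $t^\star - 1$, so they are disjoint. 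The one case to rule out is $t^\star = 0$ being the \emph{only} separating stage: but then $\mathcal V_0$ and $\mathcal V_0'$ disagree only on the $t=0$ binaries, and since both were retained they both satisfy~\eqref{eq:drop_interval_from_cover}, i.e. both contain the single point $v_0$ at stage $0$ — contradicting that stage $0$ separates them. Hence a separating stage $t^\star \geq 1$ always exists, and $\mathcal V_1 \cap \mathcal V_1' = \emptyset$.

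For the covering claim, take an arbitrary point $\hat v = (\hat v_0, \ldots, \hat v_{T-1}) \in \{0,1\}^{T m_u}$; I must produce a retained $\mathcal V_0$ whose shift~\eqref{eq:shifted_interval} contains $\hat v$. The idea is to build the "un-shifted" witness point $\tilde v := (v_0, \hat v_0, \hat v_1, \ldots, \hat v_{T-2}) \in \{0,1\}^{T m_u}$ — that is, prepend the known applied input $v_0$ and drop the last block of $\hat v$. Since $\mathscr V_0^*$ covers $\{0,1\}^{T m_u}$, some $\mathcal V_0 = [(\ubar v\at{t}{0})_t, (\bar v\at{t}{0})_t] \in \mathscr V_0^*$ contains $\tilde v$. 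Its stage-$0$ coordinate satisfies $\ubar v\at{0}{0} \leq v_0 \leq \bar v\at{0}{0}$, so $\mathcal V_0$ is retained and contributes $\mathcal V_1$ to $\mathscr V_1^0$. Finally, for $t = 1, \ldots, T-1$ we have $\ubar v\at{t}{0} \leq \hat v_{t-1} \leq \bar v\at{t}{0}$ from $\tilde v \in \mathcal V_0$, which is exactly the condition that the first $T-1$ blocks of $\mathcal V_1$ contain $(\hat v_0, \ldots, \hat v_{T-2})$; and the last block of $\mathcal V_1$ is the full cube $[0,1]^{m_u}$, which trivially contains $\hat v_{T-1}$. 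Hence $\hat v \in \mathcal V_1 \subseteq \bigcup \mathscr V_1^0$.

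I expect the main subtlety — not a deep obstacle, but the place where care is needed — to be the disjointness argument, specifically handling the degenerate case where two sets of $\mathscr V_0^*$ are separated \emph{only} at the stage-$0$ binaries that get discarded under the shift. The resolution is the observation above: the retention filter~\eqref{eq:drop_interval_from_cover} guarantees every retained interval agrees with the single realized value $v_0$ at stage $0$, so stage $0$ can never be the sole separator among retained intervals, and the shift preserves all the relevant separations. The covering half is essentially bookkeeping once the witness point $\tilde v$ is written down correctly.
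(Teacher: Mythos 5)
Your proof is correct and follows essentially the same route as the paper's: the covering half uses the identical witness point $(v_0, \hat v_0, \ldots, \hat v_{T-2})$ together with the retention condition~\eqref{eq:drop_interval_from_cover}, and your disjointness half is the contrapositive of the paper's argument (the paper lifts a hypothetical common point of two shifted intervals to a common point of their ancestors, while you push a separating coordinate of the ancestors forward through the shift, using retention to rule out separation occurring only at stage $0$). The extra care you take with the stage-$0$-only separation case is a worthwhile explicit treatment of a detail the paper leaves implicit.
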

\begin{proof}
Let $(v\at{t}{1} )_{t=0}^{T-1}$ be a generic element of $ \{0,1\}^{T m_u}$.
Since $\mathscr V_0^*$ covers $\{0,1\}^{T m_u}$, there must be a set in it that contains $( v_0, v\at{0}{1}, \ldots, v\at{T-2}{1} )$.
This implies, by construction, the existence of a set in $\mathscr V_1^0$ that contains $(v\at{t}{1} )_{t=0}^{T-1}$.
Hence $\mathscr V_1^0$ covers $\{0,1\}^{T m_u}$.
Now consider $(v\at{t}{1} )_{t=0}^{T-1} \in \mathbb R^{T m_u}$, and assume the existence of two sets in $\mathscr V_1^0$ which contain this point.
Then there must also be two sets in $\mathscr V_0^*$ which contain $( v_0, v\at{0}{1}, \ldots, v\at{T-2}{1} )$.
This contradicts our assumption on $\mathscr V_0^*$, hence the sets in $\mathscr V_1^0$ are disjoint.
\end{proof}

It should be noted that this shifting process propagates the whole B\&B frontier from one time step to the next, and not just the optimal solution as previously done in~\cite{bemporad2018numerically,hespanhol2019structure}.
As we analyze in depth in Section~\ref{sec:asymptotic_analysis}, this ensures that both the work done to identify the optimal solution and that necessary to prove its $\varepsilon$-optimality (which generally is the dominant computation effort) are reused across time steps.
We highlight that this construction can be entirely completed before the measurement of the next state $x_1$, hence it is not cause of any delay in the solution of the MIQP $\mathbf P_1$.

We conclude this section with a simple synthetic example, illustrated in Figure~\ref{fig:example}, of the procedure presented above.

\begin{example}
\label{ex:initial_cover_example}
\begin{figure*}[h!]
\centering
\begin{tabular}{c|c|c|c|}
\cline{2-4}
&
\multirow{2}{*}{Initial cover $\mathscr V_\tau^0$}
&
\multirow{2}{*}{B\&B tree}
&
\multirow{2}{*}{Final cover $\mathscr V_\tau^*$}
\\ &&& \\
\hline
\multicolumn{1}{|c|}{\begin{sideways}  Time $\tau=0$ \end{sideways}}
&
\fcolorbox{white}{white}{\includegraphics[height=.2\textwidth]{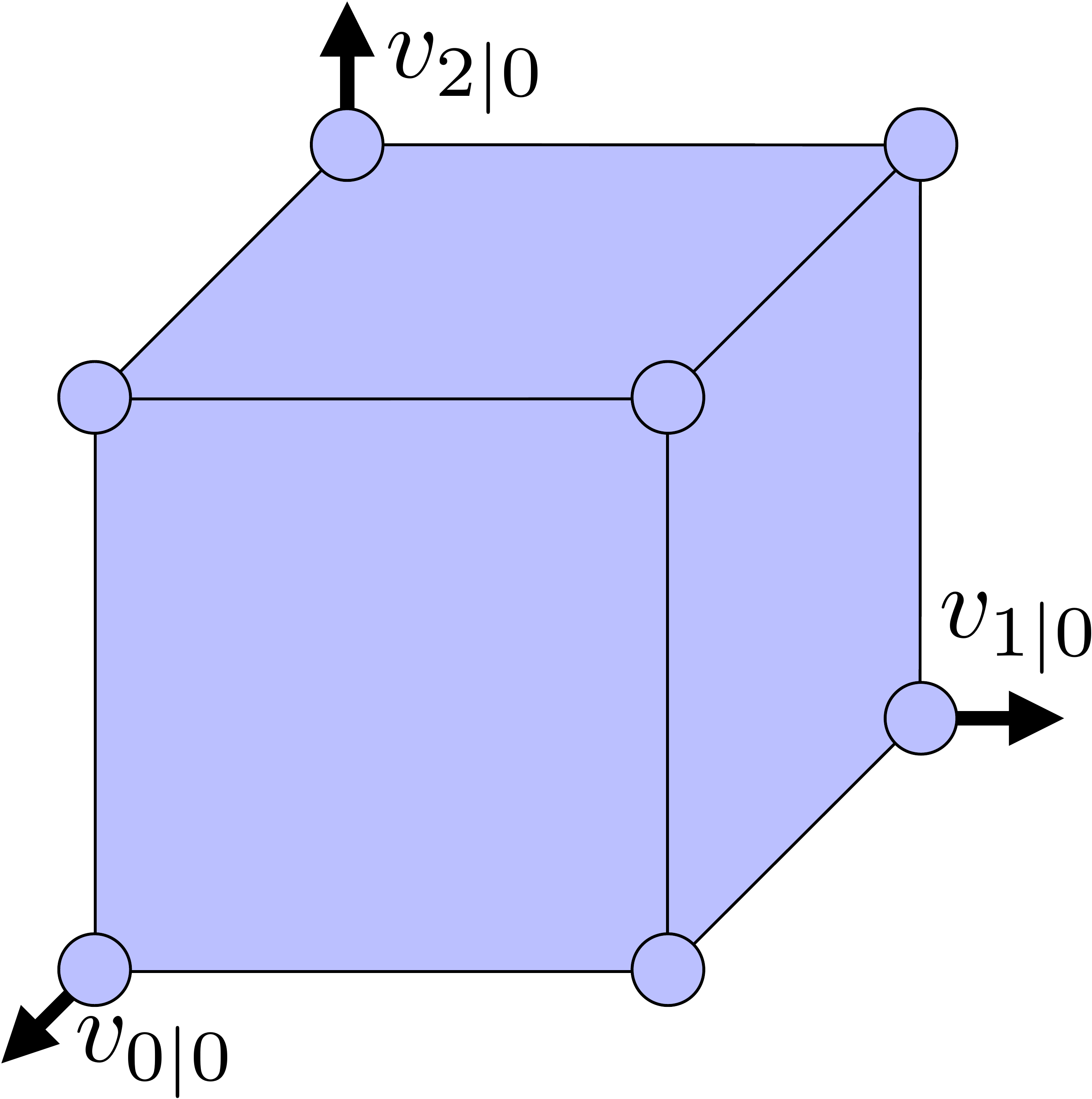}}
&
\fcolorbox{white}{white}{ \includegraphics[height=.2\textwidth]{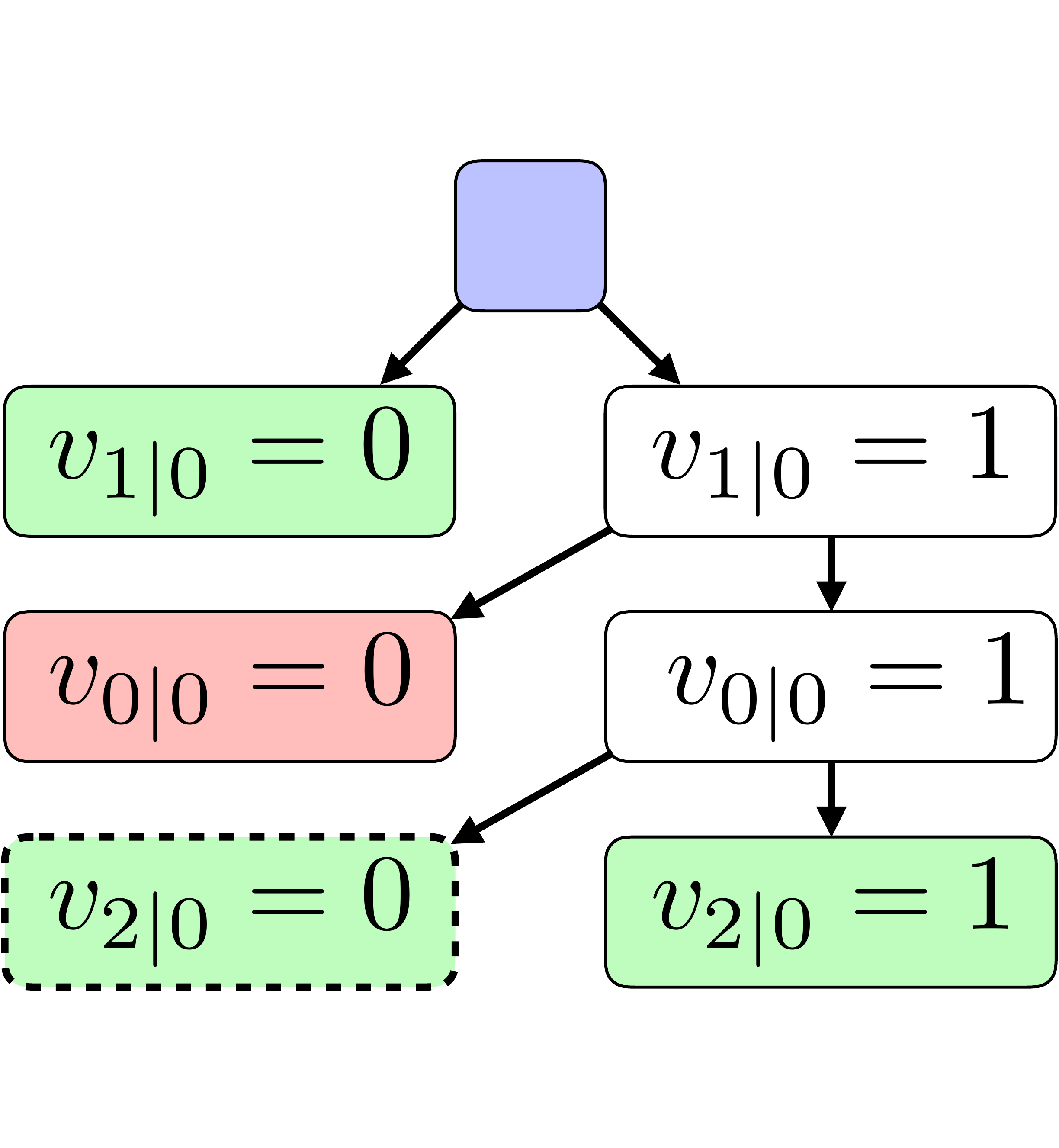}}
&
\fcolorbox{white}{white}{\includegraphics[height=.2\textwidth]{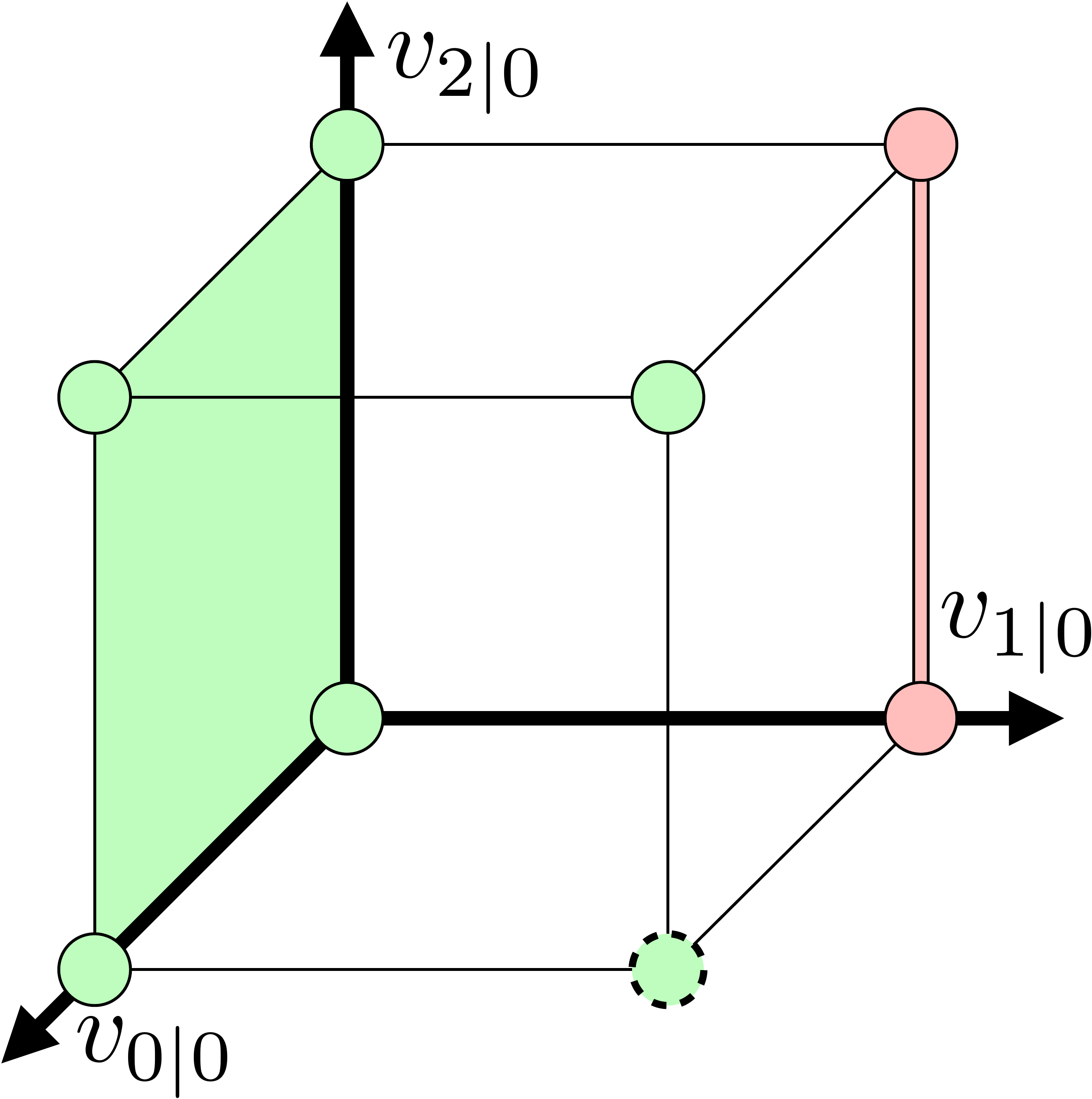}}
\\
\hline
\multicolumn{1}{|c|}{\begin{sideways} Time $\tau=1$ \end{sideways}}
&
\fcolorbox{white}{white}{\includegraphics[height=.2\textwidth]{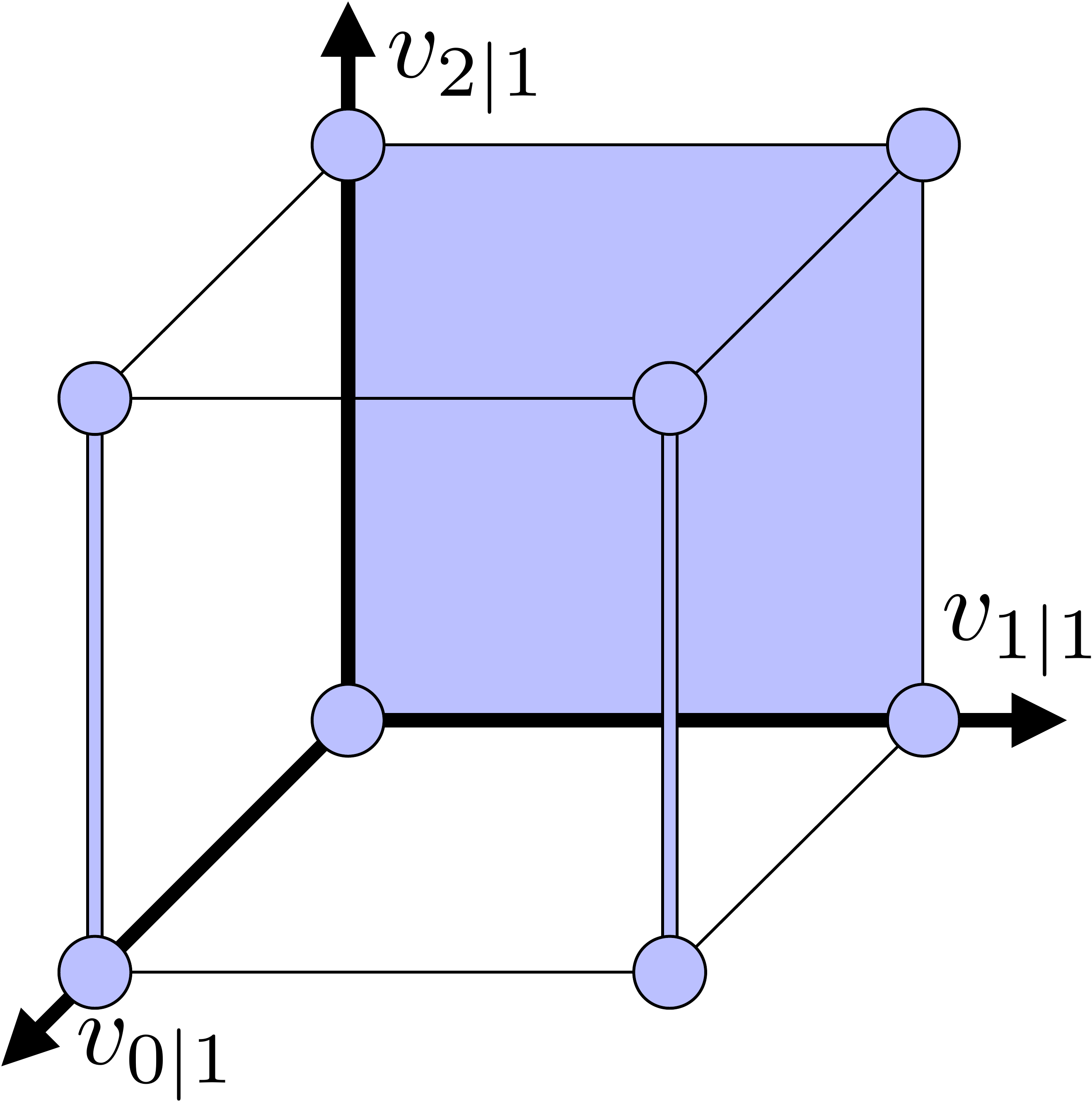}}
&
\fcolorbox{white}{white}{\includegraphics[height=.2\textwidth]{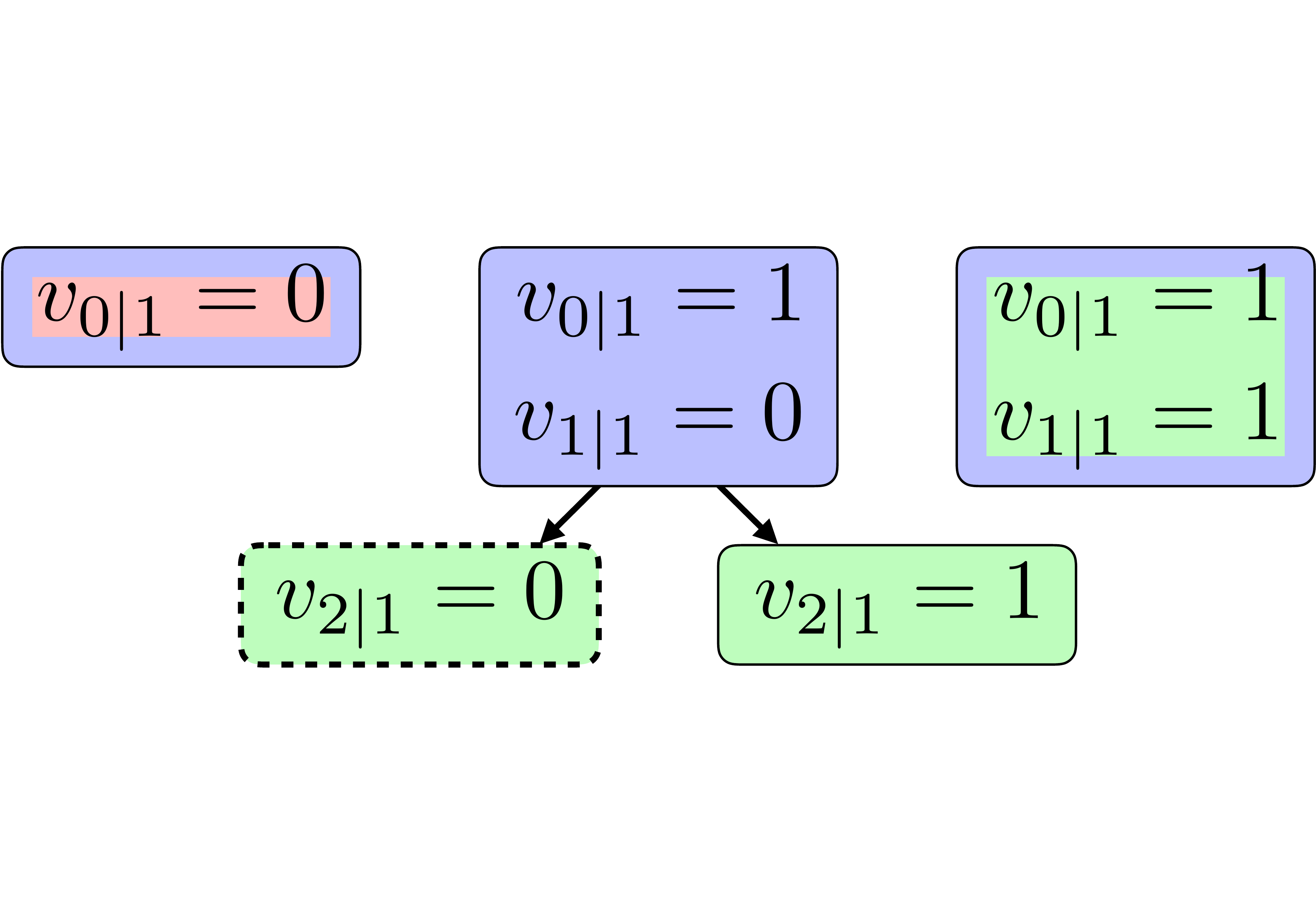}}
&
\fcolorbox{white}{white}{\includegraphics[height=.2\textwidth]{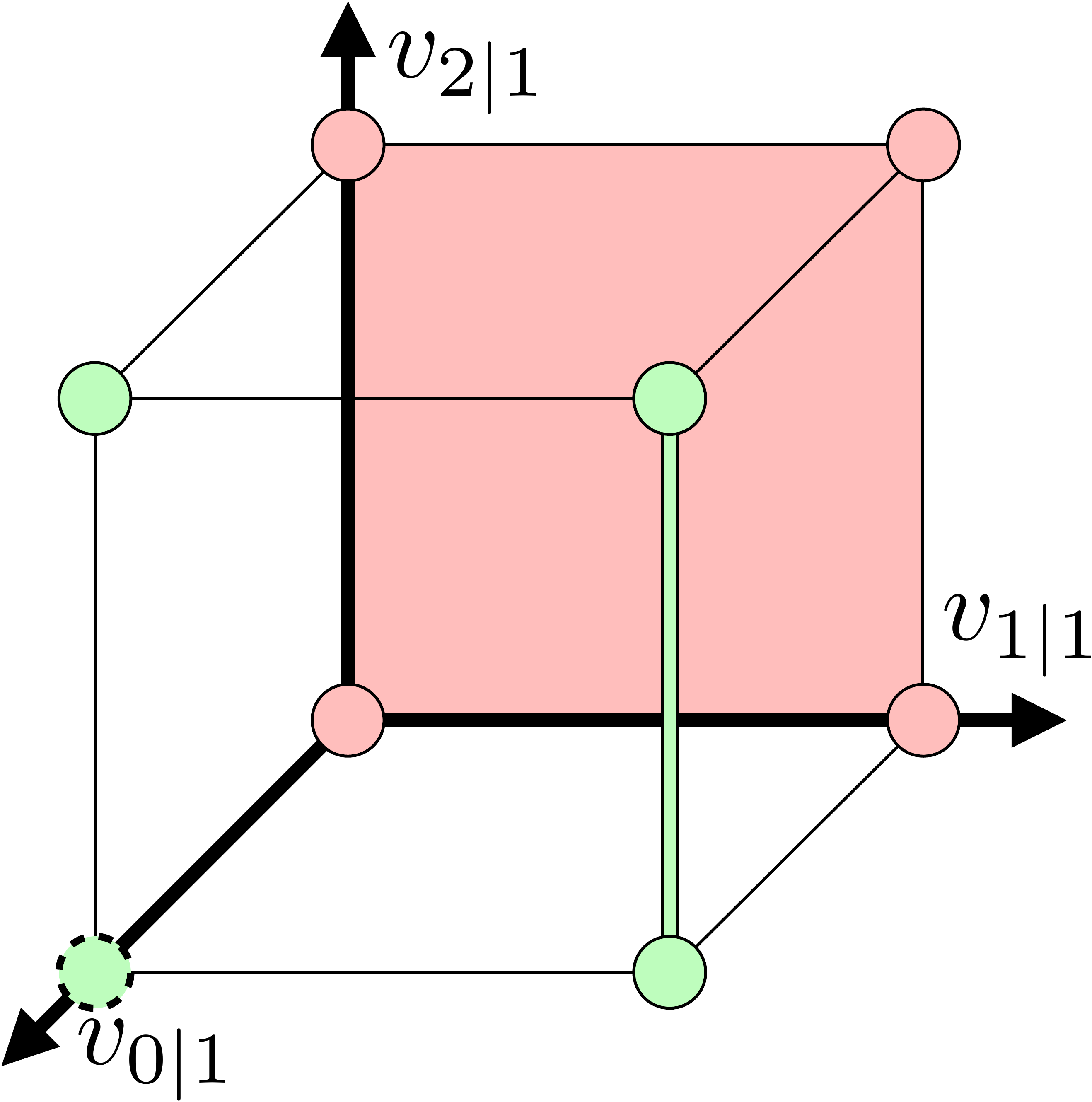}}
\\
\hline
\end{tabular}
\caption{
Illustration of the synthetic Example~\ref{ex:initial_cover_example}.
The first row describes the cold-started solution of the MIQP at time $\tau=0$, reporting the initial cover $\mathscr V_0^0$, the B\&B tree, and the final cover $\mathscr V_0^*$.
In the second row we depict the same elements for the warm-started MIQP at time $\tau=1$.
The optimal binary actions for $\tau=0$ and $\tau=1$ are $v_0 := v\at{0}{0}^* = 1$ and $v_1 := v\at{0}{1}^* = 1$, respectively.
These determine which leaves are kept (green) and which are discarded (red) during the construction of the subsequent initial covers.
Root nodes are colored in light blue, and leaves associated with the optimal solutions have a dashed contour.
The sets in the initial and final covers are colored and contoured in accordance with the B\&B tree.
}
\label{fig:example}
\end{figure*}

We consider a toy problem where the system has a single binary variable $m_u=1$ and the horizon of the controller is $T=3$.
At time $\tau=0$ the B\&B algorithm is initialized with the trivial cover $\mathscr V_0^0 = \{ [(0,0,0),(1,1,1)] \}$ (top-left cell in Figure~\ref{fig:example}).
Assuming the $\varepsilon$-optimal binary assignment to be $(v\at{0}{0}^*, v\at{1}{0}^*, v\at{2}{0}^*) = (1, 1, 0)$, the B\&B tree is shown in the top-center cell.
The root node (light blue) consists in the solution of the subproblem $\mathbf P_0 ([(0,0,0),(1,1,1)])$, whereas the optimal leaf node has a dashed contour and is associated with $\mathbf P_0 ([(1,1,0),(1,1,0)])$.
The final cover for $\mathbf P_0$ is
\begin{multline}
\mathscr V_0^* = \{
[(0,0,0),(1,0,1)],
[(0,1,0),(0,1,1)], \\
[(1,1,0),(1,1,0)],
[(1,1,1),(1,1,1)]
\}
\end{multline}
and is depicted in the top-right cell.

Among all the leaves at time $\tau=0$, the only one that does not verify condition~\eqref{eq:drop_interval_from_cover}, for $v_0 := v\at{0}{0}^* = 1$, is colored in red and represents problem $\mathbf P_0 ([(0,1,0),(0,1,1)])$.
This interval is hence dropped in the construction of the initial cover $\mathscr V_1^0$, while all the other leaves (green) are shifted in time and added to $\mathscr V_1^0$.
(Note that the sets in the final cover are colored and contoured to match the B\&B tree.)
After the time shift~\eqref{eq:shifted_interval} of the bounds, we get the initial cover for  $\mathbf P_1$:
\begin{multline}
\mathscr V_1^0 = \{
[(0,0,0),(0,1,1)],
[(1,0,0),(1,0,1)], \\
[(1,1,0),(1,1,1)]
\},
\end{multline}
which is depicted in the bottom-left cell in Figure~\ref{fig:example}.\footnote{
The shifting process can also be visualized by looking at the covers $\mathscr V_0^*$ and $\mathscr V_1^0$.
First, intersect the intervals in $\mathscr V_0^*$ with the plane $v\at{0}{0} = 1$ and project the resulting sets onto the plane $v\at{1}{0}$, $v\at{2}{0}$.
Then, rename the residual coordinates $v\at{1}{0}$ and $v\at{2}{0}$ as $v\at{0}{1}$ and $v\at{1}{1}$, respectively.
The latter sets are now the projection of $\mathscr V_1^0$ onto the plane $v\at{0}{1}$, $v\at{1}{1}$: the cover $\mathscr V_1^0$ is recovered by extruding them in the $v\at{2}{1}$ direction between $0$ and $1$.
}

The B\&B tree at time $\tau=1$ (bottom-center cell) has three root nodes, one per set in $\mathscr V_1^0$.
Its $\varepsilon$-optimal solution (dashed leaf) is $(v\at{0}{1}^*, v\at{1}{1}^*, v\at{2}{1}^*) = (1, 0, 0)$, and leads to the final cover $\mathscr V_1^*$ depicted in the bottom-right cell.
The same procedure is then applied again to select the leaves for the construction of $\mathscr V_2^0$: green leaves are kept, the red leaf is dropped.
\end{example}

\section{Propagation of Subproblem Lower Bounds}
\label{sec:lower_bounds}
The second step in the construction of our warm start is to equip each set in the cover $\mathscr V_1^0$ with a lower bound for the associated minimization problem.
The strategy we adopt is to construct a dual-feasible solution for each of these QPs.

From the solution of $\mathbf P_0$ via B\&B we retrieve the terminal cover $\mathscr V_0^*$ and, for each interval $\mathcal V_0$ in it, we have a feasible solution for the dual subproblem $\mathbf D_0(\mathcal V_0)$.
This can be optimal or just feasible, or even a certificate of infeasibility in case we proved that $\theta_0 (\mathcal V_0) = \infty$.
By means of the construction presented in Section~\ref{sec:initial_cover}, a set $\mathcal V_0$ (if not discarded) is associated with an element $\mathcal V_1$ of the initial cover $\mathscr V_1^0$.
The following lemma shows how a solution of $\mathbf D_0(\mathcal V_0)$ can be shifted in time to comply with the constraints of $\mathbf D_1 (\mathcal V_1)$.

\begin{lemma}
\label{lemma:propagation_dual_feasible_solution}
Let
$\{\lambda\at{t}{0}, \rho\at{t}{0}\}_{t=0}^{T}$ and $\{\mu\at{t}{0}, \ubar \nu\at{t}{0}, \bar \nu\at{t}{0}, \sigma\at{t}{0}\}_{t=0}^{T-1}$ 
be feasible multipliers for $\mathbf D_0(\mathcal V_0)$.
The following set of multipliers is feasible for $\mathbf D_1(\mathcal V_1)$:
\begin{itemize}
\item 
$(\lambda\at{t}{1}, \rho\at{t}{1}) :=  (\lambda\at{t+1}{0}, \rho\at{t+1}{0})$ for $t=0, \ldots, T-1$,
\item
$(\lambda\at{T}{1}, \rho\at{T}{1}) := 0$,
\item 
$(\mu\at{t}{1}, \ubar \nu\at{t}{1}, \bar \nu\at{t}{1}, \sigma\at{t}{1})
:=
(\mu\at{t+1}{0}, \ubar \nu\at{t+1}{0}, \bar \nu\at{t+1}{0}, \sigma\at{t+1}{0})
$ for $t=0, \ldots, T-2$,
\item
$(\mu\at{T-1}{1}, \ubar \nu\at{T-1}{1}, \bar \nu\at{T-1}{1}, \sigma\at{T-1}{1}):= 0$.
\end{itemize}
\end{lemma}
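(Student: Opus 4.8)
The plan is to verify directly that the shifted multipliers satisfy every constraint of $\mathbf D_1(\mathcal V_1)$, i.e., the equality constraints~\eqref{eq:dual_x_t}--\eqref{eq:dual_u_t} and the sign constraints~\eqref{eq:dual_nonneg}, all instantiated at $\tau=1$. The observation that makes this painless is the one already highlighted after~\eqref{eq:dual}: the dual feasible set does not depend on the interval defining the primal relaxation, because the bounds $\ubar v\at{t}{\tau}$, $\bar v\at{t}{\tau}$ enter~\eqref{eq:dual} only through the objective~\eqref{eq:dual_objective}. Hence it suffices to check the listed constraints, and one never needs to relate $\mathcal V_1$ to $\mathcal V_0$.

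First I would check the ``interior'' constraints, namely~\eqref{eq:dual_x_t} and~\eqref{eq:dual_u_t} at $\tau=1$ for $t=0,\ldots,T-2$. Substituting $(\lambda\at{t}{1},\rho\at{t}{1}) = (\lambda\at{t+1}{0},\rho\at{t+1}{0})$ and $(\mu\at{t}{1},\ubar\nu\at{t}{1},\bar\nu\at{t}{1},\sigma\at{t}{1}) = (\mu\at{t+1}{0},\ubar\nu\at{t+1}{0},\bar\nu\at{t+1}{0},\sigma\at{t+1}{0})$, each such equality becomes, verbatim, the corresponding constraint of $\mathbf D_0(\mathcal V_0)$ at relative time $t+1 \in \{1,\ldots,T-1\}$, which holds by hypothesis. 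The sign constraints~\eqref{eq:dual_nonneg} for $t=0,\ldots,T-2$ follow in the same way, inheriting nonnegativity from the multipliers of $\mathbf D_0(\mathcal V_0)$.

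Next I would handle the boundary stages. For~\eqref{eq:dual_x_t} at $\tau=1$, $t=T-1$: using $(\lambda\at{T-1}{1},\rho\at{T-1}{1}) = (\lambda\at{T}{0},\rho\at{T}{0})$, $\lambda\at{T}{1}=0$, and $\mu\at{T-1}{1}=0$, the left-hand side collapses to $Q'\rho\at{T}{0} + \lambda\at{T}{0}$, which vanishes precisely by~\eqref{eq:dual_x_T} applied to $\mathbf D_0(\mathcal V_0)$. The new terminal equality~\eqref{eq:dual_x_T} at $\tau=1$ reads $Q'\rho\at{T}{1}+\lambda\at{T}{1}=0$ and holds trivially since both terms are zero. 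Constraint~\eqref{eq:dual_u_t} at $\tau=1$, $t=T-1$ has $\sigma\at{T-1}{1}$, $\lambda\at{T}{1}$, $\mu\at{T-1}{1}$, $\bar\nu\at{T-1}{1}$, $\ubar\nu\at{T-1}{1}$ all equal to zero, so it reduces to $0=0$; likewise~\eqref{eq:dual_nonneg} at $t=T-1$ holds because the zero vector is nonnegative. This exhausts every constraint and completes the verification.

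There is no genuine obstacle — the argument is bookkeeping — but two points deserve care. The index alignment at the last stage: the equality~\eqref{eq:dual_x_t} for $\mathbf D_1$ at $t=T-1$ must be matched with the \emph{separate} terminal equality~\eqref{eq:dual_x_T} of $\mathbf D_0$, not with~\eqref{eq:dual_x_t} of $\mathbf D_0$; and one should recall that $\sigma\at{t}{\tau}$ appears in no equality other than~\eqref{eq:dual_u_t}, so zeroing the terminal $\sigma\at{T-1}{1}$ is harmless. It is also worth remarking, since it motivates Section~\ref{sec:lower_bounds}, that the same time shift sends a certificate of infeasibility of $\mathbf P_0(\mathcal V_0)$ (the degenerate case $\rho\at{t}{0}=\sigma\at{t}{0}=0$ for all $t$) to one of $\mathbf P_1(\mathcal V_1)$, although this is not needed for the lemma as stated, which concerns feasibility only and makes no claim about the dual objective.
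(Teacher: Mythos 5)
Your verification is correct and follows essentially the same route as the paper's proof: substitute the shifted multipliers into the constraints of $\mathbf D_1(\mathcal V_1)$, observe that the interior constraints reduce verbatim to those of $\mathbf D_0(\mathcal V_0)$ at relative time $t+1$, match the $t=T-1$ instance of~\eqref{eq:dual_x_t} with the terminal equality~\eqref{eq:dual_x_T} of $\mathbf D_0(\mathcal V_0)$, and check that the remaining boundary constraints hold trivially with the zero multipliers. Your explicit handling of the index alignment at the last stage is a slightly more careful writeup of the same bookkeeping the paper performs.
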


\begin{proof}
We substitute the candidate solution in the constraints of $\mathbf D_1 (\mathcal V_1)$.
From~\eqref{eq:dual_x_t} we obtain
$Q' \rho\at{t+1}{0} + \lambda\at{t+1}{0} - A' \lambda\at{t+2}{0} + F' \mu\at{t+1}{0} = 0$
for $t = 0, \ldots, T-2$, and 
$Q' \rho\at{T}{0} + \lambda\at{T}{0} = 0$.
These constraints are verified by feasibility of the multipliers at time $\tau=0$.
The constraint~\eqref{eq:dual_x_T} holds trivially, as well as condition~\eqref{eq:dual_u_t} for $t=T-1$.
For $t = 0, \ldots, T-2$, the constraint~\eqref{eq:dual_u_t} becomes
$R' \sigma\at{t+1}{0} - B' \lambda\at{t+2}{0} + G' \mu\at{t+1}{0} + V' (\bar \nu\at{t+1}{0} - \ubar \nu\at{t+1}{0}) = 0$,
which holds again by feasibility of the multipliers at time $\tau=0$.
Nonnegativity of $\mu\at{t}{1}$, $\ubar \nu\at{t}{1}$, and $\bar \nu\at{t}{1}$ is ensured by construction.
\end{proof}

Lemma~\ref{lemma:propagation_dual_feasible_solution} has several implications.
Given a set in $\mathscr V_0^*$ and a dual-feasible solution for the associated QP, we can now equip with feasible multipliers, and hence a lower bound, the related set in $\mathscr V_1^0$.
Since we just assumed feasibility of the multipliers at time step $\tau=0$, Lemma~\ref{lemma:propagation_dual_feasible_solution} applies even if, as it is frequently the case, $\mathbf D_0 (\mathcal V_0)$ is not solved to optimality.
Similarly, if the bound we generate for $\mathbf D_1 (\mathcal V_1)$ is tight enough to prevent its solution within the B\&B at time $\tau=1$, the synthesized dual variables can in turn be propagated to bound the optimal value of a subproblem at time $\tau=2$.
On the other hand, if solving $\mathbf D_1 (\mathcal V_1)$ turns out to be necessary, we can still use the multipliers from Lemma~\ref{lemma:propagation_dual_feasible_solution} to warm start this QP solve.
Clearly, as we shift a dual solution across time steps, the tightness of the implied bound will gradually decay, and a few iterations of the QP solver will eventually be required.
However, this is inevitable: the problem on which we are inferring a bound is increasingly different from the one we actually solved.
Finally, note that Lemma~\ref{lemma:propagation_dual_feasible_solution} holds despite any potential model error $e_0$: the current state $x_1=A x_0 + B u_0 + e_0$ appears in the dual problem $\mathbf D_1 (\mathcal V_1)$ as a cost coefficient and, as such, it does not affect dual feasibility.

The following theorem concerns the tightness of the lower bounds we construct via Lemma~\ref{lemma:propagation_dual_feasible_solution}.

\begin{theorem}
\label{th:propagation_bounds}
Let $\{\lambda\at{t}{0}, \rho\at{t}{0}\}_{t=0}^{T}$ and $\{\mu\at{t}{0}, \ubar \nu\at{t}{0}, \bar \nu\at{t}{0}, \sigma\at{t}{0}\}_{t=0}^{T-1}$ be feasible multipliers for $\mathbf D_0 (\mathcal V_0)$ with cost $\ubar \theta_0 (\mathcal V_0)$.
Define
\begin{subequations}
\label{eq:pi}
\begin{align}
\label{eq:pi_1}
\pi_1 := \ & - | Q x_0 |^2 - | R u_0 |^2,
\\
\label{eq:pi_2}
\pi_2 := \ & | \rho\at{0}{0} / 2- Q x_0 |^2 + | \sigma\at{0}{0} /2 - R u_0 |^2,
\\
\label{eq:pi_3} 
\pi_3 := \ & \nonumber (h - F x_0 - G u_0)' \mu\at{0}{0} + (v_0 - \ubar v\at{0}{0})' \ubar \nu\at{0}{0}
\\
&
+ (\bar v\at{0}{0} - v_0)' \bar \nu\at{0}{0},
\\
\label{eq:pi_4}
\pi_4 := \ & - e_0' \lambda\at{1}{0}.
\end{align}
\end{subequations}
The following is a lower bound on $\theta_1 (\mathcal V_1)$:
\begin{align}
\label{eq:lower_bound_nominal}
\ubar \theta_1 (\mathcal V_1) := \ubar \theta_0 (\mathcal V_0) + \sum_{i=1}^4 \pi_i.
\end{align}
\end{theorem}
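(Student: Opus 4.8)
The plan is to recognize that, granted Lemma~\ref{lemma:propagation_dual_feasible_solution}, the theorem reduces to a bookkeeping identity. The shifted multipliers of Lemma~\ref{lemma:propagation_dual_feasible_solution} are feasible for $\mathbf D_1(\mathcal V_1)$; hence, by weak duality alone, the objective~\eqref{eq:dual_objective} of $\mathbf D_1(\mathcal V_1)$ evaluated at them is a lower bound on $\theta_1(\mathcal V_1)$. The entire content of the theorem is therefore the claim that this value equals $\ubar\theta_0(\mathcal V_0) + \sum_{i=1}^4 \pi_i$, and I would prove it by direct substitution.

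First I would plug the shifted multipliers, together with the shifted bounds $\ubar v\at{t}{1}, \bar v\at{t}{1}$ of~\eqref{eq:shifted_interval}, into the dual objective~\eqref{eq:dual_objective} written for $\tau=1$. Re-indexing $t \mapsto t+1$, the quadratic sum $-\sum_{t=0}^{T}|\rho\at{t}{1}/2|^2$ becomes the time-$0$ sum $-\sum_{t=0}^{T}|\rho\at{t}{0}/2|^2$ plus the leftover term $+|\rho\at{0}{0}/2|^2$, since $\rho\at{T}{1}=0$. Likewise, $-\sum_{t=0}^{T-1}\bigl(|\sigma\at{t}{1}/2|^2 + h'\mu\at{t}{1} + \bar v\at{t}{1}'\bar\nu\at{t}{1} - \ubar v\at{t}{1}'\ubar\nu\at{t}{1}\bigr)$ becomes the corresponding time-$0$ sum plus the leftover term $+\bigl(|\sigma\at{0}{0}/2|^2 + h'\mu\at{0}{0} + \bar v\at{0}{0}'\bar\nu\at{0}{0} - \ubar v\at{0}{0}'\ubar\nu\at{0}{0}\bigr)$; the new terminal stage $t=T-1$ drops out because $\sigma\at{T-1}{1}=\mu\at{T-1}{1}=\ubar\nu\at{T-1}{1}=\bar\nu\at{T-1}{1}=0$ (so that $\bar v\at{T-1}{1}'\bar\nu\at{T-1}{1}=0$ whatever the value of $\bar v\at{T-1}{1}$). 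Finally $-x_1'\lambda\at{0}{1}=-x_1'\lambda\at{1}{0}$, and I substitute $x_1 = A x_0 + B u_0 + e_0$.

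Collecting, the difference between the shifted objective value and $\ubar\theta_0(\mathcal V_0)$ is the sum of the $t=0$, $\tau=0$ quantities $|\rho\at{0}{0}/2|^2 + |\sigma\at{0}{0}/2|^2 + h'\mu\at{0}{0} + \bar v\at{0}{0}'\bar\nu\at{0}{0} - \ubar v\at{0}{0}'\ubar\nu\at{0}{0} + x_0'\lambda\at{0}{0} - (A x_0 + B u_0 + e_0)'\lambda\at{1}{0}$. To turn this into $\sum_{i=1}^4 \pi_i$ I would use dual feasibility of the time-$0$ multipliers at $t=0$: constraint~\eqref{eq:dual_x_t} gives $x_0'(\lambda\at{0}{0}-A'\lambda\at{1}{0}) = -x_0'(Q'\rho\at{0}{0}+F'\mu\at{0}{0})$, and constraint~\eqref{eq:dual_u_t} gives $-u_0'B'\lambda\at{1}{0} = -u_0'(R'\sigma\at{0}{0}+G'\mu\at{0}{0}) - u_0'V'(\bar\nu\at{0}{0}-\ubar\nu\at{0}{0})$, where $V u_0 = v_0$. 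Substituting and completing the square in $\rho\at{0}{0}$ and $\sigma\at{0}{0}$ — $|\rho\at{0}{0}/2|^2 - x_0'Q'\rho\at{0}{0} = |\rho\at{0}{0}/2 - Q x_0|^2 - |Q x_0|^2$, and similarly for $\sigma\at{0}{0}$ — produces $\pi_1$ and $\pi_2$; the $h'\mu\at{0}{0}$, $\bar v\at{0}{0}'\bar\nu\at{0}{0}$, $\ubar v\at{0}{0}'\ubar\nu\at{0}{0}$, $v_0'\ubar\nu\at{0}{0}$ and $v_0'\bar\nu\at{0}{0}$ contributions (together with $-x_0'F'\mu\at{0}{0}-u_0'G'\mu\at{0}{0}$) assemble into $\pi_3$; and the leftover $-e_0'\lambda\at{1}{0}$ is $\pi_4$. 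The inequality $\theta_1(\mathcal V_1) \geq \ubar\theta_1(\mathcal V_1)$ then follows from weak duality.

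The argument is conceptually straightforward — it is a substitution into a linear-quadratic dual followed by one completion of squares and one use of the stationarity constraints. The main obstacle is purely clerical: keeping the two re-indexing shifts, the zeroed terminal multipliers, and above all the sign conventions on $\ubar\nu/\bar\nu$ in~\eqref{eq:dual_objective} and~\eqref{eq:dual_u_t} consistent when matching against the definition of $\pi_3$, where the identity $V u_0 = v_0$ must be invoked.
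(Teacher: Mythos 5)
Your proposal is correct and follows essentially the same route as the paper's proof in Appendix~\ref{sec:proof_theorem}: substitute the shifted multipliers from Lemma~\ref{lemma:propagation_dual_feasible_solution} into the dual objective at $\tau=1$, isolate the leftover $t=0$ terms, eliminate $\lambda\at{0}{0}$ and $\lambda\at{1}{0}$ via the stationarity constraints~\eqref{eq:dual_x_t} and~\eqref{eq:dual_u_t}, and complete the squares in $\rho\at{0}{0}$ and $\sigma\at{0}{0}$. The bookkeeping you describe, including the cancellation of the new terminal-stage terms and the use of $V u_0 = v_0$ in assembling $\pi_3$, matches the paper's computation exactly.
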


\begin{proof}
See Appendix~\ref{sec:proof_theorem}.
\end{proof}

Despite the many terms, Theorem~\ref{th:propagation_bounds} is very informative, and an inspection of the expressions in~\eqref{eq:pi} reveals the following.
We recall that, since we are working with lower bounds, we would like these terms to be positive.
\begin{itemize}
\item[$\pi_1$:]
This term represents the MIQP stage cost for $\tau=0$.
It is nonpositive, but this was expected:
standard MPC arguments show that the value function $\theta_\tau$ can actually decrease at this rate (in the absence of disturbances and as the horizon $T$ tends to infinity).
\item[$\pi_2$:]
Recalling the stationarity conditions~\eqref{eq:stationarity_zwt} from Appendix~\ref{sec:dual_qp}, we notice that this nonnegative term vanishes in case the multipliers $\rho\at{0}{0}$, $\sigma\at{0}{0}$ are optimal for $\mathbf D_0 (\mathcal V_0)$, and the control action $u_0$ (injected in the system at time $\tau=0$) is optimal for the subproblem $\mathbf P_0 (\mathcal V_0)$.
\item[$\pi_3$:]
Because of the feasibility of $u_0$, the condition $\ubar v\at{0}{0} \leq v_0 \leq \bar v\at{0}{0}$ imposed in the construction of $\mathscr V_1^0$, and the nonnegativity of $\mu\at{0}{0}$, $\ubar \nu\at{0}{0}$, $\bar \nu\at{0}{0}$, this term is nonnegative.
If these multipliers are optimal for $\mathbf D_0 (\mathcal V_0)$ and $u_0$ is optimal for $\mathbf P_0 (\mathcal V_0)$, this term vanishes by complementary slackness.
\item[$\pi_4$:]
This term is linear in the model error $e_0$.
It is null in case of a perfect model, while it can have either sign in case of discrepancies.
\end{itemize}
Notably, for a perfect model $e_0=0$, the difference $\ubar \theta_1 (\mathcal V_1) - \ubar \theta_0(\mathcal V_0)$ is bounded below by $\pi_1$, which does not depend on the particular pair $\mathbf P_0(\mathcal V_0)$, $\mathbf P_1(\mathcal V_1)$ of subproblems we are considering.

Together with a better insight into the tightness of the bounds we propagate, Theorem~\ref{th:propagation_bounds} also gives us a sufficient condition for the infeasibility of $\mathbf P_1 (\mathcal V_1)$.
The next corollary shows how a certificate of infeasibility for $\mathbf P_0 (\mathcal V_0)$ can be transformed into a certificate for $\mathbf P_1 (\mathcal V_1)$.

\begin{corollary}
\label{cor:propagation_certificate_infeasibility}
Let 
$\{\lambda\at{t}{0}, \rho\at{t}{0}\}_{t=0}^{T}$ and $\{\mu\at{t}{0}, \ubar \nu\at{t}{0}, \bar \nu\at{t}{0}, \sigma\at{t}{0}\}_{t=0}^{T-1}$ 
be a certificate of infeasibility for $\mathbf P_0 (\mathcal V_0)$ with dual objective $\ubar \theta_0 (\mathcal V_0)$.
Then, the set of dual variables defined in Lemma~\ref{lemma:propagation_dual_feasible_solution} is a certificate of infeasibility for $\mathbf P_1 (\mathcal V_1)$ as long as $e_0$ lies in the open halfspace
\begin{align}
\label{eq:E_0}
\lambda\at{1}{0}' e_0 < \ubar \theta_0 (\mathcal V_0) + \pi_3.
\end{align}
Moreover, this inequality is always verified if $e_0=0$.
\end{corollary}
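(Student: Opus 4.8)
The plan is to verify directly that the shifted multipliers of Lemma~\ref{lemma:propagation_dual_feasible_solution} satisfy the three properties that, according to the discussion at the end of Section~\ref{sec:duality_subproblem}, characterize a certificate of infeasibility for $\mathbf P_1(\mathcal V_1)$: they are feasible for $\mathbf D_1(\mathcal V_1)$, they have $\rho\at{t}{1} = 0$ and $\sigma\at{t}{1} = 0$ for all $t$, and their dual cost $\ubar\theta_1(\mathcal V_1)$ is strictly positive. Dual feasibility is exactly the content of Lemma~\ref{lemma:propagation_dual_feasible_solution}, so nothing is needed there. The vanishing of the $\rho\at{t}{1}, \sigma\at{t}{1}$ is immediate from the definition of the shift: for $t \leq T-2$ we have $(\rho\at{t}{1}, \sigma\at{t}{1}) = (\rho\at{t+1}{0}, \sigma\at{t+1}{0}) = 0$, since the original multipliers form a certificate of infeasibility for $\mathbf P_0(\mathcal V_0)$ and hence already satisfy $\rho\at{t}{0} = \sigma\at{t}{0} = 0$ for all $t$, while the terminal components $\rho\at{T-1}{1}, \sigma\at{T-1}{1}, \rho\at{T}{1}$ are set to zero by construction. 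So only the sign of the dual cost remains to be checked.

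For the cost, I would invoke Theorem~\ref{th:propagation_bounds}, which tells us that evaluating the dual objective~\eqref{eq:dual_objective} at the shifted multipliers gives $\ubar\theta_1(\mathcal V_1) = \ubar\theta_0(\mathcal V_0) + \sum_{i=1}^4 \pi_i$. Because $\rho\at{0}{0} = 0$ and $\sigma\at{0}{0} = 0$, the two terms $\pi_1$ and $\pi_2$ in~\eqref{eq:pi} reduce to $\pi_1 = -|Q x_0|^2 - |R u_0|^2$ and $\pi_2 = |Q x_0|^2 + |R u_0|^2$, so $\pi_1 + \pi_2 = 0$. Hence $\ubar\theta_1(\mathcal V_1) = \ubar\theta_0(\mathcal V_0) + \pi_3 + \pi_4 = \ubar\theta_0(\mathcal V_0) + \pi_3 - \lambda\at{1}{0}' e_0$, and this quantity is strictly positive precisely when $\lambda\at{1}{0}' e_0 < \ubar\theta_0(\mathcal V_0) + \pi_3$, i.e., exactly when~\eqref{eq:E_0} holds. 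Under this condition the shifted multipliers thus constitute a certificate of infeasibility, from which the infeasibility of $\mathbf P_1(\mathcal V_1)$ follows by the homogeneity/scaling argument recalled at the end of Section~\ref{sec:duality_subproblem}.

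For the last claim I would observe that when $e_0 = 0$ the left-hand side of~\eqref{eq:E_0} vanishes, while the right-hand side is strictly positive: indeed $\ubar\theta_0(\mathcal V_0) > 0$ because the original multipliers are a certificate of infeasibility for $\mathbf P_0(\mathcal V_0)$, and $\pi_3 \geq 0$ by the argument already recorded right after Theorem~\ref{th:propagation_bounds} --- the applied action $u_0$ satisfies $(x_0, u_0) \in \mathcal D$, so $h - F x_0 - G u_0 \geq 0$; the interval $\mathcal V_0$ was retained in the construction of $\mathscr V_1^0$, so~\eqref{eq:drop_interval_from_cover} gives $\ubar v\at{0}{0} \leq v_0 \leq \bar v\at{0}{0}$; and $\mu\at{0}{0}, \ubar\nu\at{0}{0}, \bar\nu\at{0}{0} \geq 0$. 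Hence~\eqref{eq:E_0} is satisfied automatically. There is no real difficulty in this proof; the only two points to handle with care are that Theorem~\ref{th:propagation_bounds} be used in the form that gives the exact value of the dual objective at the shifted solution (so that we may conclude unboundedness of $\mathbf D_1(\mathcal V_1)$), and that the feasibility of $u_0$ entering $\pi_3 \geq 0$ is exactly the condition imposed when assembling the initial cover $\mathscr V_1^0$.
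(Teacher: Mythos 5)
Your proof is correct and follows essentially the same route as the paper's: check dual feasibility via Lemma~\ref{lemma:propagation_dual_feasible_solution}, note $\rho\at{t}{1}=\sigma\at{t}{1}=0$ by construction, use Theorem~\ref{th:propagation_bounds} with $\pi_1+\pi_2=0$ to reduce strict positivity of the dual cost to~\eqref{eq:E_0}, and conclude the $e_0=0$ case from $\ubar\theta_0(\mathcal V_0)>0$ and $\pi_3\geq 0$. Your explicit remark that Theorem~\ref{th:propagation_bounds} must be read as giving the exact value of the dual objective at the shifted multipliers (not merely a lower bound) is a useful clarification that the paper's proof leaves implicit.
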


\begin{proof}
We check the definition of a certificate of infeasibility from Section~\ref{sec:duality_subproblem}.
In Lemma~\ref{lemma:propagation_dual_feasible_solution} we  have shown dual feasibility of these multipliers, and, by construction, we have  $\rho\at{t}{1}= 0$ and $\sigma\at{t}{1} =0$ for all $t$.
We are then left to verify positivity of their dual cost $\ubar \theta_1 (\mathcal V_1)$.
Using Theorem~\ref{th:propagation_bounds}, we have $\pi_1 +  \pi_2 = 0$ and $\ubar \theta_1 (\mathcal V_1) = \ubar \theta_0 (\mathcal V_0) + \pi_3 + \pi_4$, which leads to~\eqref{eq:E_0}.
Finally, since $\pi_3 \geq 0$ and $\ubar \theta_0 (\mathcal V_0)> 0$, $e_0=0$ always satisfies this inequality.
\end{proof}

Corollary~\ref{cor:propagation_certificate_infeasibility} completes the tools we need to equip with lower bounds the initial cover $\mathscr V_1^0$.
For any set $\mathcal V_0$ in $\mathscr V_0^*$ that corresponds to an infeasible QP, we can now associate a halfspace in the error space inside which the descendant problem $\mathbf P_1 (\mathcal V_1)$ will also be infeasible.
Moreover, since the set defined by~\eqref{eq:E_0} contains the origin, in case of an exact MLD model, infeasibility of the descendant subproblem is guaranteed.
As for Lemma~\ref{lemma:propagation_dual_feasible_solution}, this process can be iterated and the same certificate propagated across multiple time steps.

Except for the computation of $\pi_4$, which only amounts to $| \mathscr V_1^0 |$ scalar products in $\mathbb R^{n_x}$, all the steps in this section can be performed before the measurement of the next state $x_1$, leading to a negligible time delay in the solution of $\mathbf P_1$.

\section{Propagation of an Upper Bound}
\label{sec:upper_bound}
The last element we need to warm start the solution of the MIQP $\mathbf P_1$ is an upper bound $\bar \theta_1^0$ on its optimal value.
The natural way to address this problem is to shift the $\varepsilon$-optimal solution of $\mathbf P_0$ and synthesize a feasible solution for $\mathbf P_1$.

The issue of \emph{persistent feasibility} has been widely studied in hybrid MPC (see~\cite[Section~3.5]{lazar2006model} or~\cite[Sections~12.3.1 and~17.8.1]{borrelli2017predictive}).
The standard approach consists in designing the MPC problem so that the terminal state $x\at{T}{\tau}$ lies in a control-invariant set $\mathcal X$ which contains the origin.
More specifically, for all $x$ in $\mathcal X$ there must exist a control action $u \in \mathbb R^{n_u} \times \{0,1\}^{m_u}$ such that  $(x,u) \in \mathcal D$ and $Ax+Bu \in \mathcal X$.
When this is the case, the existence of an input $u\at{T-1}{1}$, such that the control sequence $\{ u\at{1}{0}^*, \ldots, u\at{T-1}{0}^*, u\at{T-1}{1} \}$ is feasible for $\mathbf P_1$, is guaranteed.
The computation of the upper bound $\bar \theta_1^0$ then amounts, in the worst case, to the solution of $2^{m_u}$ QPs.

There are two standard ways to fulfill the requirement $x\at{T}{\tau} \in \mathcal X$, both with well-known pros and cons~\cite{mayne2000constrained}.
The first is to make the MPC horizon $T$ long enough so that the invariance condition is spontaneously verified.
The second is to enforce it explicitly as a terminal constraint in our MIQP.
While the first approach complies with the problem formulation from~\eqref{eq:miqp}, as already mentioned,  the implementation of the second requires a more versatile problem statement which we will consider in Section~\ref{sec:terminal}.

In contrast to Section~\ref{sec:lower_bounds}, here we can only generate upper bounds for a perfect model $e_0 = 0$.
A potential workaround would be to consider a robust version of problem~\eqref{eq:miqp}, where persistent feasibility is guaranteed despite disturbances of bounded magnitude.
This, however, would lead to substantially harder optimization problems (see, e.g.,~\cite{kerrigan2002optimal} or~\cite[Chapter~5]{lazar2006model}).

\section{Asymptotic Analysis}
\label{sec:asymptotic_analysis}
We proceed in the analysis of the warm-start algorithm studying its asymptotic behavior as the horizon $T$ grows.
In doing so, we assume the MLD model to be perfect ($e_0 = 0$).

In order to link the lower bounds from Theorem~\ref{th:propagation_bounds} with the decrease rate of the cost to go $\theta_\tau$, we take advantage of the following observation.

\begin{lemma}
\label{lemma:lower_bound_theta1}
Consider a perfect MLD model~\eqref{eq:mld}.
For any control action $u_0 \in \mathbb R^{n_u} \times \{0,1\}^{m_u}$ applied to the system at time $\tau=0$ such that $(x_0, u_0) \in \mathcal D$, we have $\theta_1 \geq \theta_0 - | Q x_0 |^2 - | R u_0 |^2$.
\end{lemma}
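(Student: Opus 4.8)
The plan is to establish $\theta_1 \geq \theta_0 - |Qx_0|^2 - |Ru_0|^2$ by a dynamic-programming / value-function argument, exploiting the receding-horizon structure of problem~\eqref{eq:miqp}. The key idea is that an optimal solution of $\mathbf{P}_1$ (with horizon $t=0,\ldots,T$) can be \emph{prepended} with the pair $(x_0, u_0)$ to form a candidate solution for $\mathbf{P}_0$ over the extended index set $t=0,\ldots,T+1$, but truncated back to horizon $T$. More carefully: since we are in the perfect-model case, $x_1 = Ax_0 + Bu_0$, so if $\{u\at{t}{1}^*\}_{t=0}^{T-1}$, $\{x\at{t}{1}^*\}_{t=0}^{T}$ is optimal for $\mathbf{P}_1$ with cost $\theta_1$, then the sequence obtained by setting $x\at{0}{0} := x_0$, $u\at{0}{0} := u_0$, and $x\at{t+1}{0} := x\at{t}{1}^*$, $u\at{t+1}{0} := u\at{t}{1}^*$ for $t = 0, \ldots, T-1$ is feasible for $\mathbf{P}_0$ \emph{except} that it has one extra stage: it runs $x$ over $t=0,\ldots,T+1$ and $u$ over $t=0,\ldots,T$.

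First I would check feasibility of this prepended-and-shifted sequence against the constraints~\eqref{eq:initial_conditions}--\eqref{eq:miqp_binaries} of $\mathbf{P}_0$: the initial condition holds by construction, the dynamics~\eqref{eq:miqp_dynamics} hold for $t=0$ because $x_1 = Ax_0+Bu_0$ (perfect model) and for $t\geq 1$ because they held for $\mathbf{P}_1$, the domain constraint~\eqref{eq:miqp_constraints} holds at $t=0$ by the hypothesis $(x_0,u_0)\in\mathcal{D}$ and elsewhere by optimality for $\mathbf{P}_1$, and the binary constraint~\eqref{eq:miqp_binaries} holds at $t=0$ since $u_0 \in \mathbb{R}^{n_u}\times\{0,1\}^{m_u}$ implies $Vu_0 \in \{0,1\}^{m_u}$. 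The only wrinkle is the horizon mismatch. To fix it, I would simply \emph{drop} the final stage: the restriction of the constructed sequence to $t=0,\ldots,T$ (for the state) and $t=0,\ldots,T-1$ (for the input) is exactly feasible for $\mathbf{P}_0$ as stated, and its cost is $|Qx_0|^2 + |Ru_0|^2 + \sum_{t=1}^{T}|Qx\at{t}{1}^*|^2 + \sum_{t=1}^{T-1}|Ru\at{t}{1}^*|^2$ using the re-indexing $x\at{t}{0}=x\at{t-1}{1}^*$. Since this cost is an upper bound on $\theta_0$, and since $\sum_{t=1}^{T}|Qx\at{t}{1}^*|^2 + \sum_{t=1}^{T-1}|Ru\at{t}{1}^*|^2 \leq \sum_{t=0}^{T}|Qx\at{t}{1}^*|^2 + \sum_{t=0}^{T-1}|Ru\at{t}{1}^*|^2 = \theta_1$ (we only dropped nonnegative terms, and in fact dropped nothing that we need — the dropped $t=0$ terms of $\mathbf{P}_1$ are exactly absorbed into the shift), we get $\theta_0 \leq |Qx_0|^2 + |Ru_0|^2 + \theta_1$, which rearranges to the claim.

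The main subtlety — and the step I would be most careful about — is the bookkeeping of horizon indices and making sure no term is double-counted or lost. In particular one must confirm that discarding the last stage of the extended sequence is legitimate: the cost contribution of stage $T+1$ (state) and stage $T$ (input) of the extended sequence is nonnegative, so dropping it only decreases the cost, which is the correct direction for an upper bound on $\theta_0$. Also worth a sentence: the argument genuinely uses the perfect-model assumption only through $x_1 = Ax_0 + Bu_0$; with $e_0 \neq 0$ the prepended dynamics at $t=0$ would be violated. If $\mathbf{P}_1$ happens to be infeasible ($\theta_1 = \infty$) the inequality is vacuous, and if $\mathbf{P}_0$ is infeasible ($\theta_0 = \infty$) then so is $\mathbf{P}_1$ by a symmetric argument (or the statement is interpreted in $\mathbb{R}_{\geq 0}\cup\{\infty\}$), so these edge cases cause no trouble. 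Overall this is a short argument; the only "work" is writing the index shift cleanly.
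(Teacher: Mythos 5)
Your proof is correct and follows essentially the same route as the paper's: both construct a feasible candidate for $\mathbf P_0$ by prepending $(x_0,u_0)$ to the optimal solution of $\mathbf P_1$ and absorb the horizon mismatch by dropping a nonnegative terminal stage (the paper phrases this via an intermediate ``shortened $\mathbf P_1$'' with value $\ubar\theta_1 \leq \theta_1$, you truncate the prepended sequence directly). The only blemish is a harmless off-by-one in your cost formula --- after re-indexing, the terms omitted from $\theta_1$ are the terminal ones ($t=T$ for the state, $t=T-1$ for the input), not the $t=0$ terms --- but since either subset-sum of nonnegative terms is bounded by $\theta_1$, the conclusion stands.
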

\begin{proof}
Let $\ubar \theta_1 \in \mathbb R_{\geq 0} \cup \{\infty\}$ be the optimal value of the problem we obtain by shortening the horizon of $\mathbf P_1$ by one time step.
Clearly, $\ubar \theta_1 \leq \theta_1$.
On the other hand, we must also have $\theta_0 \leq \ubar \theta_1 + | Q x_0 |^2 + | R u_0 |^2$.
In fact, if this was not true, prepending $u_0$ to the optimal controls from the shortened $\mathbf P_1$ we would get a solution for $\mathbf P_0$ with cost lower than $\theta_0$, a contradiction.
The lemma follows by chaining these two inequalities.
\end{proof}

The following theorems can be seen as ``sanity checks'' for the asymptotic behavior of the warm-start algorithm as $T$ tends to infinity.
More specifically, let $\mathcal V_0^* \in \mathscr V_0^*$ be the set which contains the $\varepsilon$-optimal binary assignment found via B\&B at time $\tau=0$, and denote with $\mathcal V_1^*$ its descendant through the procedure presented in Section~\ref{sec:initial_cover}.
We show that $\mathcal V_1^*$ must contain a binary assignment which is $\varepsilon$-optimal for $\mathbf P_1$.
Moreover, $\varepsilon$-optimality of this assignment is directly proved by the initial cover $\mathscr V_1^0$ from Section~\ref{sec:initial_cover}, equipped with the lower bounds from Theorem~\ref{th:propagation_bounds}.
This formalizes the intuition that, as the horizon grows and the MPC policy tends to be stationary, the warm-started B\&B should only reoptimize the final stage of the trajectory.

\begin{theorem}
\label{th:recursive_cover}
Consider a perfect MLD model~\eqref{eq:mld}, and let the horizon $T$ go to infinity.
The set $\mathcal V_1^*$ contains a binary-feasible assignment for $\mathbf P_1$ with cost $\theta_1^* \leq \theta_1  + \varepsilon$.
\end{theorem}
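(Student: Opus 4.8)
The plan is to show that the descendant set $\mathcal V_1^*$ contains the binary assignment obtained by shifting the $\varepsilon$-optimal assignment of $\mathbf P_0$ one step forward in time and appending an arbitrary (but feasible) binary vector in the new terminal stage; then to argue that, as $T\to\infty$, this shifted-and-extended assignment is $\varepsilon$-optimal for $\mathbf P_1$. First I would set up notation: let $(v\at{t}{0}^*)_{t=0}^{T-1}$ be the $\varepsilon$-optimal binaries of $\mathbf P_0$, with $v_0 := v\at{0}{0}^*$ the applied action, and note that $\mathcal V_0^*\in\mathscr V_0^*$ is the leaf containing this assignment. By the construction in Section~\ref{sec:initial_cover}, since $\ubar v\at{0}{0}\le v_0\le\bar v\at{0}{0}$ holds for $\mathcal V_0^*$ (it contains $v_0$), $\mathcal V_0^*$ is retained and its shifted image $\mathcal V_1^*$ is in $\mathscr V_1^0$. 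Membership is then immediate: the assignment $(v\at{1}{0}^*,\ldots,v\at{T-1}{0}^*, w)$ lies in $\mathcal V_1^*$ for any $w\in\{0,1\}^{m_u}$, because the first $T-1$ blocks of $\mathcal V_1^*$ are exactly $[\ubar v\at{t}{0},\bar v\at{t}{0}]_{t=1}^{T-1}$ and the last block is the full cube $[0,1]^{m_u}$.

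The substantive part is producing a binary-feasible point of $\mathbf P_1$ with that binary assignment and cost at most $\theta_1+\varepsilon$. Here I would invoke Lemma~\ref{lemma:lower_bound_theta1}: with the perfect model and $(x_0,u_0)\in\mathcal D$, we have $\theta_1 \ge \theta_0 - |Qx_0|^2 - |Ru_0|^2$. On the other side, I construct an explicit feasible trajectory for $\mathbf P_1$ by taking the tail $\{x\at{t}{0}^*\}_{t=1}^{T}$, $\{u\at{t}{0}^*\}_{t=1}^{T-1}$ of the optimal solution of $\mathbf P_0$ (which is consistent with $x_1 = Ax_0+Bu_0$ since $e_0=0$), and then, as in Section~\ref{sec:upper_bound}, extending by one step using the control-invariant terminal set / long-horizon argument so that the terminal constraint at relative time $T$ is met — this is where the persistent-feasibility hypothesis implicit in "let $T\to\infty$'' is used, so that for large $T$ the invariance condition is spontaneously satisfied and the one-step extension exists. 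The cost of this feasible point equals $\theta_0 - |Qx_0|^2 - |Ru_0|^2$ plus the cost contributions $|Qx\at{T}{0}^*|^2$, $|Qx\at{T}{1}^*|^2 + |Ru\at{T-1}{1}|^2$ of the reshuffled terminal stages, and I would argue that as $T\to\infty$ these terminal-stage contributions vanish (the optimal trajectories of a stabilizing MPC converge to the origin, so $x\at{T}{0}^*\to 0$ and the extending input can be taken to go to zero as well). Hence in the limit the constructed feasible point has cost $\le \theta_0 - |Qx_0|^2 - |Ru_0|^2 + o(1) \le \theta_1 + o(1)$, and since the $\varepsilon$-optimal solution of $\mathbf P_0$ is only within $\varepsilon$ of $\theta_0$, one obtains cost $\le \theta_1 + \varepsilon$ for $T$ large enough. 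Because this feasible point is binary and its binaries lie in $\mathcal V_1^*$, it certifies the claim.

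The main obstacle I anticipate is the $T\to\infty$ bookkeeping around the terminal stage: cleanly arguing that the terminal-stage cost terms become negligible requires either a stability/contractivity property of the closed-loop or of the optimal open-loop trajectories, or an explicit terminal cost — neither of which is present in the bare formulation~\eqref{eq:miqp}. I expect the authors to either assume the terminal state enters a control-invariant set near the origin (as invoked in Section~\ref{sec:upper_bound}) or to phrase the limit carefully so that the $\varepsilon$-slack absorbs the residual terminal cost; the rest (membership of the shifted assignment in $\mathcal V_1^*$, feasibility of the shifted tail under $e_0=0$, and the arithmetic tying $\theta_0$, $\theta_1$, and the stage cost together via Lemma~\ref{lemma:lower_bound_theta1}) is routine.
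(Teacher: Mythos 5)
Your overall route is the same as the paper's: retain the leaf containing $v_0$, shift its binaries into $\mathcal V_1^*$, shift the tail of the $\varepsilon$-optimal trajectory of $\mathbf P_0$ to build a feasible point of $\mathbf P_1$, extend it by one step via a control-invariant terminal set, and close with Lemma~\ref{lemma:lower_bound_theta1}. The membership argument and the use of Lemma~\ref{lemma:lower_bound_theta1} are exactly right.

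The one step that does not close as written is the terminal-stage bookkeeping, which you correctly flag as the main obstacle but then resolve too loosely. Treating the appended stage's cost as an $o(1)$ residual gives only $\theta_1^* \leq \theta_0 + \varepsilon - |Qx_0|^2 - |Ru_0|^2 + o(1) \leq \theta_1 + \varepsilon + o(1)$, which is strictly weaker than the claimed $\theta_1^* \leq \theta_1 + \varepsilon$ for any finite $T$; and the alternative you mention, letting the $\varepsilon$-slack absorb the residual, is not available because $\varepsilon$ is fixed and already fully spent on the suboptimality of the $\mathbf P_0$ solution. The paper's fix is sharper than ``the terminal state converges to the origin'': as $T \to \infty$, finiteness of the cost of any feasible solution of $\mathbf P_0$ forces its terminal state into a control-invariant set $\mathcal X \subseteq \ker(Q)$ that is invariant under inputs $u$ with $Ru = 0$ and $(x,u) \in \mathcal D$. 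With this, the appended stage contributes \emph{exactly} zero cost (both $|Q x\at{T-1}{1}|^2$, $|R u\at{T-1}{1}|^2$, and $|Q x\at{T}{1}|^2$ vanish identically), so the shifted solution has cost exactly $\theta_1^* = \theta_0^* - |Qx_0|^2 - |Ru_0|^2$, and the chain $\theta_1^* \leq \theta_0 + \varepsilon - |Qx_0|^2 - |Ru_0|^2 \leq \theta_1 + \varepsilon$ closes with no residual. You need this zero-cost-invariance observation (not merely convergence to the origin) to obtain the stated bound.
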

\begin{proof}
As $T$ goes to infinity, the terminal state $x\at{T}{0}$ of any feasible solution for $\mathbf P_0$ must lie in a control-invariant set $\mathcal X$ within which cost is not accumulated.
More precisely, $ \mathcal X \subseteq \mathrm{ker} (Q)$ and for all $x \in \mathcal X$ there must exist a $u \in \mathbb R^{n_u} \times \{0,1\}^{m_u}$ such that $R u =0$, $(x, u) \in \mathcal D$, and $Ax + Bu \in \mathcal X$.
Thus, the $\varepsilon$-optimal solution of $\mathbf P_0$ with cost $\theta_0^* \leq \theta_0 + \varepsilon$ can be shifted in time to synthesize a feasible solution for $\mathbf P_1$ with cost $\theta_1^* := \theta_0^* - | Q x_0 |^2 - | R u_0 |^2 \leq \theta_0 + \varepsilon - | Q x_0 |^2 - | R u_0 |^2$.
The binaries of the synthesized solution belong to $\mathcal V_1^*$ and, using Lemma~\ref{lemma:lower_bound_theta1}, we get $\theta_1^* \leq \theta_1 + \varepsilon$.
\end{proof}

\begin{theorem}
\label{th:recursive_bound}
Let the assumptions of Theorem~\ref{th:recursive_cover} hold, and let $\theta_1^*$ be defined as in its proof.
The bounds from Theorem~\ref{th:propagation_bounds} verify the condition $\ubar \theta_1 (\mathcal V_1) \geq \theta_1^*  - \varepsilon$ for all $\mathcal V_1$ in $\mathscr V_1^0$.
\end{theorem}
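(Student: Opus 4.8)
The plan is to combine the algebraic identity for $\ubar \theta_1(\mathcal V_1)$ furnished by Theorem~\ref{th:propagation_bounds} with the stopping criterion of the B\&B solver that produced $\mathscr V_0^*$ and $\theta_0^*$. Fix an arbitrary $\mathcal V_1 \in \mathscr V_1^0$ and let $\mathcal V_0 \in \mathscr V_0^*$ be its unique parent under the shifting construction of Section~\ref{sec:initial_cover}; such a parent exists because every element of $\mathscr V_1^0$ is obtained from a retained leaf of $\mathscr V_0^*$. Write $\ubar \theta_0(\mathcal V_0)$ for the lower bound attached to $\mathcal V_0$ at the end of the B\&B run at time $\tau=0$, and $\ubar \theta_1(\mathcal V_1)$ for the propagated bound of~\eqref{eq:lower_bound_nominal}.

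First I would specialize Theorem~\ref{th:propagation_bounds} to the perfect-model case. Since $e_0 = 0$, the term $\pi_4$ vanishes, so $\ubar \theta_1(\mathcal V_1) = \ubar \theta_0(\mathcal V_0) + \pi_1 + \pi_2 + \pi_3$. The discussion following Theorem~\ref{th:propagation_bounds} already records that $\pi_2 \geq 0$, being a sum of squared Euclidean norms, and that $\pi_3 \geq 0$, by feasibility of $u_0$ for $\mathbf P_0(\mathcal V_0)$, the inclusion $\ubar v\at{0}{0} \leq v_0 \leq \bar v\at{0}{0}$ enforced in~\eqref{eq:drop_interval_from_cover}, and the nonnegativity of $\mu\at{0}{0}$, $\ubar \nu\at{0}{0}$, $\bar \nu\at{0}{0}$. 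Dropping these two nonnegative terms and using $\pi_1 = - | Q x_0 |^2 - | R u_0 |^2$ yields $\ubar \theta_1(\mathcal V_1) \geq \ubar \theta_0(\mathcal V_0) - | Q x_0 |^2 - | R u_0 |^2$.

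Next I would lower-bound $\ubar \theta_0(\mathcal V_0)$ using the B\&B termination rule. The algorithm halts at some iteration $i$ only when~\eqref{eq:convergence_bb} fails for every set of $\mathscr V_0^i = \mathscr V_0^*$; hence $\ubar \theta_0(\mathcal V_0) \geq \bar \theta_0^i - \varepsilon = \theta_0^* - \varepsilon$, and this holds uniformly over $\mathcal V_0 \in \mathscr V_0^*$ --- including leaves whose subproblem is infeasible, for which either $\ubar \theta_0(\mathcal V_0) = \infty$ (it was resolved by a certificate of infeasibility) or the same inequality follows directly from the stopping test (it was pruned without being solved). Substituting, $\ubar \theta_1(\mathcal V_1) \geq \theta_0^* - \varepsilon - | Q x_0 |^2 - | R u_0 |^2 = \theta_1^* - \varepsilon$, with $\theta_1^*$ precisely the quantity defined in the proof of Theorem~\ref{th:recursive_cover}. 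Since $\mathcal V_1$ was arbitrary, this is the claim.

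I do not anticipate a serious obstacle: the delicate points are only (i) confirming that the $\varepsilon$-slack inequality $\ubar \theta_0(\mathcal V_0) \geq \theta_0^* - \varepsilon$ is genuinely available for \emph{every} leaf of the terminal cover --- which is exactly what the B\&B stopping criterion guarantees --- and (ii) observing that discarding $\pi_2$ and $\pi_3$ uses nothing more than $e_0 = 0$, so that the horizon hypothesis $T \to \infty$ of Theorem~\ref{th:recursive_cover} enters this statement only through the definition of $\theta_1^*$ and not through the estimate itself.
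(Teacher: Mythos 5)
Your proposal is correct and follows essentially the same route as the paper's proof: drop the nonnegative terms $\pi_2,\pi_3$ and the vanishing $\pi_4$ to get $\ubar \theta_1(\mathcal V_1) \geq \ubar \theta_0(\mathcal V_0) + \pi_1$, then invoke the B\&B termination condition~\eqref{eq:convergence_bb} to obtain $\ubar \theta_0(\mathcal V_0) \geq \theta_0^* - \varepsilon$ uniformly over $\mathscr V_0^*$, and combine. Your additional remarks on infeasible leaves and on where the $T\to\infty$ hypothesis actually enters are accurate but not needed beyond what the paper records.
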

\begin{proof}
Consider a generic set $\mathcal V_1 \in \mathscr V_1^0$ and its ancestor $\mathcal V_0 \in \mathscr V_0^*$.
Since $\pi_2$ and $\pi_3$ from Theorem~\ref{th:propagation_bounds} are nonnegative and $\pi_4 = 0$ by assumption, we have $\ubar \theta_1 (\mathcal V_1) \geq \ubar \theta_0 (\mathcal V_0) + \pi_1$.
By convergence of the B\&B at time $\tau = 0$, we have $ \ubar \theta_0 (\mathcal V_0) \geq \theta_0^* - \varepsilon$ for all $\mathcal V_0$ in $\mathscr V_0^*$ (see condition~\eqref{eq:convergence_bb}).
These imply $\theta_1^* := \theta_0^* + \pi_1 \leq \ubar \theta_0 (\mathcal V_0) + \varepsilon + \pi_1 \leq \ubar \theta_1 (\mathcal V_1 ) + \varepsilon$ for all $\mathcal V_1$ in $\mathscr V_1^0$.
\end{proof}

\section{Terminal Penalties and Constraints}
\label{sec:terminal}
The main limitation of problem statement~\eqref{eq:miqp} is its incompatibility with terminal penalties and constraints.
In this section, we extend the warm-start algorithm to cost functions and constraints which vary with the relative time $t$.
First, we consider time-dependent weight matrices $Q_t$ and $R_t$ in the objective~\eqref{eq:miqp_objective}.
Once again, we make no assumptions on the rank of these matrices.
Then we replace the constraint set $\mathcal D$ in~\eqref{eq:miqp_constraints} with the time-dependent polyhedron $\mathcal D_t := \{(x,u) \mid F_t x + G_t u \leq h_t \}$, which is assumed to contain the origin and to be contained in $\mathcal D$.
In this wider framework, terminal penalties can be enforced by modulating the value of $Q_T$ and terminal constraints via a suitable definition of $\mathcal D_{T-1}$.
Note that, a polyhedral constraint on $x\at{T}{\tau}$ maps to a polyhedral constraint on $(x\at{T-1}{\tau}, u\at{T-1}{\tau})$ via the dynamics~\eqref{eq:miqp_dynamics}.

Clearly, in these more general settings, the asymptotic analysis from Section~\ref{sec:asymptotic_analysis}, which was based on the limiting stationarity of the MPC policy, might not hold.
In addition, in case of wild variations of the problem data $Q_t$, $R_t$, $\mathcal D_t$  with the relative time $t$, we expect a warm start generated by shifting the previous solution to be fairly ineffective.
However, as we show in this section, our algorithm deals with these issues very transparently, propagating dual bounds that are parametric in the variations of these problem data.

\subsection{Stage Cost Varying with the Relative Time}
\label{sec:assumption_cost}
We start discussing the implications of time-dependent weight matrices $Q_t$ and $R_t$.
We do so under the following assumption which, in words, requires the weight matrices for time $t+1$ to penalize only the state and input entries that are also penalized at stage $t$.

\begin{assumption}
\label{ass:row_space_cost}
The row space of $Q_t$ contains the row space of $Q_{t+1}$ for $t= 0, \ldots, T-1$, and the row space of $R_t$ contains the row space of $R_{t+1}$ for $t= 0, \ldots, T-2$.
\end{assumption}

Out of the three components of our warm-start procedure, only the propagation of lower bounds (discussed in Section~\ref{sec:lower_bounds}) is affected by the time dependency of $Q_t$ and $R_t$.
Here, we extend this component starting from Lemma~\ref{lemma:propagation_dual_feasible_solution}, and then considering Theorem~\ref{th:propagation_bounds} and Corollary~\ref{cor:propagation_certificate_infeasibility}.

\subsubsection{Modifications to Lemma~\ref{lemma:propagation_dual_feasible_solution}}
\label{sec:on_lemma_1_cost}

Following the steps from Appendix~\ref{sec:dual_qp}, the dual problem~\eqref{eq:dual} can be easily adjusted to comply with time-varying weights.
Constraints~\eqref{eq:dual_x_t} and~\eqref{eq:dual_u_t} require the substitution of $Q$ and $R$ with $Q_t$ and $R_t$, respectively.
Constraint~\eqref{eq:dual_x_T} now depends on $Q_T$ instead of $Q$.
This modification breaks the shifting procedure presented in Lemma~\ref{lemma:propagation_dual_feasible_solution}.
To restore it, we redefine $\rho\at{t}{1}$ for $t= 0, \ldots, T-1$ and $\sigma\at{t}{1}$ for $t= 0, \ldots, T-2$, explicitly enforcing the conditions $Q_t' \rho\at{t}{1} = Q_{t+1}' \rho\at{t+1}{0}$  and $R_t' \sigma\at{t}{1} = R_{t+1}' \sigma\at{t+1}{0}$.
Furthermore, among all the solutions of these linear systems, we select the ones that maximize the lower bound $\ubar \theta_1 (\mathcal V_1)$ or, equivalently,  minimize $|\rho\at{t}{1}|^2$ and $|\sigma\at{t}{1}|^2$.
This choice leads to two quadratic optimization problems, which, under Assumption~\ref{ass:row_space_cost}, are always feasible and admit closed-form solution:
\begin{subequations}
\label{eq:rho_sigma}
\begin{align}
& \rho\at{t}{1} := (Q_t')^+ Q_{t+1}' \rho\at{t+1}{0}, && t= 0, \ldots, T-1,\\
& \sigma\at{t}{1} := (R_t')^+ R_{t+1}' \sigma\at{t+1}{0}, && t= 0, \ldots, T-2.
\end{align}
\end{subequations}

\subsubsection{Modifications to Theorem~\ref{th:propagation_bounds} and Corollary~\ref{cor:propagation_certificate_infeasibility}}

Theorem~\ref{th:propagation_bounds} can be adapted to the time dependency of $Q_t$ and $R_t$ by retracing the steps from Appendix~\ref{sec:proof_theorem}.
The definitions in~\eqref{eq:pi} are still valid, provided that we substitute the matrices $Q$ and $R$ with $Q_0$ and $R_0$.
The lower bound $\ubar \theta_1 (\mathcal V_1)$ from~\eqref{eq:lower_bound_nominal} requires the addition of two terms:
\begin{subequations}
\label{eq:pi56}
\begin{align}
\pi_5 := & \frac{1}{4} \sum_{t=0}^{T-1} ( |\rho\at{t+1}{0} |^2 - |\rho\at{t}{1} |^2), \\
\pi_6 := & \frac{1}{4} \sum_{t=0}^{T-2} ( |\sigma\at{t+1}{0} |^2 - |\sigma\at{t}{1} |^2),
\end{align}
\end{subequations}
which do not cancel out anymore.
In the following propositions we analyze the sign of $\pi_5$ and $\pi_6$.

\begin{proposition}
\label{prop:nonnegativity_pi5}
A necessary and sufficient condition for $\pi_5$ to be nonnegative for all $\rho\at{1}{0}, \ldots, \rho\at{T}{0}$ is $\| Q_{t+1} Q_t^+ \| \leq 1$ for $t=0, \ldots, T-1$.
\end{proposition}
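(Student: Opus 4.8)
The plan is to reduce the claimed equivalence to a term-by-term statement about the spectral norm of each matrix $(Q_t')^+ Q_{t+1}'$, and then translate that condition into the stated form using the pseudoinverse–transpose identity. First I would observe that, by the definition $\rho\at{t}{1} := (Q_t')^+ Q_{t+1}' \rho\at{t+1}{0}$ in~\eqref{eq:rho_sigma}, the $t$-th summand of $\pi_5$, namely $\frac14\bigl( |\rho\at{t+1}{0}|^2 - |(Q_t')^+ Q_{t+1}' \rho\at{t+1}{0}|^2 \bigr)$, depends only on the single vector $\rho\at{t+1}{0}$, and as $t$ runs over $0,\ldots,T-1$ these vectors are exactly $\rho\at{1}{0},\ldots,\rho\at{T}{0}$ — one distinct free vector per summand. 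Hence $\pi_5 \ge 0$ for every choice of $\rho\at{1}{0},\ldots,\rho\at{T}{0}$ if and only if each summand is nonnegative for every choice of its own $\rho\at{t+1}{0}$: the ``if'' direction is immediate by summation, and the ``only if'' direction follows by fixing an index $t_0$, setting $\rho\at{s}{0}=0$ for all $s \ne t_0+1$ (which, by~\eqref{eq:rho_sigma}, also forces $\rho\at{t}{1}=0$ for all $t\ne t_0$, so those summands vanish), and letting $\rho\at{t_0+1}{0}$ vary freely.

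Second, for a fixed $t$ the condition ``$|\rho\at{t+1}{0}|^2 - |(Q_t')^+ Q_{t+1}' \rho\at{t+1}{0}|^2 \ge 0$ for all $\rho\at{t+1}{0}$'' is, directly from the definition of the maximum-singular-value norm $\|\cdot\|$, precisely $\|(Q_t')^+ Q_{t+1}'\| \le 1$.

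Third, I would invoke the standard Moore–Penrose identity $(Q_t')^+ = (Q_t^+)'$ to write $(Q_t')^+ Q_{t+1}' = (Q_t^+)' Q_{t+1}' = (Q_{t+1} Q_t^+)'$, and then use that a matrix and its transpose have the same maximum singular value, so that $\|(Q_t')^+ Q_{t+1}'\| = \|Q_{t+1} Q_t^+\|$. Chaining the three steps yields: $\pi_5 \ge 0$ for all $\rho\at{1}{0},\ldots,\rho\at{T}{0}$ $\iff$ $\|Q_{t+1}Q_t^+\|\le 1$ for $t=0,\ldots,T-1$, which is the claim. Along the way it is worth noting that Assumption~\ref{ass:row_space_cost} is what guarantees that $\rho\at{t}{1}$ as defined in~\eqref{eq:rho_sigma} actually solves $Q_t'\rho\at{t}{1} = Q_{t+1}'\rho\at{t+1}{0}$ (so that the shifted multipliers are dual-feasible), but it is not needed for the sign analysis above.

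I do not expect a real obstacle: the argument is essentially linear algebra. The two points requiring a little care are the decoupling in the first step — one must check that the $\rho\at{t+1}{0}$ genuinely range independently and that zeroing all but one of them leaves the other summands at zero, which holds because each $\rho\at{t}{1}$ is a linear image of $\rho\at{t+1}{0}$ alone — and the bookkeeping of transposes and pseudoinverses in the final step.
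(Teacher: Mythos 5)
Your proof is correct and follows essentially the same route as the paper's: both rest on the identity $\|(Q_t')^+Q_{t+1}'\| = \|Q_{t+1}Q_t^+\|$, the term-by-term operator-norm bound $|\rho\at{t}{1}| \le \|Q_{t+1}Q_t^+\|\,|\rho\at{t+1}{0}|$, and, for necessity, the choice of vectors attaining that bound with all other multipliers set to zero. Your version merely makes explicit the decoupling of the summands and the pseudoinverse--transpose bookkeeping that the paper leaves implicit.
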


\begin{proof}
By definition of the operator norm, we have $| \rho\at{t}{1} | \leq \| Q_{t+1} Q_t^+ \| | \rho\at{t+1}{0} |$.
Sufficiency follows from
\begin{align}
\label{eq:lower_bound_pi5}
\pi_5 \geq \frac{1}{4} \sum_{t=0}^{T-1} ( 1 - \| Q_{t+1} Q_t^+ \|^2 ) |\rho\at{t+1}{0}|^2.
\end{align}
For the other direction, note that equality in~\eqref{eq:lower_bound_pi5} can always be attained for some nonzero $\rho\at{1}{0}, \ldots, \rho\at{T}{0}$.
\end{proof}

\begin{proposition}
A necessary and sufficient condition for $\pi_6$ to be nonnegative for all $\sigma\at{1}{0}, \ldots, \sigma\at{T-1}{0}$ is $\| R_{t+1} R_t^+ \| \leq 1$ for $t=0, \ldots, T-2$.
\end{proposition}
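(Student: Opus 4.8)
The plan is to mimic the proof of Proposition~\ref{prop:nonnegativity_pi5} verbatim, with $R$ in place of $Q$ and the appropriate index ranges. The key observation is that the structure of $\pi_6$ is identical to that of $\pi_5$: it is a telescoping-style sum of differences $|\sigma\at{t+1}{0}|^2 - |\sigma\at{t}{1}|^2$, where, by the closed-form choice in~\eqref{eq:rho_sigma}, $\sigma\at{t}{1} = (R_t')^+ R_{t+1}' \sigma\at{t+1}{0}$. The transpose of this matrix is $R_{t+1} R_t^+$ (using $((R_t')^+)' = (R_t^+)'' $ adjusted appropriately, i.e. $((R_t')^+)' = (R_t')^{+\prime}$, which equals $R_t^+$ up to the standard pseudoinverse transpose identity $(R_t')^+ = (R_t^+)'$), so $|\sigma\at{t}{1}| = |(R_t')^+ R_{t+1}' \sigma\at{t+1}{0}| \leq \|(R_t')^+ R_{t+1}'\| \, |\sigma\at{t+1}{0}| = \|R_{t+1} R_t^+\| \, |\sigma\at{t+1}{0}|$.

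First I would establish sufficiency. Assuming $\|R_{t+1} R_t^+\| \leq 1$ for $t=0,\ldots,T-2$, the bound above gives $|\sigma\at{t}{1}|^2 \leq \|R_{t+1} R_t^+\|^2 |\sigma\at{t+1}{0}|^2$, hence term by term
\begin{align*}
\pi_6 \geq \frac{1}{4} \sum_{t=0}^{T-2} ( 1 - \| R_{t+1} R_t^+ \|^2 ) |\sigma\at{t+1}{0}|^2 \geq 0,
\end{align*}
since each factor $1 - \|R_{t+1} R_t^+\|^2$ is nonnegative. This is exactly the analogue of~\eqref{eq:lower_bound_pi5}.

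For necessity, I would argue by contraposition: if $\|R_{t+1} R_t^+\| > 1$ for some index $t$, then there exists a right singular vector $w$ of $R_{t+1} R_t^+$ with $|R_{t+1} R_t^+ w| = \|R_{t+1} R_t^+\| \, |w| > |w|$. One must be slightly careful here because $\sigma\at{t}{1}$ is defined in terms of $\sigma\at{t+1}{0}$, but the maps defining the $\sigma\at{t}{1}$ for consecutive $t$ are applied independently to independent inputs $\sigma\at{1}{0},\ldots,\sigma\at{T-1}{0}$, so one can choose $\sigma\at{t+1}{0} = w$ for the offending index and set all other $\sigma\at{s+1}{0} = 0$ for $s \neq t$. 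Then $\pi_6 = \tfrac14(|w|^2 - \|R_{t+1}R_t^+\|^2 |w|^2) < 0$, contradicting nonnegativity for all inputs. This shows the condition is necessary, and combined with the sufficiency direction completes the proof. The only mild subtlety — and the one place worth a sentence of care rather than just citing the $\pi_5$ argument — is confirming that equality in the term-by-term bound is simultaneously attainable by a legal choice of the free multipliers $\sigma\at{1}{0},\ldots,\sigma\at{T-1}{0}$; since these enter $\pi_6$ only through the independent blocks indexed by $t$, choosing all but one block to vanish is admissible and suffices. I expect no real obstacle; the whole argument is a routine transcription of the preceding proposition.
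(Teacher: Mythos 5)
Your proof is correct and follows essentially the same route as the paper, which simply declares this proposition ``analogous'' to the $\pi_5$ case and, for that case, uses exactly your operator-norm bound, the term-by-term lower bound, and attainability of equality for necessity. Your extra care in the necessity direction (selecting the maximizing singular vector for one index and zeroing the other independent blocks) just makes explicit what the paper leaves implicit.
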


\begin{proof}
Analogous to the proof of Proposition~\ref{prop:nonnegativity_pi5}.
\end{proof}

These propositions suggest that, when the magnitude of the weight matrices $Q_t$ and $R_t$ increases with the relative time $t$, the terms $\pi_5$ and $\pi_6$ might be negative.
On the other hand, when the weights decrease with $t$ (or when they are constant as in Theorem~\ref{th:propagation_bounds}), $\pi_5$ and $\pi_6$ tend to tighten the bounds $\ubar \theta_1 (\mathcal V_1)$.
Unfortunately, terminal penalties fall under the first scenario, but this was to be expected: since the final state of $\mathbf P_1 (\mathcal V_1)$ is likely to be smaller in magnitude than the one of $\mathbf P_0 (\mathcal V_0)$, when increasing $Q_T$, we expect the difference $\theta_1 (\mathcal V_1) - \theta_0 (\mathcal V_0)$ to decrease.

Finally, Corollary~\ref{cor:propagation_certificate_infeasibility} remains unchanged as well as inequality~\eqref{eq:E_0}.
In fact, if the multipliers we are given for $\tau=0$ certify infeasibility of $\mathbf P_0 (\mathcal V_0)$, then $\rho\at{t}{0} = 0$ and $\sigma\at{t}{0} = 0$ for all $t$.
 Using~\eqref{eq:rho_sigma}, we get $\rho\at{t}{1} = 0$ and $\sigma\at{t}{1} = 0$, and the additional terms $\pi_5$ and $\pi_6$ from~\eqref{eq:pi56} vanish.

\subsection{Constraint Set Varying with the Relative Time}
\label{sec:assumption_constraints}
We discuss time-dependent constraints under the following assumption, which holds, for example, in the common case of bounded constraint sets $\mathcal D_t$.

\begin{assumption}
\label{ass:row_space_constraints}
The conic hull of the rows of $[F_t \ G_t ]$ contains the conic hull of the rows of $[F_{t+1} \ G_{t+1}]$ for $t= 0, \ldots, T-2$.
\end{assumption}

Out of the three warm start components, the time dependency of $\mathcal D_t$ affects mainly the propagation of lower bounds.
The construction of the initial cover is unchanged and, for the upper-bound propagation, in order to preserve persistent feasibility, we only require $\mathcal D_{t+1} \subseteq \mathcal D_t$ for $t=0, \ldots, T-2$.
Once again we discuss the adaptations of Lemma~\ref{lemma:propagation_dual_feasible_solution} and of Theorem~\ref{th:propagation_bounds} and Corollary~\ref{cor:propagation_certificate_infeasibility} separately.

\subsubsection{Modifications to Lemma~\ref{lemma:propagation_dual_feasible_solution}}

In case of time-dependent constraints $\mathcal D_t$, the matrices $F$, $G$, and $h$ in the dual problem~\eqref{eq:dual} must have the subscript $t$.
This modification makes the arguments from Lemma~\ref{lemma:propagation_dual_feasible_solution} untrue.
The ideal fix would be to define $\mu\at{t}{1}$ through the Linear Program (LP)
\begin{subequations}
\label{eq:lp}
\begin{align}
\min \
& h_t' \mu\at{t}{1} \\
\mathrm{s.t.} \ &
\label{eq:lp_1}
F_t' \mu\at{t}{1} = F_{t+1}' \mu\at{t+1}{0}, \\
\label{eq:lp_2}
& G_t' \mu\at{t}{1} = G_{t+1}' \mu\at{t+1}{0}, \\
\label{eq:lp_3}
& \mu\at{t}{1} \geq 0,
\end{align}
\end{subequations}
 for $t=0, \ldots, T-2$.
These LPs would maximize the lower bounds $\ubar \theta_1 (\mathcal V_1)$ and, under Assumption~\ref{ass:row_space_constraints}, they would always be feasible.
However, keeping in mind that our ultimate goal is to bound the optimal value of a QP, this definition of $\mu\at{t}{1}$ is clearly impractical.
Nonetheless, finding a good approximate solution to these LPs turns out to be relatively simple.

Let $\mu\at{t}{1}^* (\mu\at{t+1}{0})$ be the parametric minimizer of problem~\eqref{eq:lp}.
We define $M\at{t}{1}$ as the matrix whose $i$th column is $\mu\at{t}{1}^*  (\epsilon_i)$, with $\epsilon_i$ $i$th element of the standard basis.
Note that $M\at{t}{1}$ can be easily computed offline by solving one LP per entry in $\mu\at{t+1}{0}$ (i.e., per facet of the polyhedron $\mathcal D_{t+1}$).

\begin{proposition}
\label{prop:feasible_sol_lp}
The multiplier $ \mu\at{t}{1} := M\at{t}{1} \mu\at{t+1}{0}$ is feasible for the LP~\eqref{eq:lp}.
\end{proposition}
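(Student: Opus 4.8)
The plan is to exploit the fact that, while the optimal solution of a parametric LP need not depend linearly on its right-hand side, \emph{feasibility} of a solution is preserved under nonnegative combinations of the parameter: the constraint map of~\eqref{eq:lp} is linear both in the variable $\mu\at{t}{1}$ and in the data $\mu\at{t+1}{0}$, and the parameter $\mu\at{t+1}{0}$ ranges over the nonnegative orthant, whose extreme rays are exactly the standard basis vectors $\epsilon_i$ used to build the columns of $M\at{t}{1}$. So the proof reduces to assembling $\mu\at{t}{1}$ as the matching nonnegative combination of those columns and checking the three constraint types of~\eqref{eq:lp} by linearity.

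First I would record the relevant properties of each column $\mu\at{t}{1}^*(\epsilon_i)$ of $M\at{t}{1}$. By Assumption~\ref{ass:row_space_constraints}, every row of $[F_{t+1}\ G_{t+1}]$ lies in the conic hull of the rows of $[F_t\ G_t]$; equivalently, for each $i$ there exists a nonnegative vector $\mu$ with $F_t'\mu = F_{t+1}'\epsilon_i$ and $G_t'\mu = G_{t+1}'\epsilon_i$, so problem~\eqref{eq:lp} with parameter $\epsilon_i$ is feasible. Since the origin belongs to $\mathcal D_t$ we have $h_t\geq 0$, and every feasible $\mu\at{t}{1}$ obeys $\mu\at{t}{1}\geq 0$ by~\eqref{eq:lp_3}, so the objective $h_t'\mu\at{t}{1}$ is bounded below by $0$; hence the parametric minimizer $\mu\at{t}{1}^*(\epsilon_i)$ is attained, and being feasible for~\eqref{eq:lp} with parameter $\epsilon_i$ it satisfies
\[
F_t'\mu\at{t}{1}^*(\epsilon_i) = F_{t+1}'\epsilon_i,\qquad
G_t'\mu\at{t}{1}^*(\epsilon_i) = G_{t+1}'\epsilon_i,\qquad
\mu\at{t}{1}^*(\epsilon_i)\geq 0 .
\]

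Next I would use that $\mu\at{t+1}{0}\geq 0$ by~\eqref{eq:dual_nonneg}, so that $\mu\at{t+1}{0} = \sum_i (\mu\at{t+1}{0})_i\,\epsilon_i$ with all coefficients $(\mu\at{t+1}{0})_i\geq 0$. Then
\[
\mu\at{t}{1} := M\at{t}{1}\,\mu\at{t+1}{0} = \sum_i (\mu\at{t+1}{0})_i\,\mu\at{t}{1}^*(\epsilon_i)
\]
is a nonnegative combination of nonnegative vectors, giving~\eqref{eq:lp_3}; and by linearity of $F_t'(\cdot)$ and $G_t'(\cdot)$ together with the displayed identities above,
\[
F_t'\mu\at{t}{1} = \sum_i (\mu\at{t+1}{0})_i\,F_{t+1}'\epsilon_i = F_{t+1}'\mu\at{t+1}{0},\qquad
G_t'\mu\at{t}{1} = \sum_i (\mu\at{t+1}{0})_i\,G_{t+1}'\epsilon_i = G_{t+1}'\mu\at{t+1}{0},
\]
which are~\eqref{eq:lp_1} and~\eqref{eq:lp_2}. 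Hence $\mu\at{t}{1}$ is feasible for~\eqref{eq:lp}.

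I do not expect a genuine obstacle here: the substantive content is the observation in the first paragraph, and the verification is essentially one line. The only point requiring care is the well-posedness of $M\at{t}{1}$, i.e., that each parametric minimizer is actually attained — this is exactly where Assumption~\ref{ass:row_space_constraints} (feasibility of~\eqref{eq:lp} for the basis right-hand sides) and the inclusion of the origin in $\mathcal D_t$ (boundedness below of the objective) are invoked. Once that is settled, feasibility of the nonnegative combination is immediate, and note that no optimality of $\mu\at{t}{1}$ for~\eqref{eq:lp} is claimed — only feasibility, which is all the proposition asserts and all that the lower-bound construction needs.
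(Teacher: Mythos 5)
Your proof is correct and follows essentially the same route as the paper's: nonnegativity of $\mu\at{t}{1}$ follows because it is a nonnegative combination of the nonnegative columns $\mu\at{t}{1}^*(\epsilon_i)$, and the equality constraints~\eqref{eq:lp_1}--\eqref{eq:lp_2} follow by linearity from $F_t' M\at{t}{1} = F_{t+1}'$ and $G_t' M\at{t}{1} = G_{t+1}'$. Your extra care about well-posedness of $M\at{t}{1}$ (feasibility of~\eqref{eq:lp} under Assumption~\ref{ass:row_space_constraints} and attainment of the minimum) is a welcome addition that the paper relegates to the surrounding discussion and a footnote, but it does not change the argument.
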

\begin{proof}
Since $\mu\at{t+1}{0}$ and $\mu\at{t}{1}^* (\epsilon_i)$ are nonnegative, so is $ \mu\at{t}{1}$.
By feasibility of $\mu\at{t}{1}^*$, we have $F_t' M\at{t}{1} = F_{t+1}'$ and $G_t' M\at{t}{1} = G_{t+1}'$, which imply conditions~\eqref{eq:lp_1} and~\eqref{eq:lp_2}.
\end{proof}

Coming back to the primal side, the LP~\eqref{eq:lp} has a clear geometrical meaning.
Its dual reads
\begin{subequations}
\label{eq:lp_primal}
\begin{align}
\max \
& \mu\at{t+1}{0}' (F_{t+1} x\at{t}{1} + G_{t+1} u\at{t}{1}) \\
\mathrm{s.t.} \ &
F_t x\at{t}{1} + G_t u\at{t}{1} \leq h_t,
\end{align}
\end{subequations}
where the optimization variables are the state $x\at{t}{1}$ and the input $u\at{t}{1}$.
For $\mu\at{t+1}{0} = \epsilon_i$, the LP~\eqref{eq:lp_primal} is illustrated in Figure~\ref{fig:containment_problem}, and allows to determine whether the polyhedron $\mathcal D_t$ lies within the halfspace delimited by the $i$th facet of $\mathcal D_{t+1}$.
In words, this optimization finds the point in~$\mathcal D_t$ which violates most the $i$th inequality defining $\mathcal D_{t+1}$.
Containment is certified in case the maximum of this problem, which corresponds to $h_t' \mu\at{t}{1}^* (\epsilon_i)$ by strong duality, is lower or equal to the $i$th entry of $h_{t+1}$.\footnote{
In the context of problem~\eqref{eq:lp_primal}, Assumption~\ref{ass:row_space_constraints} has a simple geometrical interpretation as well.
It ensures that the normal to each facet of $\mathcal D_{t+1}$ is not a ray of the polyhedron $\mathcal D_t$, i.e., it ensures boundedness of~\eqref{eq:lp_primal}.
Moreover, note that feasibility of~\eqref{eq:lp_primal} (and hence boundedness of~\eqref{eq:lp}) is also ensured, since we assumed the polyhedra $\mathcal D_t$ to be nonempty for all $t$.
}

If the polyhedron $\mathcal D_t$ is entirely contained in $\mathcal D_{t+1}$, the above observation applies for all $i$, and we have $h_t' M\at{t}{1} \leq h_{t+1}'$.
This inequality can in turn be used to bound the cost of $ \mu\at{t}{1}$ from Proposition~\ref{prop:feasible_sol_lp}, leading to
\begin{align}
\label{eq:upper_bound_beta}
h_t' \mu\at{t}{1} =
h_t' M\at{t}{1} \mu\at{t+1}{0} \leq
h_{t+1}' \mu\at{t+1}{0}.
\end{align}
We will take advantage of this bound in the revision of Theorem~\ref{th:propagation_bounds}.

\begin{figure}[t]
\centering
\includegraphics[width=.32\textwidth]{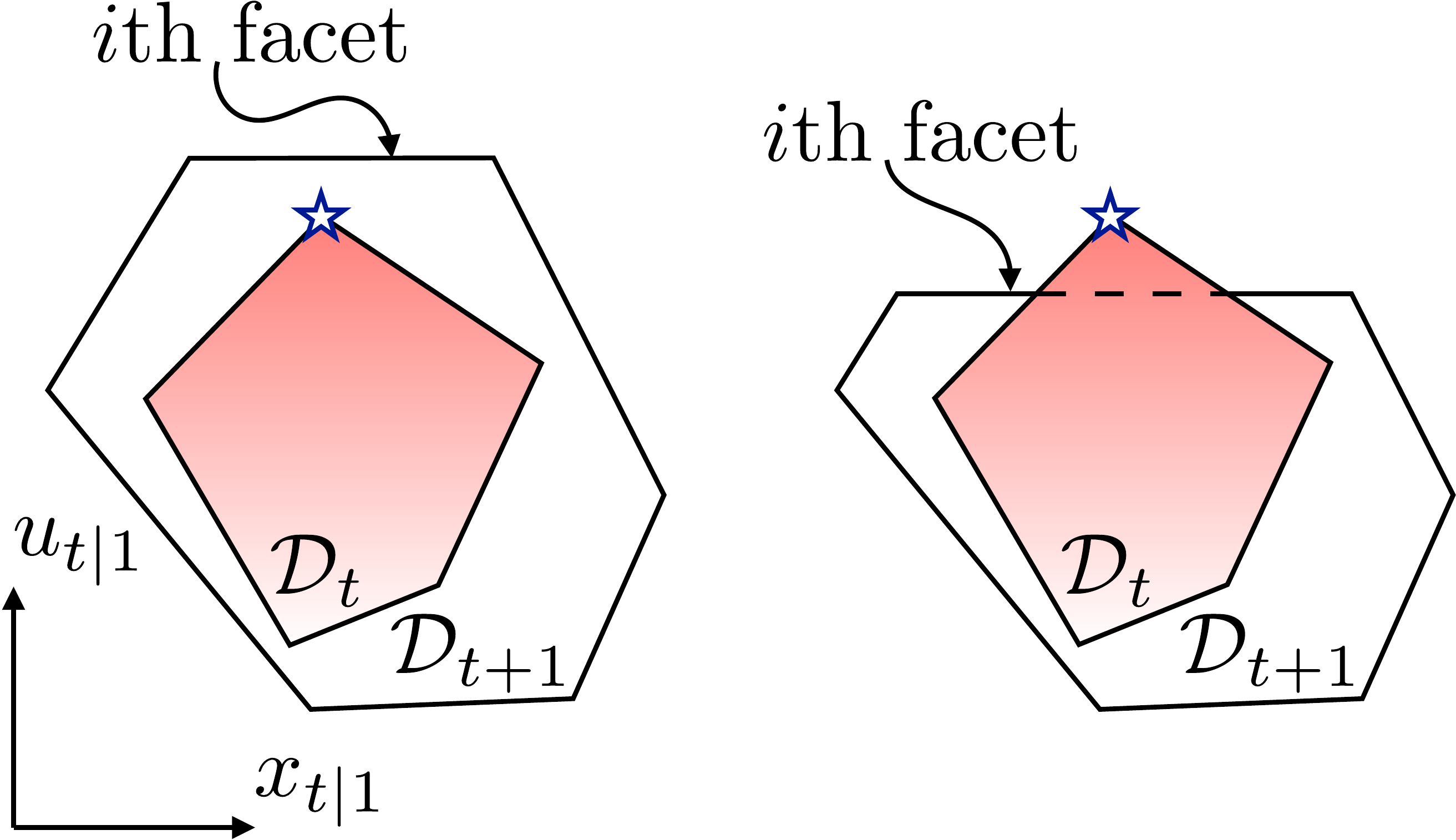}
\caption{
Geometrical interpretation of the LP~\eqref{eq:lp_primal} as a containment problem.
The color gradient in $\mathcal D_t$ symbolizes the objective function.
For $\mu\at{t+1}{0} = \epsilon_i$, problem~\eqref{eq:lp_primal} returns the point  (blue star) in $\mathcal D_t$ which violates most the $i$th constraint (facet) of $\mathcal D_{t+1}$.
Depending on whether the polyhedron $\mathcal D_t$ lies inside the $i$th facet of $\mathcal D_{t+1}$, the optimal value of~\eqref{eq:lp}, and of its dual~\eqref{eq:lp_primal}, is lower (left image) or greater (right image) than the $i$th entry of $h_{t+1}$.
}
\label{fig:containment_problem}
\end{figure}

\subsubsection{Modifications to Theorem~\ref{th:propagation_bounds} and Corollary~\ref{cor:propagation_certificate_infeasibility}}

When the sets $\mathcal D_t$ vary with the relative time $t$, the matrices $F$, $G$, and $h$ in the definition of $\pi_3$ must be substituted with $F_0$, $G_0$, and $h_0$.
The lower bounds $\ubar \theta_1 (\mathcal V_1)$ from Theorem~\ref{th:propagation_bounds} require the additional term
\begin{align}
\pi_7 :=
\sum_{t=0}^{T-2} (h'_{t+1} \mu\at{t+1}{0} - h'_t \mu\at{t}{1}).
\end{align}

The observation made in~\eqref{eq:upper_bound_beta} suggests the following sufficient condition for the nonnegativity of $\pi_7$.

\begin{proposition}
\label{prop:pi_7}
Let $\mu\at{t}{1}$ be defined as in Proposition~\ref{prop:feasible_sol_lp}.
If $\mathcal D_t \subseteq \mathcal D_{t+1}$ for $t= 0, \ldots, T-2$, then $\pi_7 \geq 0$.
Additionally, if $\mathcal D_t = \mathcal D_{t+1}$, we have $\pi_7=0$.
\end{proposition}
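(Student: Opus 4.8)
The plan is to reduce the whole statement to the termwise bound already recorded in~\eqref{eq:upper_bound_beta}. Recall that $\pi_7 = \sum_{t=0}^{T-2} \big( h'_{t+1} \mu\at{t+1}{0} - h'_t \mu\at{t}{1} \big)$ and that, by Proposition~\ref{prop:feasible_sol_lp}, the multiplier in play is $\mu\at{t}{1} = M\at{t}{1} \mu\at{t+1}{0}$, where the $i$th column of $M\at{t}{1}$ is the minimizer $\mu\at{t}{1}^*(\epsilon_i)$ of the LP~\eqref{eq:lp}. So it is enough to show that $h'_t \mu\at{t}{1} \le h'_{t+1} \mu\at{t+1}{0}$ for every $t$, with equality when $\mathcal D_t = \mathcal D_{t+1}$, and then sum over $t$.

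First I would establish $h'_t M\at{t}{1} \le h'_{t+1}$ componentwise under the hypothesis $\mathcal D_t \subseteq \mathcal D_{t+1}$. By strong LP duality between~\eqref{eq:lp} and its dual~\eqref{eq:lp_primal} (both feasible and bounded, cf.\ the footnote after~\eqref{eq:lp_primal}), the $i$th entry of $h'_t M\at{t}{1}$ equals $h'_t \mu\at{t}{1}^*(\epsilon_i) = \max_{(x,u) \in \mathcal D_t} \big( F_{t+1,i} x + G_{t+1,i} u \big)$, where $F_{t+1,i}$ and $G_{t+1,i}$ are the $i$th rows; this is exactly the largest violation over $\mathcal D_t$ of the $i$th facet inequality of $\mathcal D_{t+1}$ (the containment picture of Figure~\ref{fig:containment_problem}). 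Since $\mathcal D_t \subseteq \mathcal D_{t+1}$ and every point of $\mathcal D_{t+1}$ satisfies $F_{t+1,i} x + G_{t+1,i} u \le h_{t+1,i}$, this maximum is at most $h_{t+1,i}$, i.e.\ $h'_t M\at{t}{1} \le h'_{t+1}$ — which is precisely~\eqref{eq:upper_bound_beta}. Multiplying on the right by the nonnegative vector $\mu\at{t+1}{0}$ preserves the inequality, giving $h'_t \mu\at{t}{1} \le h'_{t+1} \mu\at{t+1}{0}$; summing over $t = 0, \ldots, T-2$ yields $\pi_7 \ge 0$.

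For the equality claim, when $\mathcal D_t = \mathcal D_{t+1}$ I would use the (without loss of generality irredundant) representation in which $F_t = F_{t+1}$, $G_t = G_{t+1}$, $h_t = h_{t+1}$. Then $\mu = \epsilon_i$ is feasible for the column-wise LP defining $M\at{t}{1}$, with objective $h_{t+1,i}$; and by the strong-duality formula above its optimal value also equals $\max_{(x,u)\in\mathcal D_{t+1}} \big( F_{t+1,i} x + G_{t+1,i} u \big)$, which coincides with $h_{t+1,i}$ because an irredundant facet inequality is active somewhere on $\mathcal D_{t+1}$. Hence $M\at{t}{1} = I$ is optimal, $\mu\at{t}{1} = \mu\at{t+1}{0}$, and every summand of $\pi_7$ vanishes.

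I expect the subtle point to be the equality case, not the inequality: \eqref{eq:upper_bound_beta} is strict in general, the gap of the $t$th summand being $\sum_i (\mu\at{t+1}{0})_i \big( h_{t+1,i} - \max_{(x,u)\in\mathcal D_{t+1}} ( F_{t+1,i} x + G_{t+1,i} u ) \big)$, which closes exactly only when $\mu\at{t+1}{0}$ is supported on active facets of $\mathcal D_{t+1}$ — guaranteed by an irredundant $H$-representation (or, alternatively, by complementary slackness when $\mu\at{t+1}{0}$ is a genuine KKT multiplier rather than merely dual feasible). I would therefore state explicitly that the implication "$\mathcal D_t = \mathcal D_{t+1} \Rightarrow \pi_7 = 0$" is read with a minimal $H$-representation, which is the natural convention and costs nothing. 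The rest is the bookkeeping of chaining the per-$t$ inequalities, which is routine.
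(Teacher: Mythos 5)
Your proof follows essentially the same route as the paper's: the nonnegativity part is exactly the chain $h_t' M\at{t}{1} \leq h_{t+1}'$ from~\eqref{eq:upper_bound_beta} multiplied by the nonnegative $\mu\at{t+1}{0}$ and summed, and the equality part rests on the optimal value of~\eqref{eq:lp} at $\epsilon_i$ equalling $h_{t+1,i}$ when $\mathcal D_t = \mathcal D_{t+1}$. Your added caveat about reading the equality case with an irredundant $H$-representation is a legitimate refinement: the paper's claim that the LP optimum coincides with $h_{t+1,i}$ tacitly assumes each defining inequality of $\mathcal D_{t+1}$ is active somewhere on the set, which is exactly the condition you make explicit.
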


\begin{proof}
The nonnegativity condition follows from~\eqref{eq:upper_bound_beta}.
In case $\mathcal D_t = \mathcal D_{t+1}$, the optimal value of~\eqref{eq:lp} for $\mu\at{t+1}{0} = \epsilon_i$ coincides with the $i$th entry of $h_{t+1}$, for all $i$.
Therefore, we have $h'_t M\at{t}{1} = h_{t+1}'$, the relation in~\eqref{eq:upper_bound_beta} holds with the equality, and $\pi_7$ vanishes.
\end{proof}

Even if the condition $\mathcal D_t \subseteq \mathcal D_{t+1}$ is frequently violated in practice (terminal constraints, for example, lead to $\mathcal D_{T-2} \supset \mathcal D_{T-1}$),  Proposition~\ref{prop:pi_7} shows that the definition of $\mu\at{t}{1}$ from Proposition~\ref{prop:feasible_sol_lp} is a natural generalization of the shifting process from Lemma~\ref{lemma:propagation_dual_feasible_solution}.
In fact, when the constraint sets $\mathcal D_t$ are actually constant with the relative time $t$, the two approaches lead to the same lower bounds $\ubar \theta_1 (\mathcal V_1)$.

With this choice of the multipliers $\mu\at{t}{1}$, the statement of Corollary~\ref{cor:propagation_certificate_infeasibility} is still valid, provided that we add $\pi_7$ to the right-hand side of~\eqref{eq:E_0}.
Furthermore, if  $\mathcal D_t \subseteq \mathcal D_{t+1}$ for all $t$, the origin $e_0=0$ is still guaranteed to verify condition~\eqref{eq:E_0}.
On the contrary, if the constraint sets shrink with the relative time $t$, it might be the case that an infeasible subproblem at time $\tau=0$ has a feasible descendant at $\tau=1$, even in the nominal case $e_0=0$.

\section{Numerical Study}
\label{sec:numerical_study}
We test the proposed warm-start algorithm on a numerical example.
We consider a linearized version of the cart-pole system depicted in Figure~\ref{fig:cart_pole}: the goal is to regulate the cart in the center of the two walls with the pole in the upright position.
To accomplish this task, we can apply a force directly on the cart and exploit contact forces that arise when the tip of the pole collides with the walls.
This regulation problem has been used to benchmark control-through-contact algorithms in~\cite{marcucci2017approximate,deits2019lvis}, and its moderate size allows an in-depth statistical analysis of the performance of our warm-start technique.

\begin{figure}[t]
\centering
{\includegraphics[width = .7\columnwidth]{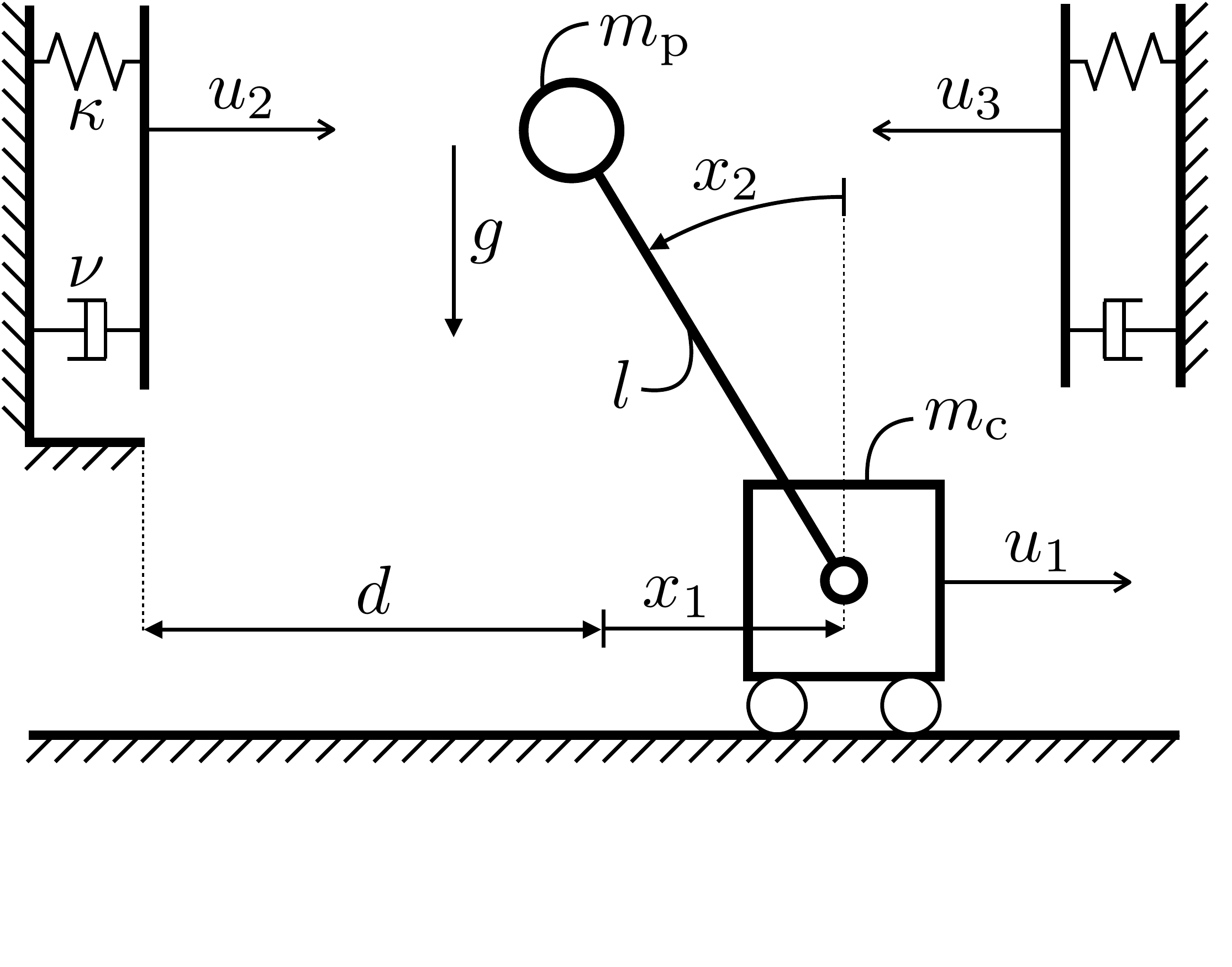}}
\caption{Benchmark problem: regulation of the cart-pole system through a force applied to the cart and exploiting contacts with the soft walls.}
\label{fig:cart_pole}
\end{figure}

\subsection{Mixed Logical Dynamical Model}

We let $x_1$ be the position of the cart, $x_2$ the angle of the pole, and we denote with $x_3$ and $x_4$ their time derivatives.
The force applied to the cart is $u_1$, whereas the contact forces with the left and right walls are $u_2$ and $u_3$, respectively.
The continuous-time equations of motion, linearized around the nominal angle of the pole $x_2=0$, are
\begin{subequations}
\label{eq:cart_pole_dynamics}
\begin{align}
\dot x_1 &= x_3, \\
\dot x_2 &= x_4, \\
\dot x_3 &= \frac{m_{\mathrm p} g}{m_{\mathrm c}} x_2 + \frac{1}{m_{\mathrm c}} u_1, \\
\dot x_4 &= \frac{(m_{\mathrm c} + m_{\mathrm p}) g}{m_{\mathrm c} l} x_2 + \frac{1}{m_{\mathrm c} l} u_1 - \frac{1}{m_{\mathrm p} l} u_2 + \frac{1}{m_{\mathrm p} l} u_3,
\end{align}
\end{subequations}
with $m_{\mathrm c}=m_{\mathrm p}=1$ mass of the cart and the pole, $g=10$ gravity acceleration, and $l=1$ length of the pole.
Dynamics are discretized using the explicit Euler method with time step $h=0.05$.
The force applied to the cart and the system state are subject to the constrains $\ubar u_1 \leq u_1 \leq \bar u_1$ and $\ubar x \leq x \leq \bar x$, where $\bar u_1 = - \ubar u_1 = 1$, $\bar x = - \ubar x = ( d, \pi/10, 1, 1)$ , and $d=0.5$ is half of the distance between the walls (see Figure~\ref{fig:cart_pole}).

Impacts between the pole and the walls are modeled with soft contacts: $\kappa = 100$ is the stiffness and $\nu =10$  is the damping in the contact model.
The position of the tip of the pole with respect to the walls (positive in case of penetration), after linearization, is $\delta_2 := - x_1 + l x_2 - d$ for the left wall, and $\delta_3 := x_1 - lx_2 - d$ for the right wall.
For $i\in\{2,3\}$, contact forces are required to obey the constitutive model
\begin{align}
\label{eq:contact_model}
u_i
=
\begin{cases}
\kappa \delta_i + \nu \dot \delta_i & \text{if} \ \delta_i \geq 0 \ \text{and} \ \kappa \delta_i + \nu \dot \delta_i \geq 0, \\
0 & \text{otherwise}.
\end{cases}
\end{align}
These conditions ensure that contact forces are nonzero only in case of penetration, and are always nonnegative (i.e., the walls never pull on the pole).
To model these piecewise-linear functions, we introduce two binary indicators per contact
\begin{align}
\label{eq:binaries_contacts}
u_{i+2}
:=
\begin{cases}
1 & \text{if} \ \delta_i \geq 0, \\
0 & \text{otherwise},
\end{cases}
\quad
u_{i+4}
:=
\begin{cases}
1 & \text{if} \ \kappa \delta_i + \nu \dot \delta_i \geq 0, \\
0 & \text{otherwise}.
\end{cases}
\end{align}
By means of the state limits, we can derive explicit bounds $\ubar \delta_i, \bar \delta_i$ on the penetrations, as well as on their time derivatives $\ubar {\dot \delta}_i, \bar {\dot \delta}_i$.
These, in turn, are used to bound the contact forces with $\ubar u_i := \kappa \ubar \delta_i + \nu \ubar {\dot \delta}_i$ and $\bar u_i := \kappa \bar \delta_i + \nu \bar {\dot \delta}_i$.
Conditions~\eqref{eq:binaries_contacts} are then enforced through the linear inequalities
\begin{subequations}
\begin{align}
\ubar \delta_i (1 - u_{i+2})
\leq
\delta_i
\leq
\bar \delta_i u_{i+2},
\\
\ubar u_i (1 - u_{i+4})
\leq
\kappa \delta_i + \nu \dot \delta_i
\leq
\bar u_i u_{i+4}.
\end{align}
\end{subequations}
With a similar logic, we can express~\eqref{eq:contact_model} through the conditions: $u_i \geq 0$, $u_i \leq \bar u_i u_{i+2}$, $u_i \leq \bar u_i u_{i+4}$, and
\begin{align}
\nu \bar {\dot \delta}_i (u_{i+2} - 1)
\leq
u_i - \kappa \delta_i - \nu \dot \delta_i
\leq
\ubar u_i (u_{i+4} - 1).
\end{align}

Considering the binary inputs introduced as contact indicators, we have an MLD system with $n_x=4$ states, $n_u=3$ continuous inputs, and $m_u=4$ binary inputs.

\subsection{Model Predictive Controller}

We synthesize an MPC controller featuring both a terminal penalty and a terminal constraint (see Section~\ref{sec:terminal}).
For the stage cost, we let $Q_t = I$ and $R_t =(1, 0, 0, 0, 0, 0, 0)'$ for $t=0, \ldots, T-1$.
Using these weights and setting $u_2 = u_3 = 0$, the terminal penalty $Q_T$ is obtained by solving the Discrete Algebraic Riccati Equation (DARE) for the discretized version of~\eqref{eq:cart_pole_dynamics}.
The terminal set is the maximal positive-invariant set for system~\eqref{eq:cart_pole_dynamics} after discretization, in closed loop with the controller from the DARE and subject to the input and state bounds, and the nonpenetration constraints $\delta_i \leq 0$ for $i=2,3$.\footnote{
This set is known to be a polyhedron~\cite{gilbert1991linear} and, in this case, it has a finite number of facets.
See also~\cite[Definition~10.8]{borrelli2017predictive}.
}
With a time horizon $T=20$, the resulting MIQP has 224 optimization variables (144 continuous and 80 binaries) and 906 linear constraints (84 equalities and 822 inequalities).

\subsection{Branch-and-Bound Implementation}
\label{sec:bb_impl}

The results we present in this section are obtained with a very basic \texttt{python} implementation of B\&B, which follows to the letter the description given in Section~\ref{sec:bb_algorithm}.
This has the advantage of simplifying result interpretation, since it leaves out of the analysis the many heuristics that come into play when using advanced B\&B solvers.
Nevertheless, we underline that MILP reoptimization techniques similar in nature to the one we propose have been successfully integrated, e.g., with the state-of-the-art solver \texttt{SCIP}~\cite{achterberg2009scip} in~\cite{gamrath2015reoptimization}.
The latter work shows how advanced B\&B routines (such as presolving, domain propagation, and strong branching) can be handled when reusing the B\&B frontier from the previous solves.

In our implementation, we adopt a best-first search: among all the sets which verify condition~\eqref{eq:convergence_bb}, we pick the set $\mathcal V^i \in \mathscr V^i$ for which $\ubar \theta (\mathcal V^i)$ is minimum.
We perform the branching step in chronological order: each time this routine is called, we select the relaxed variables $v\at{t}{\tau}$ for which $t$ is lowest and, among these, we split the one with the smallest index.
This frequently-used heuristic, leveraging the control limits, quickly rules out excessively fast mode transitions~\cite{bemporad1999efficient,frick2015embedded}.
Since we include model errors in the upcoming analysis, the recursive-feasibility arguments from Section~\ref{sec:upper_bound} do not apply, and we let $\bar \theta^0 = \infty$ in all the B\&B solves.
The optimality tolerance $\varepsilon$ in~\eqref{eq:convergence_bb} is set to zero.
QPs are solved using the dual simplex method provided by the commercial solver \texttt{Gurobi~9.0.1}, with default options.
Root-node subproblems are warm started as discussed in Sections~\ref{sec:lower_bounds} and~\ref{sec:terminal} while, deeper in the B\&B tree, subproblems are warm started using the parent active set.

\subsection{Statistical Analysis}

We test the warm-start algorithm in a ``push-recovery'' task where, to simulate a push towards the right wall, we set the initial state to $x_0 := (0,0,1,0)$.
Assuming a perfect model, Figure~\ref{fig:simulation} depicts the optimal control sequence, and the related trajectory of the tip of the pole, for a closed-loop simulation of $50$ steps.
The system exploits the (soft) right wall to decelerate and come back to the center of the track, whereas the control requires a significant saturation to accomplish the task.

\begin{figure}[t]
\centering
{\includegraphics[width = 1.\columnwidth]{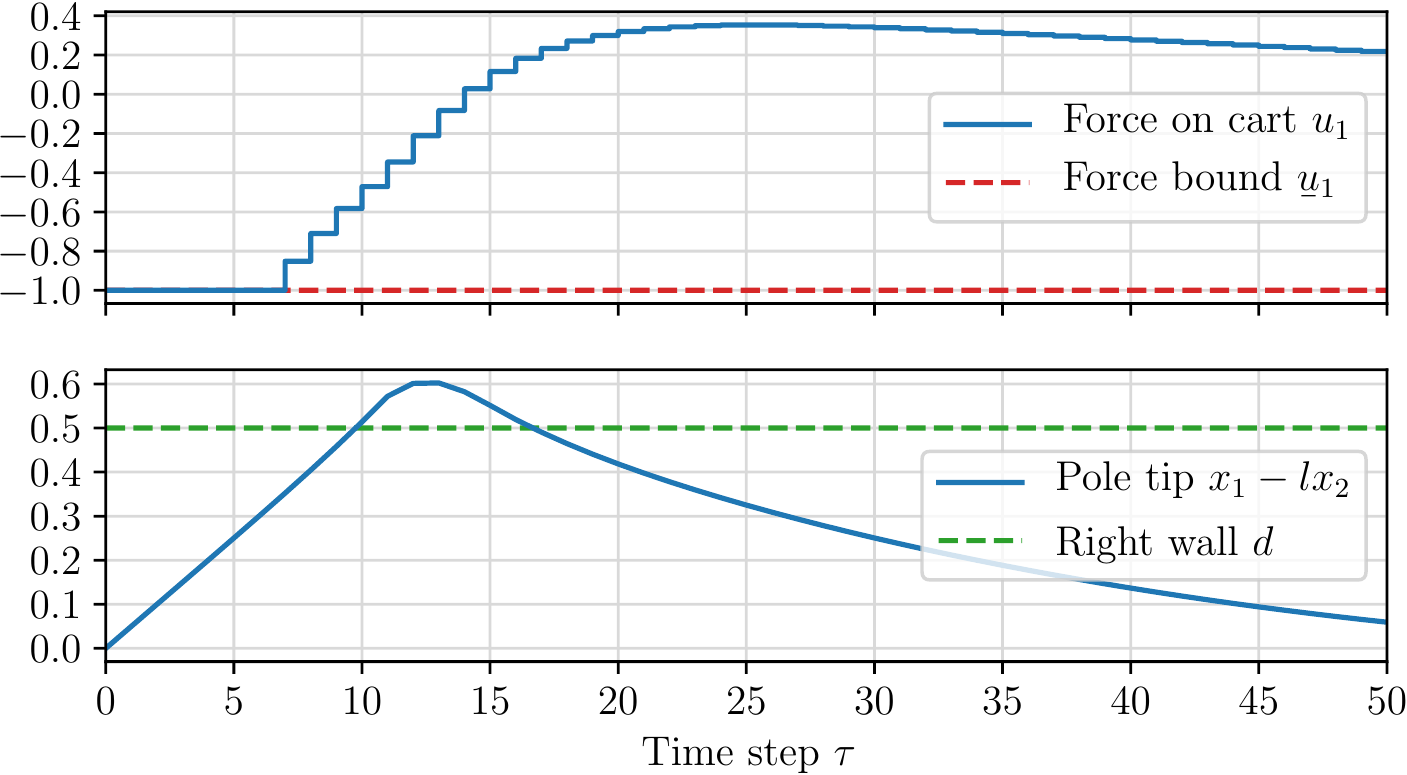}}
\caption{
Optimal closed-loop trajectories for the cart-pole system recovering from a push towards the right wall.
Top: input force applied to the cart.
Bottom: horizontal position of the tip of the pole.
The penetration of the pole in the right wall is allowed by the soft contact model.
}
\label{fig:simulation}
\end{figure}

We study this task in presence of random model errors.
At each time $\tau$ we draw the $i$th entry of the error $e_\tau := x_{\tau+1} - A x_\tau - B u_\tau$ from the normal distribution with zero mean and standard deviation $\sigma_i = c \bar x_i$, with $\bar x_i$ upper bound on the $i$th state.
For $c = 10^{-3}, 3 \cdot 10^{-3}, 10^{-2}$, we simulate 100 closed-loop trajectories (for which model errors do not drive the system to an infeasible state) and we monitor the number of QPs solved within B\&B and the MIQP solution times.

\subsubsection{Number of Branch-and-Bound Subproblems}

We start by comparing the number of QPs solved within B\&B in case of warm and cold start (i.e., when each MIQP is solved from scratch).
Furthermore, to show that the amount of information propagated by the warm starts does not diverge as more and more MIQPs are solved, we analyze the cardinality of the initial covers $\mathscr V_\tau^0$.
For these three quantities, Figure~\ref{fig:nodes} reports the minimum, maximum, 80th and 90th percentile of the values registered in the 100 trials.
Additionally, Figure~\ref{fig:nodes} shows the results obtained in the nominal case, $e_\tau = 0$ for all $\tau$.

For small model errors, the warm-start approach almost always requires an order of magnitude less QPs to solve problem~\eqref{eq:miqp} to global optimality.
Note that, for $c=3 \cdot 10^{-3}$, the curve of the 90th percentile almost coincides with the one of the minima.
For $c= 10^{-2}$ model errors become very significant: we often have mismatches in the cart position greater than $10^{-2}$ which, if multiplied by $\kappa$, lead to variations of the contact forces, with respect to the planned values, greater than the input limit $\bar u_1 = 1$.
Despite that, $80\%$ of the times the proposed technique reduces the number of QP solves by an order of magnitude.
Moreover, even in the worst case, our warm-start algorithm outperforms the cold-start approach.\footnote{
We report that, trying to further increase the error standard deviation by setting $c=3 \cdot 10^{-2}$, the model errors drive the system to an infeasible state 98 times on 100 trials, generating statistics of little value.
}

The asymptotic behavior discussed in Section~\ref{sec:asymptotic_analysis} is also found in Figure~\ref{fig:nodes}.
To solve a problem with $m$ binaries, in fact, the minimum number of B\&B subproblems is $2 m + 1$ (the optimal branch plus the necessary leaves) and, in case of warm start, the best-case complexity of a one-step look-ahead problem ($2 m_u + 1= 9$ subproblems) is frequently approached.

The amount of information contained in the warm starts, measured as the cardinality of $ \mathscr V_\tau^0$, is very stable both in time $\tau$ and as a function of the error standard deviation $\sigma_i$.

\begin{figure}[t]
\centering
{\includegraphics[width = .9\columnwidth]{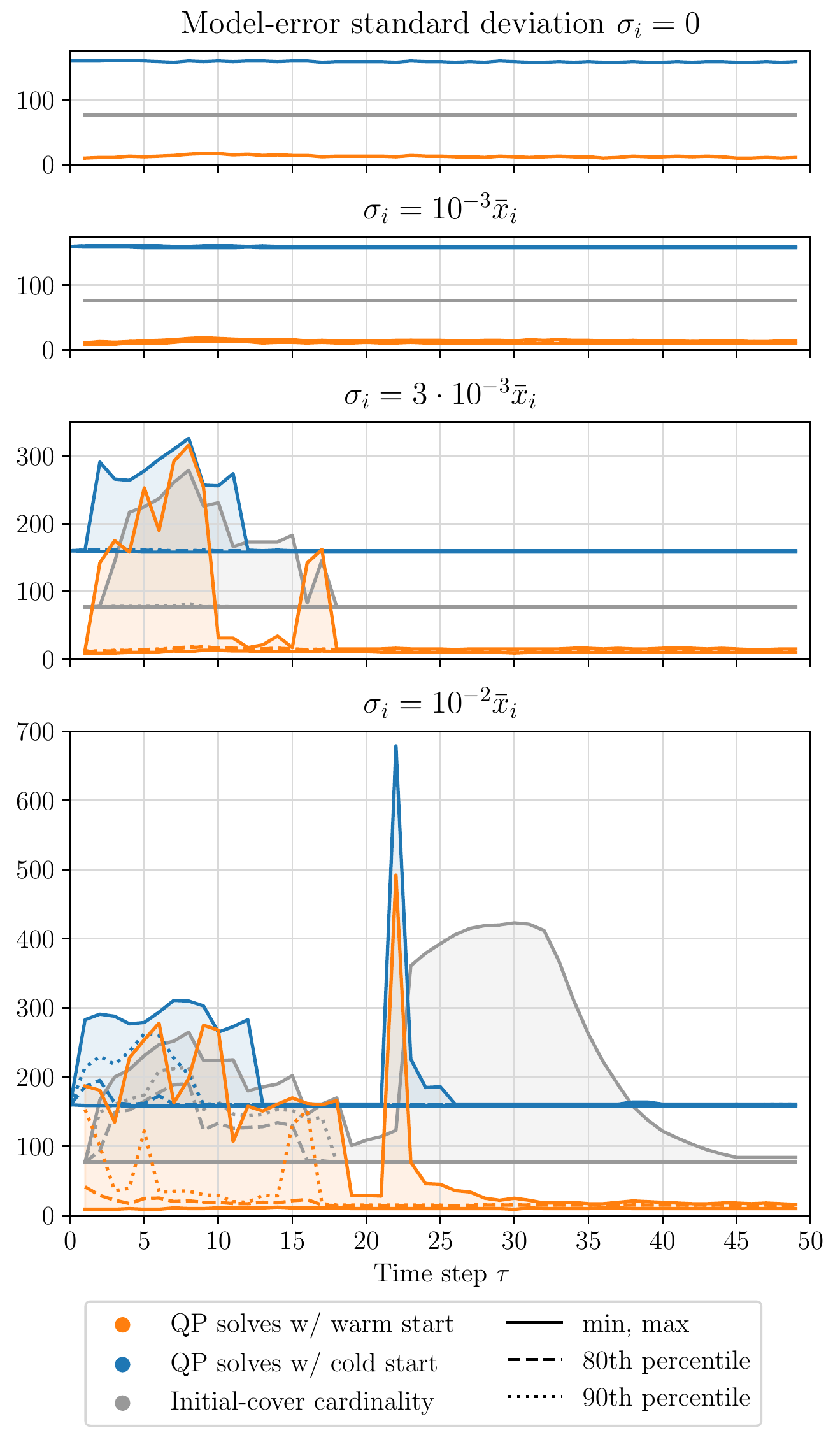}}
\caption{
Statistical analysis of the number of B\&B subproblems necessary to solve the MIQP~\eqref{eq:miqp} and regulate the cart-pole system to the origin.
Orange and blue lines: number of QP solves with warm and cold start, respectively, as functions of time and for different standard deviations of the model error $e_\tau$.
Gray lines: amount of information propagated between time steps by the warm start (represented by the cardinality of the initial cover $\mathscr V_\tau^0$) as a function of time and the error standard deviation.
Solid, dashed, and dotted lines: minimum and maximum, 80th percentile, and 90th percentile, respectively, of the above quantities over 100 feasible trial trajectories.
}
\label{fig:nodes}
\end{figure}

\subsubsection{Computation Times}

In Figure~\ref{fig:times} we illustrate the computation times of the statistical analysis.\footnote{Computations are performed on a machine with processor 2.4 GHz 8-Core Intel Core i9 and memory 64 GB 2667 MHz DDR4.}
We compare three alternatives to solve problem~\eqref{eq:miqp}: the proposed warm-start algorithm, its cold-started counterpart, and the state-of-the-art solver \texttt{Gurobi~9.0.1}.
Together with these, we report the time delay in the solution of~\eqref{eq:miqp} due to the construction of the warm start.

For our implementation of B\&B, both warm and cold started, in Figure~\ref{fig:times} we report only the time spent solving QPs (retrieved via the \texttt{Runtime} attribute of the \texttt{Gurobi} QP model).
This because almost the totality of the remaining time is spent within the \texttt{gurobipy} interface, doing array manipulations in \texttt{numpy}, or within \texttt{python} list comprehensions.
Currently, QP solves take around 15\% of the overall B\&B function-call time.
However, with a more mature implementation (e.g., in \texttt{C++}) we expect to reduce this overhead by two orders of magnitude, making it one order of magnitude smaller than the QP solve times.
For the warm-start construction times, we separate computations that can be done in the background of the time step $h$ (such as the assembly of the initial cover), and computations that require the knowledge of the current state $x_\tau$.
In  Figure~\ref{fig:times} we report only the second: despite the unoptimized \texttt{python} implementation, in this analysis, the first take just a few milliseconds (median $3$~ms, maximum $24$~ms), which is smaller than the MIQP solve times and, hence, of any reasonable sampling time $h$.

We let \texttt{Gurobi} run with default options and, to maximize its performance, we use the shifted optimal solution from the previous time step as initial guess for the binary variables (this is used by the \texttt{Gurobi} heuristics to attempt to build an initial binary assignment).
More precisely, we set the initial guess $v\at{t}{\tau+1} = v\at{t+1}{\tau}^*$ for $t = 0, \ldots, T-2$ and $v\at{T-1}{\tau+1} = 0$ (where $0$ is the equilibrium value of the binary inputs).

The comparison between warm and cold start is in line with the one above: the great majority of the times the warm-started B\&B is an order of magnitude faster and, even in the worst case, it is not slower than the cold-started one.
When warm started, our implementation frequently approaches the solution times of \texttt{Gurobi}, which is widely recognised to be the baseline solver for hybrid MPC~\cite{bemporad2015solving,naik2017embedded,stellato2018embedded,bemporad2018numerically,hespanhol2019structure}.
This is a promising result, especially considering that our implementation is single threaded, whereas we let \texttt{Gurobi} run on 16 threads.
Additionally, \texttt{Gurobi} makes a heavy use of presolve techniques and heuristics that our implementation does not feature.\footnote{
With a custom QP solver, further computation savings could be brought by linear-algebra routines specialized for the MPC sparsity pattern~\cite{rao1998application,frison2013efficient}.
}
Finally, we highlight that the delays in the MIQP solves due to the construction of the warm starts are negligible since they require, in the worst case, $10^{-3}$~s.

\begin{figure}[t]
\centering
{\includegraphics[width = 1.\columnwidth]{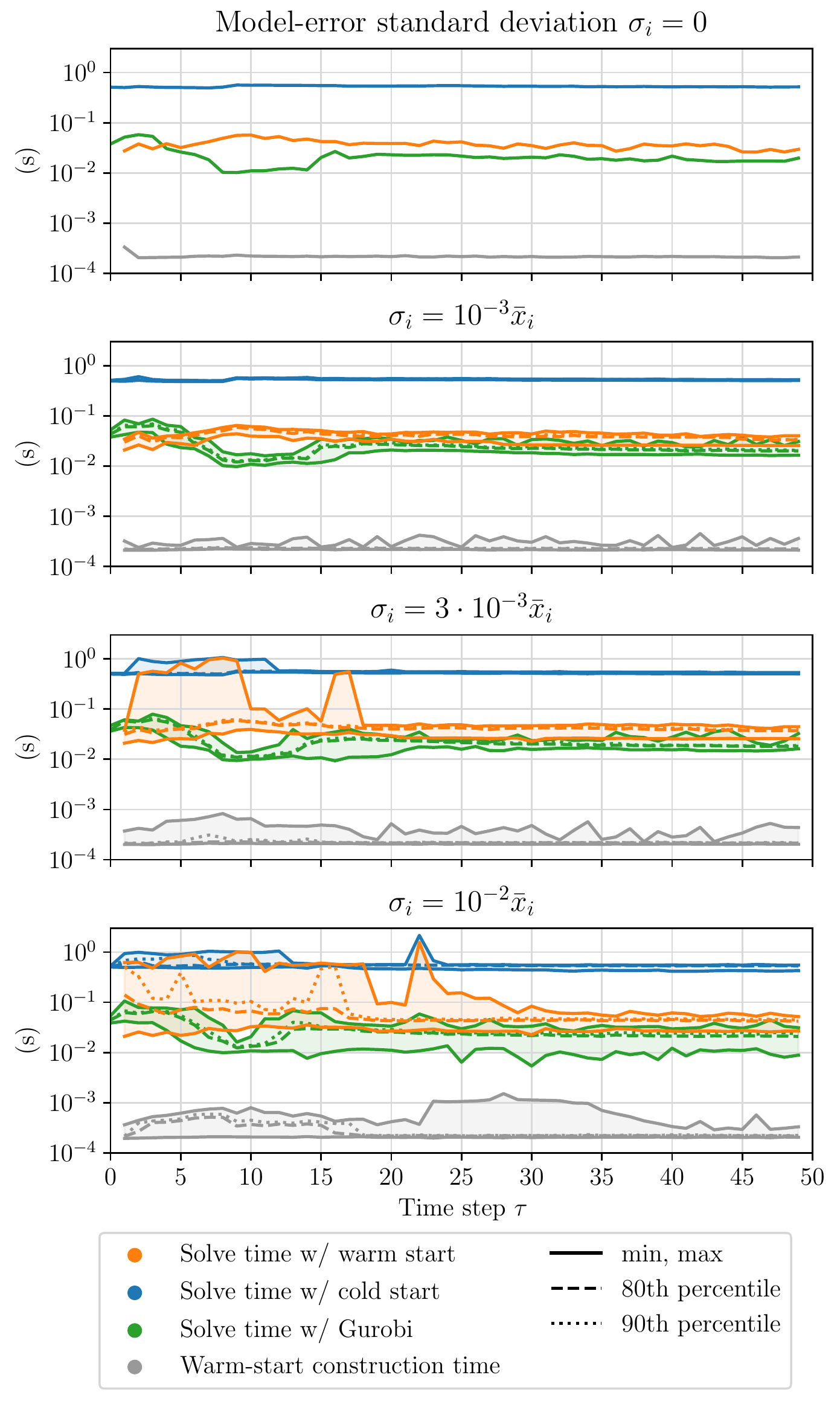}}
\caption{
Statistical analysis of the solution times for problem~\eqref{eq:miqp}.
Orange, blue, and green lines: solve times with warm start, cold start, and \texttt{Gurobi}, respectively, as functions of time and for different standard deviations of the model error $e_\tau$.
(\texttt{Gurobi} runs with default options and is allowed to use the shifted solution from the previous time step as initial guess.)
Gray lines: delay in the solution of the MIQPs due to the construction of the warm start as a function of time and the error standard deviation.
Solid, dashed, and dotted lines: minimum and maximum, 80th percentile, and 90th percentile, respectively, of the above quantities over 100 feasible trial trajectories.
}
\label{fig:times}
\end{figure}

\section{Conclusions}
\label{sec:conclusions}
The solution of a hybrid MPC problem via B\&B generally amounts to a very large number of convex optimizations.
In this paper we have shown how, leveraging the receding-horizon structure of the problem, computations performed at one time step can be efficiently reused to warm start subsequent solves, greatly reducing the number of B\&B subproblems.

A warm start for a B\&B solver should include three elements: a collection of sets which covers the search space, a lower bound on the problem objective in each of these sets, and an upper bound on the problem optimal value.
We have shown how the first can be generated by a simple shift in time of the B\&B frontier from the previous solve.
For the second we have used duality: dual solutions of the B\&B frontier, if properly shifted, lead to lower bounds for the leaves of the new problem, even in presence of arbitrary model errors.
Finally, we have illustrated how standard persistent-feasibility arguments can be applied to synthesize the third element.
All these three ingredients take a negligible time to be computed.

We have thoroughly analyzed the tightness of the bounds we derived, revealing a connection between them and the decrease rate of the MPC cost to go.
This has led to the observation that, as the problem horizon grows to infinity, the complexity of the hybrid MPC problem tends to that of a one-step look-ahead problem.
In this case, the warm-started B\&B needs to reoptimize only the final stage of the control problem.

Theoretical results have been validated by a thorough statistical analysis.
The latter has demonstrated that our method greatly outperforms the standard approach of solving each optimization problem from scratch.

\appendices

\section{Extensions and Additional Applications}
\label{sec:extensions}
We collect here extensions and additional applications of the proposed algorithm.
In Appendix~\ref{sec:binary_mld} we extend the results to the case in which the MLD system to be controlled has binary states.
Appendix~\ref{sec:time_varying_mld} deals with time-varying MLD systems.
Finally, in Appendix~\ref{sec:variable_horizon}, we analyze the case in which the prediction horizon $T$ is included among the decision variables of the MPC problem~\eqref{eq:miqp}.

\subsection{MLD System with Binary States}
\label{sec:binary_mld}
As discussed in Section~\ref{sec:mld}, auxiliary inputs can be used to constrain state components to assume binary values.
However, this approach might be suboptimal from the viewpoint of computational efficiency: in this appendix, we show how binary states can be explicitly included in the analysis.

We consider the state vector $x_\tau \in \mathbb R^{n_x} \times \{0,1\}^{m_x}$, and we denote by $y_\tau \in \{0,1\}^{m_x}$ its binary entries.
We define the selection matrix $Y$ so that $y_\tau = Y x_\tau$.
The error vector $e_\tau$ takes now values in $\mathbb R^{n_x} \times \{-1,0,1\}^{m_x}$.
In these settings, problem~\eqref{eq:miqp} must include the additional constraint $Y x\at{t}{\tau} \in \{0,1\}^{m_x}$ for $t = 0, \ldots,T$, and B\&B needs to find a cover of $\{0,1\}^{(T + 1) m_x + T m_u}$.
In the convex relaxation of $\mathbf P_\tau$, we have the additional constraints $\ubar y\at{t}{\tau} \leq Y x\at{t}{\tau} \leq \bar y\at{t}{\tau}$ for $t=0, \ldots, T$, where $\ubar y\at{t}{\tau}, \bar y\at{t}{\tau} \in \{0,1\}^{m_x}$ and $\ubar y\at{t}{\tau} \leq \bar y\at{t}{\tau}$.
Let $\ubar \xi\at{t}{\tau}$ and $\bar \xi\at{t}{\tau}$ be the nonnegative multipliers associated with these constraints.
The dual objective~\eqref{eq:dual_objective} must now include the additional linear term $\sum_{t=0}^T  (\ubar y\at{t}{\tau}' \ubar \xi\at{t}{\tau} - \bar y\at{t}{\tau}' \bar \xi\at{t}{\tau})$.
Similarly, the terms $Y' (\bar \xi\at{t}{\tau} - \ubar \xi\at{t}{\tau})$ and $Y' (\bar \xi\at{T}{\tau} - \ubar \xi\at{T}{\tau})$ must be added to the left-hand sides of~\eqref{eq:dual_x_t} and~\eqref{eq:dual_x_T}, respectively.
The logic behind the shifting procedure from Section~\ref{sec:initial_cover} is the same: the first step requires the additional check $\ubar y\at{0}{0} \leq y_0 \leq \bar y\at{0}{0}$, the second generates the additional bounds $(\ubar y\at{1}{0}, \ldots, \ubar y\at{T}{0} , 0, \ldots, 0)$, $(\bar y\at{1}{0}, \ldots, \bar y\at{T}{0}, 1, \ldots, 1)$.
In Lemma~\ref{lemma:propagation_dual_feasible_solution}, we define $(\ubar \xi\at{t}{1}, \bar \xi\at{t}{1}) := (\ubar \xi\at{t+1}{0}, \bar \xi\at{t+1}{0})$ for $t=0, \ldots, T-1$, and $(\ubar \xi\at{T}{1}, \bar \xi\at{T}{1}) := 0$.
In Theorem~\ref{th:propagation_bounds}, we add to $\pi_3$ the nonnegative term $(y_0 - \ubar y\at{0}{0})' \ubar \xi\at{0}{0} + (\bar y\at{0}{0} - y_0)' \bar \xi\at{0}{0}$, whereas Corollary~\ref{cor:propagation_certificate_infeasibility} remains unchanged.
In Section~\ref{sec:upper_bound}, we just need the additional condition $Y x \in \{0,1\}^{m_x}$ for control invariance.
The asymptotic analysis from Section~\ref{sec:asymptotic_analysis} and the extensions in Sections~\ref{sec:terminal} do not require any modification.

\subsection{Time-Varying MLD System}
\label{sec:time_varying_mld}
All the results presented in this paper can be immediately generalized to the case of a time-varying MLD system
\begin{align}
x_{\tau+1} = A_\tau x_\tau + B_\tau u_\tau + e_\tau, \quad (x_\tau, u_\tau) \in \mathcal D_\tau.
\end{align}
Dynamics of this kind appear, e.g., in trajectory tracking or local stabilization of limit cycles for hybrid nonlinear systems.
These are two common problems in robotics, where state-of-the-art methods cannot reason yet about online modifications of the preplanned switching sequence~\cite{sanfelice2007hybrid,manchester2011stable,manchester2011transverse,farshidian2017sequential}.

In this case, the dynamics~\eqref{eq:miqp_dynamics} becomes $x\at{t+1}{\tau} = A_{\tau+t} x\at{t}{\tau} + B_{\tau+t} u\at{t}{\tau}$, the constraint~\eqref{eq:miqp_constraints} reads $(x\at{t}{\tau}, u\at{t}{\tau}) \in \mathcal D_{\tau+t}$, and the weight matrices $Q_{\tau+t}$ and $R_{\tau+t}$ can vary with the \emph{absolute time} $\tau+t$.
Note that this time dependency is easier to handle than the one discussed in Section~\ref{sec:terminal}.
There, problem data depend on the relative time $t$ and they can disagree after a shift of the MPC time window (e.g., the matrix $Q_{t+1}$ in problem $\mathbf P_\tau$ might be different from $Q_t$ in $\mathbf P_{\tau+1}$).
Here, problem data still match after a window shift, and procedures like the one from Lemma~\ref{lemma:propagation_dual_feasible_solution} do not break.

The dual problem~\eqref{eq:dual} does not change structure, it only requires a suitable modification of the subscripts of the matrices in it.
The shifting procedure in Section~\ref{sec:initial_cover} does not need any adjustment.
The results from Section~\ref{sec:lower_bounds} are also still valid, provided that we add the subscript $0$ to the matrices $Q$, $R$, $F$, $G$, and $h$ in the statement of Theorem~\ref{th:propagation_bounds}.
Also the persistent-feasibility argument from Section~\ref{sec:upper_bound} extends to the time-varying case: we now have a sequence of control-invariant sets $\mathcal X_\tau$ and, for all $x$ in $\mathcal X_\tau$, there must exists a $u \in \mathbb R^{n_u} \times \{0,1\}^{m_u}$ such that $(x,u) \in \mathcal D_{\tau+T}$ and $A_{\tau+T} x + B_{\tau+T} u \in \mathcal X_{\tau+1}$.
The asymptotic considerations from Section~\ref{sec:asymptotic_analysis} need only a couple of adjustments: $Q$ and $R$ in Lemma~\ref{lemma:lower_bound_theta1} must be substituted with $Q_0$ and $R_0$, and the invariance argument in Theorem~\ref{th:recursive_cover} must be revised as just shown with persistent feasibility.

The extension of the results from Section~\ref{sec:terminal} is slightly more involved.
The weight matrices and the constraint sets depend now on $\tau$ and $t$ independently: we use the notation $Q\at{t}{\tau}$, $R\at{t}{\tau}$, $\mathcal D\at{t}{\tau}$ for the data of problem $\mathbf P_\tau$ at time $t$.
Note that, e.g., a terminal penalty implies $Q\at{T}{\tau} \neq Q\at{T-1}{\tau+1}$.
Assumption~\ref{ass:row_space_cost} must now require that the row space of $Q\at{t}{\tau+1}$ and $R\at{t}{\tau+1}$ contains the one of $Q\at{t+1}{\tau}$ and $R\at{t+1}{\tau}$, respectively.
Then, the generalization presented in Section~\ref{sec:assumption_cost} also applies to the time-varying case if, e.g., instead of the matrices $Q_t$ and $Q_{t+1}$, we consider $Q\at{t}{1}$ and $Q\at{t+1}{0}$.
Analogous changes are required for Assumption~\ref{ass:row_space_constraints} and the results from Section~\ref{sec:assumption_constraints}.

\subsection{Variable-Horizon MPC}
\label{sec:variable_horizon}
In many applications, it is desirable to include the time horizon $T$ among the decision variables of the MPC problem.
Besides avoiding the tricky compromise of fixing a value for $T$, this guarantees persistent feasibility and minimizes the discrepancy between open- and closed-loop trajectories~\cite{michalska1993robust,scokaert1998min}.
Additionally, it extends the scope of MPC beyond regulation problems~\cite{richards2006robust}.

A common problem statement for \emph{variable-horizon} MPC is to find a control sequence that drives the system state to a target set, despite disturbances and minimizing a weighted sum of the reach time and the control effort~\cite[Section~2.4]{richards2006robust,shekhar2012variable}.
In~\cite{richards2006robust} this problem has been transcribed in mixed-integer form by parameterizing the reach time with binary variables $b\at{t}{\tau}$ ($b\at{t}{\tau}=1$ when the target set is reached, and $b\at{t}{\tau}=0$ otherwise).
Because of coupling constraints between the binaries of different time steps (e.g., $\sum_{t=0}^T b\at{t}{\tau} = 1$), the formulation in~\cite{richards2006robust} does not have the form of an optimal control problem of MLD systems.
However, equivalent binary parameterizations that enjoy this property can be easily found, resulting in a problem of the form we considered in Section~\ref{sec:terminal} and allowing the use of the proposed warm-start technique.

\section{Lagrangian Dual of the Convex Relaxation of~\eqref{eq:miqp}}
\label{sec:dual_qp}
In this appendix we derive the dual $\mathbf D  (\mathcal V)$  of the convex relaxation $\mathbf P (\mathcal V)$ of problem~\eqref{eq:miqp}, with $\mathcal V$ defined in~\eqref{eq:interval_tau}.
We describe this derivation since the nonstrict convexity of $\mathbf P (\mathcal V)$ requires some special care.

We start by introducing the auxiliary primal variables
\begin{subequations}
\label{eq:auxiliary_primal}
\begin{align}
& z\at{t}{\tau} := Q x\at{t}{\tau}, && t = 0, \ldots, T,
\\
& w\at{t}{\tau} := R u\at{t}{\tau}, && t=0, \ldots, T-1.
\end{align}
\end{subequations}
After substituting these in the primal objective~\eqref{eq:miqp_objective}, we define the Lagrangian function
\begin{align}
\nonumber
l := & \
\sum_{t = 0}^{T}
[| z\at{t}{\tau} |^2 + \rho\at{t}{\tau}'(Q x\at{t}{\tau} - z\at{t}{\tau})]
\\  \nonumber &
+
\sum_{t = 0}^{T-1}
[| w\at{t}{\tau} |^2 + \sigma\at{t}{\tau}'(R u\at{t}{\tau} - w\at{t}{\tau})]
+ \lambda\at{0}{\tau}' (x\at{0}{\tau} - x_\tau)
\\  \nonumber &
+
\sum_{t = 0}^{T-1}
\lambda\at{t+1}{\tau}' (x\at{t+1}{\tau} - A x\at{t}{\tau} - B u\at{t}{\tau})
\\  \nonumber &
+
\sum_{t = 0}^{T-1}
\mu\at{t}{\tau}' (F x\at{t}{\tau} + G u\at{t}{\tau} - h)
\\  \label{eq:lagrangian} &
+
\sum_{t = 0}^{T-1}
[
\ubar \nu\at{t}{\tau}' (\ubar v\at{t}{\tau} - V u\at{t}{\tau})
+
\bar \nu\at{t}{\tau}' (V u\at{t}{\tau} - \bar v\at{t}{\tau})
],
\end{align}
with $\{\lambda\at{t}{\tau}, \rho\at{t}{\tau}\}_{t=0}^T$ and $\{\mu\at{t}{\tau}, \ubar \nu\at{t}{\tau}, \bar \nu\at{t}{\tau}, \sigma\at{t}{\tau}\}_{t=0}^{T-1}$ Lagrange multipliers of appropriate dimensions.
For any fixed value of the multipliers such that the nonnegativity condition~\eqref{eq:dual_nonneg} holds, the infimum of the Lagrangian with respect to the primal variables yields a lower bound on the optimal value $\theta (\mathcal V)$.
We seek the multipliers for which this lower bound is maximum.

For the outer maximization to be feasible (i.e., have an optimal value greater than $-\infty$),
we must require the inner minimization to be bounded.
Since the Lagrangian is a convex quadratic function of the primal variables, its infimum, if finite, verifies the stationarity conditions
$\nabla_{x\at{t}{\tau}} l = 0$ (corresponding to~\eqref{eq:dual_x_t} and~\eqref{eq:dual_x_T}),
$\nabla_{u\at{t}{\tau}} l = 0$ (corresponding to~\eqref{eq:dual_u_t}), and
\begin{subequations}
\label{eq:stationarity_zwt}
\begin{align}
& \nabla_{z\at{t}{\tau}} l =  2 z\at{t}{\tau} - \rho\at{t}{\tau} = 0, && t = 0, \ldots, T,
\\
& \nabla_{w\at{t}{\tau}} l =  2 w\at{t}{\tau} - \sigma\at{t}{\tau} = 0, && t = 0, \ldots, T-1.
\end{align}
\end{subequations}
Substituting the stationarity conditions in the Lagrangian~\eqref{eq:lagrangian}, we obtain its minimum value~\eqref{eq:dual_objective}.
The dual problem $\mathbf D  (\mathcal V)$ consists then in the maximization of~\eqref{eq:dual_objective}, subject to the stationarity conditions and the nonnegativity of the multipliers $\{\mu\at{t}{\tau}, \ubar \nu\at{t}{\tau}, \bar \nu\at{t}{\tau}\}_{t=0}^{T-1}$.
Conditions~\eqref{eq:stationarity_zwt} are removed from the dual problem because they are redundant.

\section{Proof of Theorem~\ref{th:propagation_bounds}}
\label{sec:proof_theorem}
In this appendix we derive the lower bound~\eqref{eq:lower_bound_nominal}.
Given a feasible solution for $\mathbf D_0 (\mathcal V_0)$ we define a set of feasible multipliers for $\mathbf D_1 (\mathcal V_1)$ as in Lemma~\ref{lemma:propagation_dual_feasible_solution}.
Substituting these into the objective~\eqref{eq:dual_objective} of the latter problem, we get the lower bound
\begin{multline}
\ubar \theta_1 (\mathcal V_1) :=
-
\sum_{t = 0}^{T-1}
| \rho\at{t+1}{0} / 2 |^2
-
\sum_{t = 0}^{T-2}
| \sigma\at{t+1}{0} / 2 |^2
-
x_1' \lambda\at{1}{0}
\\
-
\sum_{t = 0}^{T-2}
(
h' \mu\at{t+1}{0} 
+
\bar v\at{t+1}{0}' \bar \nu\at{t+1}{0} 
-
\ubar v\at{t+1}{0}' \ubar \nu\at{t+1}{0} 
).
\end{multline}
The cost of the candidate solution can be restated as $\ubar \theta_1 (\mathcal V_1) = \ubar \theta_0 (\mathcal V_0) + \sum_{i=1}^3 \omega_i$, where
\begin{subequations}
\begin{align}
\omega_1 := \ &
x_0' \lambda\at{0}{0}
- x_1' \lambda\at{1}{0},
\\
\omega_2 := \ &
h' \mu\at{0}{0}
+ \bar v\at{0}{0}' \bar \nu\at{0}{0}
- \ubar v\at{0}{0}' \ubar \nu\at{0}{0},
\\
\omega_3 := \ &
| \rho\at{0}{0} / 2|^2
+
| \sigma\at{0}{0} / 2|^2.
\end{align}
\end{subequations}
Enforcing the dynamics, we get
\begin{align}
\omega_1 = x_0' \lambda\at{0}{0} - (A x_0 + B u_0 + e_0)' \lambda\at{1}{0},
\end{align}
and using~\eqref{eq:dual_x_t} and~\eqref{eq:dual_u_t} for $t=\tau=0$, we have
\begin{multline}
\omega_1 = 
- x_0' (Q' \rho\at{0}{0} + F' \mu\at{0}{0} )
\\
- u_0' [
R'  \sigma\at{0}{0} + G'  \mu\at{0}{0} + V' (\bar \nu\at{0}{0} - \ubar \nu\at{0}{0})
] +   \pi_4.
\end{multline} 
Adding $\omega_2$, we obtain
\begin{align}
\omega_1 + \omega_2 =
- x_0'  Q' \rho\at{0}{0} 
- u_0' R' \sigma\at{0}{0}
+ \pi_3 + \pi_4.
\end{align}
Finally, we add $\omega_3$:
\begin{multline}
\sum_{i=1}^3 \omega_i =
| \rho\at{0}{0} / 2|^2 
-
x_0' Q' \rho\at{0}{0}
+
| \sigma\at{0}{0} /2 |^2 
-
u_0' R' \sigma\at{0}{0}
\\
+ \pi_3 + \pi_4.
\end{multline}
Using the identities
\begin{subequations}
\begin{align}
| \rho\at{0}{0} / 2|^2  - x_0' Q' \rho\at{0}{0} &= | \rho\at{0}{0} / 2 - Q x_0 |^2 - | Q x_0 |^2, \\
| \sigma\at{0}{0} / 2|^2  - u_0' R' \sigma\at{0}{0} &= | \sigma\at{0}{0} / 2 - R u_0 |^2 - | R_0 u_0 |^2,
\end{align}
\end{subequations}
and recalling the definition of $\pi_1$ and $\pi_2$, we obtain
$
\sum_{i=1}^3 \omega_i = \sum_{i=1}^4 \pi_i,
$
and hence~\eqref{eq:lower_bound_nominal}.

\section*{Acknowledgment}
This research was supported by the Grass Instruments Company and the Department of the Navy, Office of Naval Research, Award No. N00014-18-1-2210. Any opinions, findings, and conclusions or recommendations expressed in this material are those of the authors and do not necessarily reflect the views of the Office of Naval Research.

The authors thank Twan Koolen for the many helpful comments on the original manuscript.

\ifCLASSOPTIONcaptionsoff
  \newpage
\fi


\bibliographystyle{IEEEtran}
\bibliography{bibliography}

%

\begin{IEEEbiography}[{\includegraphics[width=1in,height=1.25in,clip,keepaspectratio]{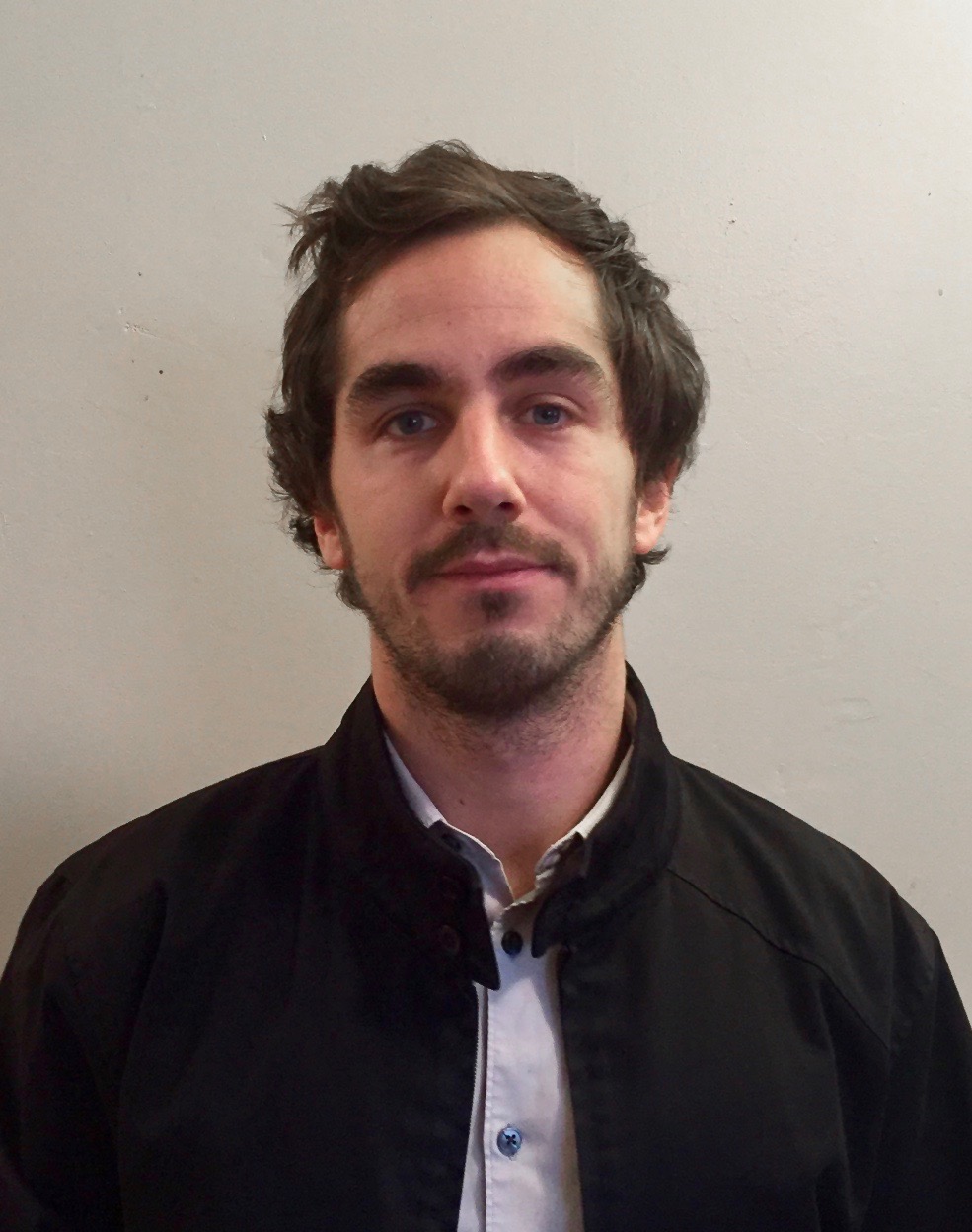}}]{Tobia Marcucci}
graduated cum laude in Mechanical Engineering from the University of Pisa in 2015.
From 2015 to 2017 he was Ph.D. student at the Research Center “E. Piaggio”, University of Pisa, and the Istituto Italiano di Tecnologia (IIT).
Since 2017 he is at the Computer Science and Artificial Intelligence Laboratory (CSAIL), MIT, to continue his Ph.D. studies.
His main research interests are robotics, control theory, and numerical optimization.
\end{IEEEbiography}
\begin{IEEEbiography}[{\includegraphics[width=1in,height=1.25in,clip,keepaspectratio]{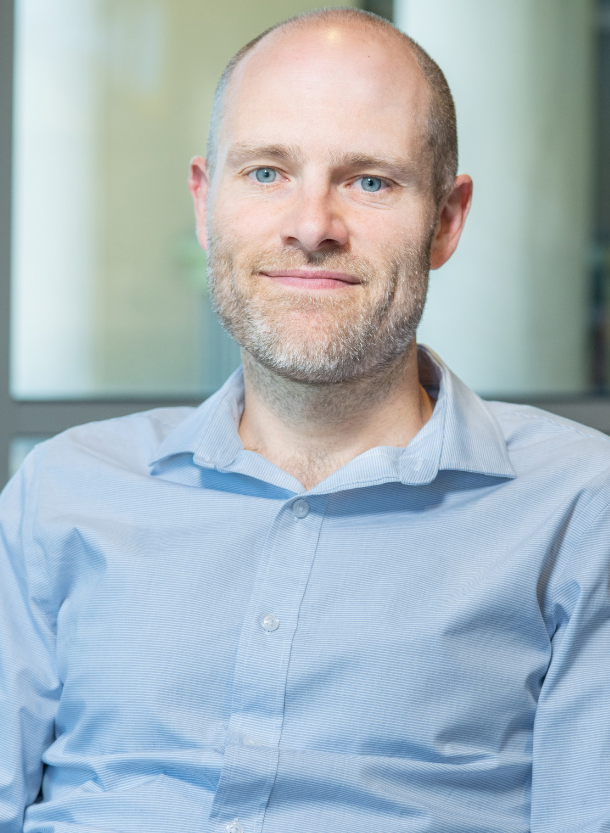}}]{Russ Tedrake}
is the Toyota Professor of Electrical Engineering and Computer Science, Aeronautics and Astronautics, and Mechanical Engineering at MIT, the Director of the Center for Robotics at CSAIL, and the leader of Team MIT's entry in the DARPA Robotics Challenge.
Russ is also the Vice President of Robotics Research at the Toyota Research Institute.
He is a recipient of the NSF CAREER Award, the MIT Jerome Saltzer Award for undergraduate teaching, the DARPA Young Faculty Award in Mathematics, the 2012 Ruth and Joel Spira Teaching Award, and was named a Microsoft Research New Faculty Fellow.
Russ received his B.S.E. in Computer Engineering from the University of Michigan, Ann Arbor, in 1999, and his Ph.D. in Electrical Engineering and Computer Science from MIT in 2004, working with Sebastian Seung.
After graduation, he joined the MIT Brain and Cognitive Sciences Department as a Postdoctoral Associate.
During his education, he has also spent time at Microsoft, Microsoft Research, and the Santa Fe Institute.
\end{IEEEbiography}




\end{document}